\newcommand{\vpt}{v\mbox{\rm -Patterns}}
\newcommand{\ept}{e\mbox{\rm -Patterns}}
\newcommand{\ppt}{P_3\mbox{\rm -Patterns}}
\newcommand{\kpt}{K_3\mbox{\rm -Patterns}}
\newcommand{\Dpt}{\Delta\mbox{\rm -Patterns}}
\newcommand{\adj}{adj(\{p_1,p_2,p_3\})}
\newcommand{\acc}{\mbox{\rm acc-Motif}}
\newcounter{numerodoc}
\newcommand{\nextdoc}{\addtocounter{numerodoc}{1}\thenumerodoc}
\newcommand{\inset}{\delta^-}
\newcommand{\outset}{\delta^+}
\newcommand{\inoutset}{\delta^*}
\newcommand{\vizinhos}{\delta}
\newcommand{\mystar}{G^S}
\newcommand{\starv}{{V^S}}
\newcommand{\stare}{{E^S}}
\newcommand{\starc}{{v_c}}
\newcommand{\seta}{\mathcal{A}}
\newcommand{\setb}{\mathcal{B}}
\newcommand{\setc}{\mathcal{C}}
\newcommand{\sett}{\mathcal{T}}
\newcommand{\size}[1]{n_{#1}}
\newcommand{\m}[2]{{m_{#1,#2}}}
\newcommand{\ml}[2]{{m'_{#1,#2}}}
\newcommand{\mv}[3]{{m^{#1}_{#2,#3}}}
\newcommand{\mlv}[3]{{m'^{#1}_{#2,#3}}}
\newcommand{\pt}[2]{pattern(#1,#2)}
\newcommand{\Pt}{\mathcal{R}}
\newcommand{\pvai}[2]{pattern^{\rightarrow}(#1,#2)}
\newcommand{\pvolta}[2]{pattern^{\leftarrow}(#1,#2)}
\newcommand{\pvaivolta}[2]{pattern^{\leftrightarrow}(#1,#2)}
\newcommand{\ptaux}[3]{pattern(#1,#2,#3)}
\newcommand{\pvaiaux}[3]{pattern^{\rightarrow}(#1,#2,#3)}
\newcommand{\pvoltaaux}[3]{pattern^{\leftarrow}(#1,#2,#3)}
\newcommand{\pvaivoltaaux}[3]{pattern^{\leftrightarrow}(#1,#2,#3)}
\newcommand{\ptd}[2]{pattern'(#1,#2)}
\newcommand{\pvaid}[2]{pattern'^{\rightarrow}(#1,#2)}
\newcommand{\pvoltad}[2]{pattern'^{\leftarrow}(#1,#2)}
\newcommand{\pvaivoltad}[2]{pattern'^{\leftrightarrow}(#1,#2)}
\newtheorem{theorem}{Theorem}[section]
\newtheorem{corollary}[theorem]{Corollary}
\newtheorem{lemma}[theorem]{Lemma}
\newtheorem{problem}[theorem]{Problem}
\newtheorem{definition}[theorem]{Definition}
\newtheorem{fact}[theorem]{Fact}
\newcommand{\FIXME}[1]{\textcolor[rgb]{0,0,1}{#1}}
\definecolor{addcolor}{rgb}{0,0,1}
\begin{document}

\title{acc-Motif Detection Tool }

\author{L. A. A. Meira\\
\small Faculty of Technology, \\
\small University of Campinas, \\
\small Sao Pauo, Brazil
\and
V.  R. M\'{a}ximo\\
\small Institute of Science and Technology, \\
\small Federal University of S\~{a}o  Paulo, \\
\small Sao Pauo, Brazil\newline
\and
A. L. Fazenda\\
\small Institute of Science and Technology, \\
\small Federal University of S\~{a}o  Paulo, \\
\small Sao Pauo, Brazil
\and
A. F. da Concei\c{c}\~{a}o\\
\small Institute of Science and Technology, \\
\small Federal University of S\~{a}o  Paulo, \\
\small Sao Pauo, Brazil
}

\maketitle

\begin{abstract}
{\bf Background:} Network motif algorithms have been a topic of research mainly after the 2002-seminal paper from Milo \emph{et al}, that provided motifs as a way to uncover the basic building blocks of most networks. In Bioinformatics, motifs have been mainly applied in the field of gene regulation networks field.

{\bf Results:}
This paper proposes new  algorithms to exactly count isomorphic pattern motifs of sizes 3, 4 and 5 in directed graphs. 

Let $G(V,E)$ be a directed graph with $m=|E|$. We describe an $O({m\sqrt{m}})$ time complexity algorithm to count isomorphic patterns of size 3. In order to count isomorphic patterns of size 4, we propose an $O(m^2)$  algorithm. To count patterns with $5$ vertices, the algorithm is $O(m^2n)$.



{\bf Conclusion:}
The new algorithms were implemented and compared with FANMOD and Kavosh motif detection tools. The experiments
show that our algorithms are expressively faster than FANMOD and Kavosh's. We also let our motif-detecting tool available in the Internet.\\

{\bf keywords:} network motifs, complex networks, algorithm design and analysis, counting motifs, detecting motifs, motifs, discovery, motif isomorphism.
\end{abstract}

%

\section{Background}
\label{sec:intro}
Network Motifs, or simply motifs, correspond to small patterns that recurrently appear in a complex network~\cite{barabasi2003linked}. 
They can  be considered as the basic building blocks of complex networks and their understanding may be of interest in several areas, such as Bioinformatics \cite{MotifsGenoma,MotifsBio}, 
Communication \cite{MotifsWeb}, and Software Engineering \cite{MotifsSoftwareLivre}.

Finding network motifs has been a matter of attention mainly after the 2002-seminal paper from Milo \emph{et al}. \cite{MiloShen}, that proposed motifs as a way to uncover the structural design of complex networks. Nowadays, the design of efficient algorithms for network motif discovery is an up-to-date research area. Several surveys about motif detection algorithms were published in recent years \cite{ciriello2008review, Ribeiro:2009, Elisabeth2011}.

\subsection{Problem statement}

This paper formally  addresses the following three problems: 

\begin{problem}[Motifs-3]
\label{3motif} Given a directed graph $G(V,E)$, the problem Motifs-3 consists in counting  the number of connected induced subgraphs of G of size 3 grouped by the  13 isomorphic distinct graphs of size 3.
\end{problem}


\begin{problem}[Motifs-4]
\label{4motif} Given a directed graph $G(V,E)$, the problem Motifs-4 consists in counting  the number of connected induced subgraphs of G of size 4 grouped by the 199  isomorphic distinct graphs of size~4.
\end{problem}

\begin{problem}[Motifs-5]
\label{5motif} Given a directed graph $G(V,E)$, the problem Motifs-5 consists in counting  the number of connected induced subgraphs of G of size 5 grouped by the 9364  isomorphic distinct graphs of size~5.
\end{problem}


\subsection{Related work and tools}

The algorithms for motif detection can be based on two main approaches: exact counting or heuristic sampling. As these names might suggest, the former approach performs a precise count of the isomorphic pattern frequency. The latter uses statistics to estimate frequency value. Several exact search-based algorithms and tools can be found in the literature, such as MAVisto~\cite{schreiber2005mavisto}, NeMoFinder~\cite{Chen:2006}, Kavosh~\cite{kashani2009kavosh} and Grochow and Kellis~\cite{Grochow:2007}. Examples of sampling-based algorithms are MFinder~\cite{Kashtan:2004,Mfinder2002}, FANMOD~\cite{wernicke2006fanmod} and MODA~\cite{Moda2009}.

In 2010, Marcus and Shavitt~\cite{Marcus2010,Shavitt2012} provided an exactly algorithm $O(m^2)$ to count network motifs of size~4 in undirected graphs. In Section~\ref{sec:4u}, Figure~\ref{fig:correction} shows the only six connected isomorphic patterns with 4 vertices, that can be labeled as: tailed triangle, 4-cycle, 4-cycle with chord, 4-clique, 4-path and Claw. In fact, the paper provided six independent algorithms; each one devoted to count an undirected isomorphic pattern. Some ideas of Marcus and Shavitt are present in our approach. However, this paper provides a solution to directed graph cases. Furthermore, this work also solves 5-sized motifs.
A short version of this paper of this work appears in \cite{accMotifs}.

\subsection{Our approach: combinatorial acceleration}

This paper presents faster exact algorithms to Motif-3, Motif-4 and Motif-5 problems. Basically, two main  techniques are applied to improve computational complexity: first, our algorithm \emph{compute} the number of isomorphic patterns instead of listing induced subgraphs; second, our method does not actually check isomorphism. The
algorithms associate an integer variable with each isomorphic pattern and increment it directly.

Our algorithm was evaluated on transcription of biological networks (bacteria \emph{E. coli} and the yeast \emph{S. cerevisiae}) and public dataset networks with up to 13,000 vertices and 100,000 edges. 
The results are summarized on Tables \ref{temposk3}, \ref{temposk4} and \ref{temposk5}. Such  tables show a significant improvement in performance for Motif-3, Motif-4 and Motif-5. 
The $\acc$ was able to solve  instances
considered unfeasible in the current programs, which takes several
days to be solved even in probabilistic models.

We believe that the technique can be  extended for detecting motifs of higher sizes. 

The program was implemented in Java and it is made available as freeware in \url{http://www.ft.unicamp.br/~meira/accmotifs}. 


\subsection{Paper organization}

The remaining of this paper is organized as follows:
Section~\ref{sec:imple} describes the implemented algorithms; Section~\ref{sec:defs}  depicts the notation  used, Section~\ref{sec:size3}  introduces the new approach starting by the simple case
of counting isomorphic patterns of size 3, and Sections~\ref{sec:4u} and~\ref{subsec:directed4}
show the method applied to counting isomorphic patterns of size 4 in undirected and directed graphs, respectively.
Subsections \ref{subsec:undirected5} and \ref{subsec:directed5} are dedicated to count isomorphic patterns of size 5.
Section~\ref{sec:exp} presents the computational results, in comparison to other well-known tools available. Finally, Section~\ref{sec:conc} presents the conclusion and our view of future work.

\section{Implementation}
\label{sec:imple}

The existing exact algorithms to find network motifs are generally extremely costly in terms of CPU time and memory consumption, and present restrictions on the size of motifs \cite{kashani2009kavosh}. According to Cirielo and Guerra \cite{ciriello2008review}, motif algorithms typically consist of three steps: (i) listing connected subgraphs of $k$ vertices in the original graph
and in a set of randomized graphs;
(ii)  grouping them into isomorphic classes; and
(iii) determining  the statistical significance of isomorphic subgraph classes by comparing their frequencies to those of an
ensemble of random graphs. 
The core of this paper focuses on items (i) and (ii).

Section~\ref{sec:defs} presents the notation in use. Section~\ref{sec:size3} describes the algorithm to Motif-3 problem. Sections~\ref{sec:4u} and~\ref{subsec:directed4} describe  the algorithm to Motif-4 problem.
Sections~\ref{subsec:undirected5} and~\ref{subsec:directed5} describe  the algorithm to Motif-5 problem.


\subsection{Notation and definitions}
\label{sec:defs}

Let $G(V,E)$ be a directed graph with $n=|V|$ vertices and $m=|E|$ edges. Assuming that $m \geq n-1$. If $(u,v)\in E$ and $(v,u)\in E$, we say it is a bidirected edge. Alternatively, if only $(u,v)\in E$, we say it is a directed edge.

Given a vertex $v\in V$,  we partitioned  the neighbors of $v$ in three disjoint sets: $\inoutset(v)$, $\outset(v)$ and $\inset(v)$, as follows:

$$u\in\left \{ \begin{tabular}{ll}
	$\inoutset(v)$,&if $(u,v)\in E$ and $(v,u)\in E$ \\
	$\outset(v)$,&if $(v,u)\in E$ and $(u,v)\not\in E$ \\
	$\inset(v)$,&if $(u,v)\in E$ and $(v,u)\not\in E$ \\
\end{tabular}\right .$$


It means that $\inoutset(v)$ are the vertices with a bidirected edge to $v$.  The vertices with edges directed from $v$ are in $\outset(v)$ and
the vertices with edges directed to $v$ are in $\inset(v)$.

Sometimes, for convenience, we consider an undirected version of $G(V,E)$ called $G^*(V,E^*)$ where $\{u,v\}\in E^*$ if and only if $(u,v)\in E$ or $(v,u)\in E$, or both. Therefore,  we replace directed or bidirected edges of G by a single undirected edge in $G^*$.
Let us define $\vizinhos(v)$ as the neighbors of $v$, thus $u\in\vizinhos(v)$ if and only if $\{u,v\}\in E^*$.
Note that $\vizinhos(v)=\inoutset(v)\cup\outset(v)\cup\inset(v)$.


Given two disjoint sets $A\subseteq V$ and $B\subseteq V$, we define
$\inoutset(A,B)$ as the set of bidirected edges between the sets $A$ and $B$ and $\outset(A,B)$ as the directed edges from $A$ to $B$. We also define $\inoutset(A,A)$ for a single set $A$ as the bidirected edges $(u,v)$ with $u\in A$ and $v\in A$ and  
$\outset(A,A)$ as directed edges with $u\in A$ and $v\in A$.

We define the adjacency of a set of vertices $V'\subseteq V$ as $adj(V')=\left\{\cup_{v\in V'}\vizinhos(v)\right\}\setminus V'$.



\subsection{Counting isomorphic patterns of size 3}
\label{sec:size3}

To simplify notation usage, Table \ref{table:vars} defines a set of auxiliary variables related to a vertex $v$.


The symbol $ \seta^v=\inoutset(v)$ represents the set of bidirected neighbors of $v$, $ \setb^v=\outset(v)$ is the set of directed neighbors from $v$, and $ \setc^v=\inset(v)$ is the set of directed neighbors to $v$. The sets $\seta^v$, $\setb^v$ and $\setc^v$ define a partition of vertices in $v$ adjacency. For simplicity of notation, if the vertex $v$ is clear the superscript of $v$ can be removed.

The number of bidirected neighbors of $v$ is given by $n_a^v$. The number of directed edges from $\seta^v$ to $\setb^v$ is $\mv{v}{a}{b}$. The notation $m'$ is used to represent the number of bidirected edges, for example, $\mlv{v}{a}{b}$ is the number of bidirected 
edges between  $\seta$ and $\setb$ and ${m'}_{aa}$ is the number of bidirected edges inside $\seta^v$.

The algorithm to count 3-sized motifs is derived from Theorem~\ref{teo3}. However, in order to provide it with a better understanding, the following definition is needed:

\begin{definition}[$\vpt$]
Given a directed graph $G(V,E)$, we define $\vpt$, for any $v\in V$, as a set of induced subgraphs with three vertices, $\{v,x,y\}$, where $x$ and $y$ are in $\vizinhos(v)$, which means all induced subgraphs with
the vertex $v$ and more two vertices in its adjacency. 
The same definition is valid for the case of undirected graph.
\end{definition}

To illustrate the combinatorial optimization technique used in this paper, let us start by analyzing a simple case. Consider a star graph $\mystar(\starv,\stare)$ with center $\starc$ and neighbors $\seta^\starc$, $\setb^\starc$ and
$\setc^\starc$ as described in Figure \ref{fig:star}. 


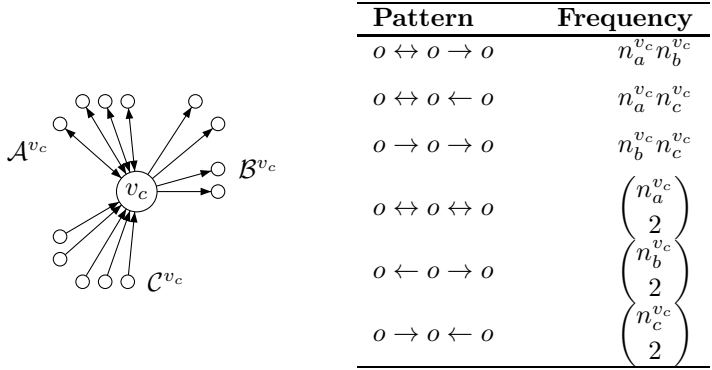
\begin{figure}[htb]
\begin{center}
\begin{tabular}{cc}
\setlength{\unitlength}{.6mm}
\begin{picture}(44,42)(10,20)
\node[Nh=3,Nw=3,Nmr=15](A2)(10,0){} 
\node[Nh=3,Nw=3,Nmr=15](A3)(15,0){} 
\node[Nh=3,Nw=3,Nmr=15](A4)(20,0){} 
\node[Nh=3,Nw=3,Nmr=15](B1)(5,5){} 
\node[Nh=3,Nw=3,Nmr=15](B2)(5,10){} 
\node[Nh=3,Nw=3,Nmr=15](B7)(5,35){} 
\node[Nh=3,Nw=3,Nmr=15](C4)(40,20){} 
\node[Nh=3,Nw=3,Nmr=15](C5)(40,25){} 
\node[Nh=3,Nw=3,Nmr=15](C7)(40,35){} 
\node[Nh=3,Nw=3,Nmr=15](D2)(10,40){} 
\node[Nh=3,Nw=3,Nmr=15](D3)(15,40){} 
\node[Nh=3,Nw=3,Nmr=15](D4)(20,40){} 
\node[Nh=3,Nw=3,Nmr=15](D7)(35,40){} 
\node[Nh=9,Nw=9,Nmr=15](C)(22,20){$v_c$} 
 \drawedge(A2,C){} 
 \drawedge(A3,C){} 
 \drawedge(A4,C){} 
 \drawedge(B1,C){} 
 \drawedge(B2,C){} 
 \drawedge[ATnb=1](C,B7){} 
 \drawedge(C,C4){} 
 \drawedge(C,C5){}  
 \drawedge(C,C7){}  
 \drawedge[ATnb=1](C,D2){} 
 \drawedge[ATnb=1](C,D3){} 
 \drawedge[ATnb=1](C,D4){} 
 \drawedge(C,D7){} 
    \nodelabel[ExtNL=y,NLangle=220,NLdist=3](B7){$\seta^\starc$}
  \nodelabel[ExtNL=y,NLangle=0,NLdist=3](C5){$\setb^\starc$}
    \nodelabel[ExtNL=y,NLangle=0,NLdist=3](A4){$\setc^\starc$}
\end{picture}
&~~~~
\begin{tabular}{lrr}
	\hline
	{\bf Pattern~~~~~}  & {\bf Frequency} \\
	\hline
	$\vspace{.2cm} o\leftrightarrow o \rightarrow o$  & $\displaystyle n_a^\starc n_b^\starc$\\
	$\vspace{.2cm} o\leftrightarrow o \leftarrow o$  & $\displaystyle n_a^\starc n_c^\starc$\\
	$\vspace{.2cm}o\rightarrow o \rightarrow o$  & $\displaystyle n_b^\starc n_c^\starc$\\
	$ o\leftrightarrow o \leftrightarrow o$  & $\displaystyle{{ n_a^\starc}\choose{2}}$\\
	$o\leftarrow o \rightarrow o$  & $\displaystyle{{ n_b^\starc}\choose{2}}$\\
	$o\rightarrow o \leftarrow o$  & $\displaystyle{{ n_c^\starc}\choose{2}}$\\
	\hline
\end{tabular}
\end{tabular}
\end{center}
\caption{Star graph and its sets.  Isomorphic pattern frequencies ont the right.}
\label{fig:star}
\setlength{\unitlength}{1mm}
\end{figure}

A naive algorithm to motif counting will compute all vertices in $\vizinhos(\starc)$ combined two by two. We argue that it is possible to compute the isomorphic patterns in $G^S$ in constant time $O(1)$, since we have precomputed the auxiliary variables of Table \ref{table:vars}.

Figure~\ref{fig:star} brings an insight about how our algorithm works. It shows, in a simple example, that it is possible to count isomorphic patterns without explicit listing all of them.
The right side of the figure depicts all possible patterns and occurrence frequencies in the star graph $G^S$.
It is possible to achieve, for instance, a number of exactly $n_a^\starc n_b^\starc$ occurrences of pattern ``$o\leftrightarrow o \rightarrow o$", which means a pattern with a center vertex linked to the left neighbor vertex by a bidirected edge and linked to the right neighbor by an edge directed to it. 


The algorithm to count isomorphic patterns in a general graph derives from the following theorem. 

\begin{theorem}\label{teo3}
Let $G(V,E)$ be a general graph and $v$ any vertex in $V$. The patterns occurrences in set $\vpt$
are given by Table~\ref{table:starb}.
\end{theorem}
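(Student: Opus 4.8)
The plan is to prove Theorem~\ref{teo3} by a direct partition argument on the pairs of neighbours of $v$, generalizing the star computation of Figure~\ref{fig:star} to a graph in which the neighbours of $v$ may also be joined to each other. First observe that every member of $\vpt$ is an induced subgraph on a vertex set $\{v,x,y\}$ with $x,y$ distinct and $x,y\in\vizinhos(v)$, so the elements of $\vpt$ are in bijection with the $\binom{|\vizinhos(v)|}{2}$ unordered pairs $\{x,y\}\subseteq\vizinhos(v)$. Since $\vizinhos(v)=\seta^v\cup\setb^v\cup\setc^v$ is a disjoint union, each such pair falls into exactly one of six classes according to the membership of its two endpoints: both in $\seta^v$, both in $\setb^v$, both in $\setc^v$, or one endpoint in each of two distinct sets (the mixed classes $\seta^v\setb^v$, $\seta^v\setc^v$, $\setb^v\setc^v$).

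Next, fix one of these six classes. The class determines, up to the obvious symmetry, the two edges $vx$ and $vy$ (bidirected if the endpoint lies in $\seta^v$, directed out of $v$ if in $\setb^v$, directed into $v$ if in $\setc^v$). Hence the isomorphism type of the induced subgraph on $\{v,x,y\}$ depends only on the adjacency between $x$ and $y$, of which there are four mutually exclusive possibilities: no edge, a directed edge $x\to y$, a directed edge $y\to x$, or a bidirected edge. For a mixed class, say $x\in\seta^v$ and $y\in\setb^v$, the three edge-bearing possibilities are counted respectively by $\mlv{v}{a}{b}$, $\mv{v}{a}{b}$, and $\mv{v}{b}{a}$, and the no-edge possibility is then $n_a^v n_b^v$ minus these three quantities; the classes $\seta^v\setc^v$ and $\setb^v\setc^v$ are handled symmetrically. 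For a same-set class, say $x,y\in\seta^v$, the $\binom{n_a^v}{2}$ pairs split into those joined by a bidirected edge ($\mlv{v}{a}{a}$), those joined by a directed edge ($\mv{v}{a}{a}$, where a directed edge contributes one unordered pair regardless of orientation), and those with no edge ($\binom{n_a^v}{2}-\mlv{v}{a}{a}-\mv{v}{a}{a}$), so there are only three sub-cases.

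Finally, I would assemble these contributions: running over all six classes and all their sub-cases partitions $\vpt$, so the frequency of each isomorphism class of $3$-vertex graphs is the sum of the sub-case counts that produce a subgraph of that type. Identifying which class/sub-case combinations yield the same isomorphism type — for instance, the various combinations that all produce the transitive triangle, or the directed $3$-path — and collecting them reproduces exactly the entries of Table~\ref{table:starb}. The main obstacle is bookkeeping rather than mathematics: one must check that the six-class/four-state scheme is a genuine partition (each pair counted once), correctly attach an isomorphism type to each of the roughly two dozen class/sub-case combinations while respecting the symmetric roles of $v,x,y$ in the same-set classes and the orientation of the $x$–$y$ edge in the mixed ones, and verify that the complementary counts for the no-edge sub-cases are nonnegative and combine as stated. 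Once this table of sub-case contributions is written out, matching it against Table~\ref{table:starb} is immediate.
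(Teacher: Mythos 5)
Your argument is correct, and it establishes exactly the identities of Table~\ref{table:starb}; the only difference from the paper is organizational. The paper proves the theorem by induction on the edges lying inside $\vizinhos(v)$: the base case is the star graph (Figure~\ref{fig:star}), and each added edge between two neighbours removes one pattern occurrence and creates another, which is why every $m$- or $m'$-variable enters each row of the table with coefficient $\pm 1$. You instead partition the $\binom{|\vizinhos(v)|}{2}$ neighbour pairs directly into six class cells (by membership in $\seta^v,\setb^v,\setc^v$) refined by the four mutually exclusive adjacency states of the pair, and read each table entry off as a cell count or a complement of cell counts. The two arguments encode the same combinatorics; your direct partition makes the inclusion--exclusion in the ``no edge'' rows (e.g.\ $n_a^v n_b^v-\mv{v}{a}{b}-\mv{v}{b}{a}-\mlv{v}{a}{b}$) transparent and dispenses with the induction scaffolding, while the paper's edge-by-edge formulation is the one that generalizes verbatim to the size-4 and size-5 theorems later on. Your key supporting observations --- that $\vpt$ is in bijection with unordered neighbour pairs, that the four adjacency states are exclusive and exhaustive, and that a directed edge inside a single class contributes one unordered pair --- are all what is needed; the remaining work of attaching an isomorphism type to each cell is, as you say, bookkeeping (modulo a harmless ordering slip where you list the three edge-bearing states in one order and their counters $\mlv{v}{a}{b},\mv{v}{a}{b},\mv{v}{b}{a}$ in another).
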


\begin{proof}

This theorem is proved by induction. Observe that the $\vpt$ set considers the patterns containing $v$ and two vertices in  $\vizinhos(v)$.
Let $G'(E',V')$ be the graph $G$ induced by $v\cup\vizinhos(v)$.
The basic case is if the $G'$ is a star graph. In this case, the $\vpt$  frequencies are equal to Figure~\ref{fig:star} on the right. Table~\ref{table:starb} corresponds  to it if all $m^v$ variables are zero, which is the case in $G'$.

Suppose that a new $(x,y)$ directed edge is added to $G'$ where $x$ and $y$ are in $\vizinhos(v)$. 
The new graph has edges $E'\cup(u,v)$. At this moment, our sole interest is devoted to subgraph patterns that contain the vertex $v$. 

The number of pattern ``$ o \rightarrow o \rightarrow o$" in the original $G'$ is $n_b^v n_c^v$. 
If the new directed edge $(x,y)$ has $x\in\setc^v$ and $y\in\setb^v$,  one pattern ``$ o \rightarrow o \rightarrow o$ "
is removed and a cyclic pattern is added.  
The added pattern is shown in Table~\ref{table:starb}, Line 13.

If $u\in\seta$ and $v\in\seta$
one pattern ``$o\leftrightarrow o \leftrightarrow o$" is removed and another is added. The added pattern is 
shown in Table~\ref{table:starb},
 Line 11. 
For each edge added  in $\vizinhos(v)$, one pattern containing $v$ is removed
and another pattern containing $v$ is added. 

A straightforward generalization is observed for an arbitrary number of added edges. Suppose $\mv{v}{c}{b}$
directed edges are added from $\setc^v$ to $\setb^v$. The number of
``$ o\rightarrow  o\rightarrow o$" decreases $\mv{v}{c}{b}$ units and  exactly $\mv{v}{c}{b}$
occurrences arise from a new one, which can be seen in Table~\ref{table:starb}, Line 13.  
\end{proof}

Thus, given a graph $G(V,E)$, for each vertex $v\in V$,
it is possible to obtain the $\vpt$ frequencies using Theorem~\ref{teo3}.
Table \ref{table:starb} shows this pattern frequency (see variable definition in Table \ref{table:vars}).

If the variables of Table \ref{table:vars} were preprocessed, it is possible to
calculate all isomorphic patterns of size $3$  containing a vertex $v\in V$  and two other neighbors of $v$
in $O(1)$.

The pattern containing $v$, a neighbor of $v$, and a non-neighbor of $v$ 
will be ignored by the $\vpt$ set.
Fortunately, valid patterns involving these vertices could be computed by their center vertex at another moment.
Patterns related to $C_3$ will be considered three times each. The pattern $C_3$ is in the $\vpt$ set of vertices $v_1,v_2$ and $v_3$ in $C_3$. Therefore, a simple correction must be applied. The final counter of $C_3$ related isomorphic patterns must be divided by three to provide the correct value.



The Algorithm~\ref{alg:s3} counts motifs  patterns of size 3. In fact, the algorithm does not perform any isomorphic matching.
The algorithm creates a vector $h$ with thirteen integers and initializes them with zero. In this vector, the pattern ``$o\leftrightarrow o \rightarrow o$" is arbitrarily associated with $h[0]$, the pattern ``$o\leftrightarrow o \leftarrow o$" is arbitrarily  associated with $h[1]$, ``$o\rightarrow o \rightarrow o$" with $h[2]$, and so on. The algorithm computes Table~\ref{table:starb} frequencies  for each $v \in V$. The frequencies of Table~\ref{table:starb} are incremented in vector $h$ directly. The algorithm output is vector $h$, containing thirteen isomorphic pattern frequencies.

\begin{algorithm}[htb]
\LinesNumbered
\SetAlgoLined
\KwIn{Directed graph $G(V,E)$}
\KwOut{ Histogram to 13 isomorphic patterns to motifs of size 3}
Create a histogram data structure to count isomorphic patterns\\
Calculate the variables of Table \ref{table:vars} to all vertices.\\
\ForEach{$v\in V$}{
Calculate the number of patterns involving vertex $v$ using frequencies of Table \ref{table:starb}.\\
For each pattern, add this frequency counter to histogram. 
}
\If{\rm The undirected version of the pattern is the cycle graph $C_3$}{
Divide the frequency counter by 3
}
\Return{\rm The histogram.}
\caption{ Count 3 Sized Patterns Algorithm.}
\label{alg:s3}
\end{algorithm}




The complexity of the algorithm is dominated by Line 2, which computes variables in Table~\ref{table:vars}. All operations in Algorithm~\ref{alg:s3}, except Line 2, are $O(\FIXME{n})$. The next section shows how to compute Line 2  in $O(m\sqrt{m})$.

\subsection{Preprocessing Table \ref{table:vars}}
\label{sec:preproc}

This section argues that, given a directed graph $G(V,E)$, 
it is possible to compute Table \ref{table:vars} sets and variables  in $O(a(G)m)$, where 
$a(G)$ is the arboricity of the undirected version of $G$. Arboricity was introduced by Nash-Williams~\cite{nash1964decomposition}, the arboricity $a(G)$ of a graph $G$ is the minimum number of forests into which its edges can be partitioned. It is known \cite{Chiba:1985} that $a(G)=O(\sqrt{E})$ to any graph, so the execution complexity is also $O(m\sqrt{m})$. 

First, for each vertex $v\in V$, create three sets  $\seta^v,\setb^v$, and $\setc^v$. 
Algorithm~\ref{alg:table1a} describes how to compute such variables in $O(m)$.

\begin{algorithm}[htb]
\LinesNumbered
\SetAlgoLined
\KwIn{Directed graph $G(V,E)$}
\KwOut{ Variables $\seta^v,\setb^v$, and $\setc^v$ and $n^v_a$, $n^v_b$ and $n^v_c$ for all $v\in V$}
\ForEach{$v\in V$}{
 $(\seta^v,\setb^v,\setc^v)\gets(\emptyset,\emptyset,\emptyset)$\\
}
\ForEach{bidirected $(u,v)\in E$}{
    $\seta^u\gets\seta^u\cup\{v\}$\\
    $\seta^v\gets\seta^v\cup\{u\}$    
}
\ForEach{directed $(u,v)\in E$}{
    $\setb^u\gets\setb^u\cup\{v\}$\\
    $\setc^v\gets\setc^v\cup\{u\}$    
}
\ForEach{$v\in V$}{
    $(n^v_a,n^v_b,n^v_c)\gets (|\seta^v|,|\setb^v|,|\setc^v|)$
}
\caption{Create $\{\seta^v,\setb^v,\setc^v\}$   variables.}
\label{alg:table1a}
\end{algorithm}

Algorithm~\ref{alg:alternative}  computes variables~$\{\mv{v}{a}{a},\ldots,\mlv{v}{c}{c}\}$.
The complexity is dominated by Line~5, the algorithm that lists all triangles in an undirected graph. 
If Chiba and Nishizek algorithm~\cite{Chiba:1985} is used to list all triangles, an $O(a(G)m)$ algorithm is obtained, where $a(G)$ is the arboricity of $G$.

\begin{algorithm}[htb]
\LinesNumbered
\SetAlgoLined
\KwIn{Directed graph $G(V,E)$}
\KwOut{ Variables  $\{\mv{v}{a}{a},\ldots,\mlv{v}{c}{c}\}$.}
Let $G^*(V,E^*)$ be the undirected version of $G(V,E)$.\\
\ForEach{$v\in V$}{
All variables in $\{\mv{v}{a}{a},\ldots,\mlv{v}{c}{c}\}$ start with zero.\\
}
List all triangles of $G^*(V,E^*)$ and save in variable $\sett$\\
\ForEach{triangle $(v_1,v_2,v_3)\in \sett$}{
  Let $(v,x,y)\gets(v_1,v_2,v_3)$
\begin{tabular}{lcc}
	\hline
	\bf{If vertices~~~~~}  &  \bf{Var to increment} & \bf{Var to increment}\\
	 & if $(x,y)\in E$ is bidirect~~~ & If $(x,y)\in E$ is direct~~ \\ 	\hline
	 $x \in \seta^v$ and $y\in \seta^v$ & $\mlv{v}{a}{a}$  & $\mv{v}{a}{a}$\\
	$x \in \seta^v$ and $y\in \setb^v$ & $\mlv{v}{a}{b}$ and $\mlv{v}{b}{a}$&$\mv{v}{a}{b}$\\
	$x \in \seta^v$ and $y\in \setc^v$ & $\mlv{v}{a}{c}$ and $\mlv{v}{c}{a}$&$\mv{v}{a}{c}$\\
	$x \in \setb^v$ and $y\in \seta^v$ & $\mlv{v}{b}{a}$ and $\mlv{v}{a}{b}$&$\mv{v}{b}{a}$\\
	$x \in \setb^v$ and $y\in \setb^v$ & $\mlv{v}{b}{b}$&$\mv{v}{b}{b}$\\
	$x \in \setb^v$ and $y\in \setc^v$ & $\mlv{v}{b}{c}$ and $\mlv{v}{c}{b}$&$\mv{v}{b}{c}$\\
	$x \in \setc^v$ and $y\in \seta^v$ & $\mlv{v}{c}{a}$ and $\mlv{v}{a}{c}$&$\mv{v}{c}{a}$\\
	$x \in \setc^v$ and $y\in \setb^v$ & $\mlv{v}{c}{b}$ and $\mlv{v}{b}{c}$& $\mv{v}{c}{b}$\\
	$x \in \setc^v$ and $y\in \setc^v$ & $\mlv{v}{c}{c}$&$\mlv{v}{c}{c}$\\
	\hline 
	~\\
\end{tabular}\\
Do the same to $(v,x,y)\gets(v_2,v_1,v_3)$ and to $(v,x,y)\gets(v_3,v_1,v_2)$
}
\caption{Create $\{\mv{v}{a}{a},\ldots,\mlv{v}{c}{c}\}$   variables.}
\label{alg:alternative}
\end{algorithm}

%
%
%

It is possible to notice that, in essence, Algorithm~\ref{alg:alternative} is processing all triangles
in $G(V,E)$. Each increment in $\mv{v}{a}{b}$  is an operation in a triangle containing $v$ and two connected
neighbors. We remark the existence of more straightforward implementations of Algorithm~\ref{alg:alternative}, but
the use of Chiba and Nishizek algorithm \cite{Chiba:1985} to list all triangles as a subroutine simplifies the
complexity analysis.

Thus, it is possible to conclude that the Algorithm~\ref{alg:s3}, which solves the Motifs-3 problem, presents an $O(a(G)m)$ time complexity. The memory used in the algorithm is linear in relation to the memory used to represent G(V,E).

\subsection{Counting isomorphic patterns of size 4 in \emph{undirected} graphs}
\label{sec:4u}

To show our solution of  Motif-4 problem, let us start with an undirected version of the problem.
The directed case involves more details and will be considered in Section \ref{subsec:directed4}. 

Similarly to the previous section, the following definition needs to be known beforehand:

\begin{definition}[$\ept$]
\label{ept}
Given a directed graph $G(V,E)$, we define an $\ept$, for any $e\in E$, as a set of patterns with four vertices, $\{u,v,v_1,v_2\}$, where $(u,v)=e$ and $v_1$ and $v_2$ are in $adj(\{u,v\})$. The $\ept$ set has patterns with
the edge $e$ and more two vertices in its adjacency. 
The same applies to the undirected graph.
\end{definition}

The approach to count patterns with four vertices is countingt $\ept$ to all $e\in E$. 
Let us define $Z^e=\vizinhos(u)\cap\vizinhos(v)\setminus\{u,v\}$ as the vertices adjacent to $u$ and $v$,
$X^e=\vizinhos(u)\setminus (Z^e\cup\{v\})$ as the vertices only in  $u$ adjacency, and $Y^e=\vizinhos(v)\setminus (Z^e\cup \{u\})$ 
as the vertices only in  $v$ adjacency. If the edge $e$ is clear it can be omitted from the superscript. We define  $\size{x}^e$,
$\size{y}^e$ and $\size{z}^e$ as the sizes $|X^e|$, $|Y^e|$ and $|Z^e|$, respectively. See Figure \ref{fig:stare}.

The $C_k$ is the cycle graph with $k$ vertices, the $S_k$ is the star graph with a center and $k$ leaves, and
the $K_k$ is the complete graph of size $k$. The $K_k\setminus\{e\}$ is the complete $K_k$ graph without 
an arbitrary edge $e$. The $P_k$ is the 
path graph with $k$ vertices.

\begin{figure}[htb]
\centerline{
\begin{tabular}{cc}
\setlength{\unitlength}{.7mm}
\begin{picture}(80,35)(0,10)
\node[Nh=3,Nw=3,Nmr=15](B1)(5,0){} 
\node[Nh=3,Nw=3,Nmr=15](B2)(5,5){} 
\node[Nh=3,Nw=3,Nmr=15](B3)(5,10){} 
\node[Nh=3,Nw=3,Nmr=15](B4)(5,15){} 
\node[Nh=3,Nw=3,Nmr=15](C3)(70,0){} 
\node[Nh=3,Nw=3,Nmr=15](C4)(70,5){} 
\node[Nh=3,Nw=3,Nmr=15](C5)(70,10){} 
\node[Nh=3,Nw=3,Nmr=15](C6)(70,15){} 
\node[Nh=3,Nw=3,Nmr=15](D1)(30,25){} 
\node[Nh=3,Nw=3,Nmr=15](D2)(45,25){} 
\node[Nh=3,Nw=3,Nmr=15](D3)(35,25){} 
\node[Nh=3,Nw=3,Nmr=15](D4)(40,25){} 
\node[Nh=7,Nw=7,Nmr=15](C)(22,7){$u$} 
\node[Nh=7,Nw=7,Nmr=15](D)(52,7){$v$} 
 \drawedge[ATnb=0,AHnb=0](B1,C){} 
 \drawedge[ATnb=0,AHnb=0](B2,C){} 
 \drawedge[ATnb=0,AHnb=0](B3,C){} 
 \drawedge[ATnb=0,AHnb=0](C,B4){} 
 \drawedge[ATnb=0,AHnb=0](D,C3){} 
 \drawedge[ATnb=0,AHnb=0](D,C4){} 
 \drawedge[ATnb=0,AHnb=0](D,C5){} 
 \drawedge[ATnb=0,AHnb=0](D,C6){} 
 \drawedge[ATnb=0,AHnb=0](C,D1){} 
 \drawedge[ATnb=0,AHnb=0](C,D2){} 
 \drawedge[ATnb=0,AHnb=0](C,D3){} 
 \drawedge[ATnb=0,AHnb=0](C,D4){} 
 \drawedge[ATnb=0,AHnb=0](D,D1){} 
 \drawedge[ATnb=0,AHnb=0](D,D2){} 
 \drawedge[ATnb=0,AHnb=0](D,D3){} 
 \drawedge[ATnb=0,AHnb=0](D,D4){} 
 \drawedge[ATnb=0,AHnb=0](C,D){}  
    \nodelabel[ExtNL=y,NLangle=180,NLdist=3](B3){$X^e$}
  \nodelabel[ExtNL=y,NLangle=0,NLdist=3](C5){$Y^e$}
    \nodelabel[ExtNL=y,NLangle=90,NLdist=3](D3){$Z^e$}
    \end{picture}
& ~~~~~~
\begin{tabular}{lr}
	\hline
	\textbf{Pattern~~~~~}  & {\bf Frequency} \\
	\hline
	$\vspace{.2cm} P_4$  & $\displaystyle \size{x} \size{y}$\\
	$\vspace{.2cm} x \leftrightarrow C_3$  & $\displaystyle (\size{x} +\size{y}) \size{z}$\\
	$ S_3$  & $\displaystyle {\size{x}\choose{2}}+{\size{y}\choose{2}}$\\
	$K_4\setminus\{e\}$  & $\displaystyle {\size{z}\choose{2}}$\\
	\hline
\end{tabular}
\end{tabular}
}
\caption{Sets associated with $e=\{u,v\}$ and $\ept$ frequency of the graph on the left.}
\label{fig:stare}
\setlength{\unitlength}{1mm}
\end{figure}

The algorithm to count isomorphic patterns derives from the following theorem. 

\begin{theorem}\label{teo4u}
Let $G(V,E)$ be a general undirected graph and $e=\{u,v\}$ any edge in $E$. The patterns occurrences in set $\ept$
is given by Table~\ref{tab:ept}.
\end{theorem}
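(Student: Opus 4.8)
The plan is to mimic the inductive argument used for Theorem~\ref{teo3}, but now working with the edge-centered set $\ept$ instead of the vertex-centered set $\vpt$. Fix an edge $e=\{u,v\}$ and let $G'$ be the subgraph of $G^*$ induced by $\{u,v\}\cup adj(\{u,v\})$, restricted so that none of the optional edges among $X^e\cup Y^e\cup Z^e$ are present; this is precisely the "double-star'' configuration drawn on the left of Figure~\ref{fig:stare}. In this base configuration the only 4-vertex induced subgraphs containing the edge $e$ are obtained by picking two further vertices $v_1,v_2$ from $X^e\cup Y^e\cup Z^e$, and a direct case analysis on which of the three sets $v_1$ and $v_2$ land in gives exactly the four pattern frequencies listed in Figure~\ref{fig:stare} (for example, one vertex in $X$ and one in $Y$ yields a $P_4$, hence $\size{x}\size{y}$ of them; one vertex in $Z$ and one anywhere else closes a triangle with $e$ and leaves a pendant, giving the $x\leftrightarrow C_3$ count $(\size{x}+\size{y})\size{z}$; two in $X$ or two in $Y$ give the star $S_3$; two in $Z$ give $K_4\setminus\{e\}$). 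I would then check that Table~\ref{tab:ept} reduces to exactly these values when every "internal'' edge-count variable (the analogues of the $m$ and $m'$ variables, now indexed by the pairs of sets $X,Y,Z$) is set to zero.

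Next I would run the induction on the number of internal edges. Adding one edge $\{x,y\}$ with both endpoints in $adj(\{u,v\})$ does not change which 4-subsets $\{u,v,x,y'\}$ or $\{u,v,x',y\}$ exist, but it does change the induced subgraph on the particular set $\{u,v,x,y\}$: that one subgraph moves from whatever pattern it had in $G'$ to a strictly denser pattern, while every other 4-subset containing $e$ is unaffected. So each added internal edge performs exactly one "decrement one pattern, increment another'' swap, and the identity of the source and target patterns is determined solely by the membership of $x$ and $y$ in $X^e$, $Y^e$ or $Z^e$ together with whether the previously-present structure on $\{u,v,x,y\}$ was a $P_4$, an $x\leftrightarrow C_3$, an $S_3$, or a $K_4\setminus\{e\}$. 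Since edges are added one at a time and the effect is additive, processing $k$ edges between a fixed pair of sets moves exactly $k$ subgraphs, which is why the corrections in Table~\ref{tab:ept} appear linearly in those edge-count variables — exactly the "straightforward generalization'' remark used at the end of the proof of Theorem~\ref{teo3}. I would present this as: first handle a single added edge in each of the relevant set-pair cases, then invoke additivity for an arbitrary number of edges.

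The bookkeeping obstacle, and the step I expect to be the real work, is the swap table itself: when an edge is added inside $Z^e$ the affected 4-subset $\{u,v,x,y\}$ already contained the two triangles $uxy\!\!\not$, sorry, already contained $e$ plus two $Z$-vertices, so in $G'$ it was a $K_4\setminus\{e,\{x,y\}\}$-type graph (i.e. a 4-cycle), and adding $\{x,y\}$ turns it into $K_4\setminus\{e\}$; an edge between $X^e$ and $Y^e$ turns a $P_4$ into a $C_4$; an edge inside $X^e$ turns an $S_3$ into a "paw''/tailed-triangle; an edge between $X^e$ and $Z^e$ turns an $x\leftrightarrow C_3$ into a denser 4-vertex graph; and so on. Enumerating all these source/target pairs correctly, and matching each to the correct row of Table~\ref{tab:ept} with the correct multiplicity, is the only place error can creep in; everything else is the same template as Theorem~\ref{teo3}. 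I would therefore organize the proof as (i) base case via Figure~\ref{fig:stare}, (ii) the single-edge swap lemma with an explicit small table of source pattern $\to$ target pattern indexed by which of $XX,XY,XZ,YY,YZ,ZZ$ the new edge lies in, (iii) additivity over all internal edges, and (iv) the observation that reading off the net effect on each pattern counter yields precisely Table~\ref{tab:ept}.

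\begin{proof}[Proof sketch to be expanded]
The argument is by induction on the number of edges internal to $adj(\{u,v\})$, exactly parallel to the proof of Theorem~\ref{teo3}. In the base case $G'$ (the subgraph induced by $\{u,v\}\cup adj(\{u,v\})$ with no internal edges) is a double star, and the four $\ept$ frequencies are read off by a case split on the membership of the two chosen neighbors in $X^e$, $Y^e$, $Z^e$; these agree with Figure~\ref{fig:stare}, and with Table~\ref{tab:ept} when all internal edge-count variables vanish. Adding one internal edge $\{x,y\}$ leaves every 4-subset containing $e$ other than $\{u,v,x,y\}$ unchanged and promotes that single subgraph to a denser pattern, the source and target being dictated by which of the six cases $XX,XY,XZ,YY,YZ,ZZ$ the pair $\{x,y\}$ falls into; summing these single-edge swaps over all internal edges and collecting terms yields Table~\ref{tab:ept}.
\end{proof}
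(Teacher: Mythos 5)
Your overall strategy is exactly the paper's: induct on the number of edges internal to $adj(\{u,v\})$, verify the base ``double star'' case against Figure~\ref{fig:stare}, and treat each added internal edge as a single swap of one pattern for a denser one, with additivity over the edge counts $\mv{e}{x}{x},\ldots,\mv{e}{z}{z}$. The base case, the $XY$ swap ($P_4\to C_4$), the $XX/YY$ swap ($S_3\to x\leftrightarrow C_3$) and the $XZ/YZ$ swap ($x\leftrightarrow C_3\to K_4\setminus\{e\}$) all agree with the published proof.

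However, the one swap you work out in detail, the $Z^e\times Z^e$ case, is wrong --- and it is precisely the bookkeeping step you flagged as the place where error could creep in. If $x,y\in Z^e$, then \emph{before} $\{x,y\}$ is added the induced subgraph on $\{u,v,x,y\}$ already contains the five edges $\{u,v\},\{u,x\},\{v,x\},\{u,y\},\{v,y\}$: the central edge $e$ is present in every member of $\ept$ by definition, so this subgraph is the diamond $K_4\setminus\{e\}$ (which is exactly what the ${{\size{z}^e}\choose{2}}$ term counts, and what your own base-case analysis says two $Z$-vertices produce), not a $4$-cycle. Adding $\{x,y\}$ therefore promotes a $K_4\setminus\{e\}$ to a $K_4$, not a $C_4$ to a $K_4\setminus\{e\}$ as you state; your version tacitly deletes $e$ itself from the subgraph. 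Carried through, your swap would attach a $-\mv{e}{z}{z}$ correction to the $C_4$ counter and leave the $K_4$ counter at zero, contradicting Table~\ref{tab:ept}, whose $K_4$ entry is exactly $\mv{e}{z}{z}$ and whose $C_4$ entry is exactly $\mv{e}{x}{y}$. Fixing this single entry of your swap table makes the argument coincide with the paper's proof.
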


\begin{proof}
This theorem is also proved by induction in the number of edges. 
Let $G'$ be the graph $G$ induced by $\{u,v\}\cup adj(\{u,v\}).$
The basic case is if $G'$ is the graph in Figure~\ref{fig:stare}. In this case, the $\ept$  frequencies are equal to Figure~\ref{fig:stare} on the right. Table~\ref{tab:ept} corresponds to it if all $m^e$ variables are zero, which is true for the graph at issue.


%

If one edge $e'$ is added into $X^e$, one $\ept$ $S_3$  has to be replaced by one $\ept$ 
$x\leftrightarrow C_3$. The same applies to $Y^e$. If one edge is added to $Z^e$, one
$\ept$ $K_4\setminus\{e\}$  has to be replaced by one $\ept$ $K_4$. If one edge is added between $X^e$ and $Y^e$, 
one $\ept$ $P_4$  needs to be removed and one $\ept$ $C_4$ must be added. If one edge is added between $X^e$ or $Y^e$ to $Z^e$,
 one $\ept$ $x \leftrightarrow C_3$ has to be deleted and one $\ept$ $K_4\setminus\{e\}$ must be added. 
 Let $\mv{e}{x}{y}$ be the number of edges between sets $X^e$ and $Y^e$. Similarly consider variables $\mv{e}{x}{x}$, $\mv{e}{x}{z}$, $\mv{e}{y}{y}$, $\mv{e}{y}{z}$, and $\mv{e}{z}{z}$.
 Thus, Table \ref{tab:ept} presents the  $\ept$ frequency to $e=\{u,v\}$. 
 \end{proof}
 

 The algorithm to count isomorphic patterns of size four will sum up the   $\ept$ frequencies for all $e\in E$.
 
 Similarly to Section \ref{sec:size3}, the pattern
containing $\{u,v\}$, a neighbor of $u$ or $v$ and a non-neighbor of $u$ and $v$ is not considered in $\ept$ set.  It applies to pattern $P_4$  and a
non-central edge and pattern $x \leftrightarrow C_3$ and one edge in  $C_3$.
 Fortunately, these patterns will be considered later for other edges. 
An induced subgraph pattern can be considered in distinct $\ept$ sets. So,
the final histogram needs a small correction. If a pattern appears at the $\ept$ set
for $a$ of its edges, the number of occurrences in the final histogram must be divided by $a$.

The following fact describe this situation.

\begin{fact}
\label{fact:correction}
Based on Definition \ref{ept} and Figure~\ref{fig:correction}, 
the $C_4$ patterns are considered by four edges. It is in $\ept$ for each of its four edges. The $P_4$ 
is in the $\ept$ set only for the central edge. The $S_3$ is in $\ept$ for each of its three edges. The tailed 
triangle is in $\ept$ in three of its four edges. The $K_4\setminus\{e\}$ is in $\ept$ for each of its five edges and
$K_4$ is in $\ept$ in each of its six edges.
\end{fact}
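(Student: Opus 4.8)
The plan is to reduce Fact~\ref{fact:correction} to an elementary membership criterion and then verify the six connected four-vertex patterns of Figure~\ref{fig:correction} one at a time. First I would record the following reformulation of Definition~\ref{ept}: if $H$ is a four-vertex induced subgraph on vertex set $\{x,y,w_1,w_2\}$ and $e=\{x,y\}$ is an edge of $H$, then $H$ belongs to the $\ept$ set of $e$ if and only if both $w_1$ and $w_2$ lie in $adj(\{x,y\})$, that is, each of $w_1,w_2$ is adjacent to at least one endpoint of $e$. Since $H$ is induced, adjacency in $G$ and in $H$ coincide here, and $w_1,w_2$ are automatically distinct from $x,y$; so the whole statement becomes a finite check: for each pattern $H$ and each edge $e$ of $H$, decide whether the two remaining vertices each touch $e$.

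Then I would run the case analysis. For $P_4=a-b-c-d$, only the central edge $\{b,c\}$ passes the test: for a pendant edge such as $\{a,b\}$, the vertex $d$ has no neighbor in $\{a,b\}$; hence $P_4$ lies in $\ept$ for exactly one edge. For $S_3$ with center $c$, every edge has the form $\{c,\ell\}$ and the two remaining leaves are both adjacent to $c$, so all three edges qualify. For $C_4=a-b-c-d-a$, the edge $\{a,b\}$ leaves $c$ (adjacent to $b$) and $d$ (adjacent to $a$), and by edge-transitivity of the cycle all four edges qualify. For the tailed triangle (triangle on $a,b,c$ with pendant $d$ attached at $c$), the edge $\{a,b\}$ fails because $d$ is adjacent only to $c$, whereas each of the three edges incident to $c$ leaves two vertices that each touch $c$; hence three of the four edges qualify. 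For $K_4\setminus\{e\}$ and $K_4$, every remaining vertex is adjacent to at least one endpoint of any edge (indeed to both, except across the single missing edge of $K_4\setminus\{e\}$), so all five, respectively all six, edges qualify.

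The argument presents no real obstacle; the only point requiring a moment's care is the tailed triangle, where one must notice that it is precisely the triangle edge \emph{opposite} to the attachment vertex that fails the criterion, which is what turns the naive count $4$ into $3$. Collecting the six counts — $P_4\mapsto 1$, $S_3\mapsto 3$, $C_4\mapsto 4$, tailed triangle $\mapsto 3$, $K_4\setminus\{e\}\mapsto 5$, $K_4\mapsto 6$ — gives exactly the divisors asserted, completing the proof.
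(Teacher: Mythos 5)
Your proof is correct and follows essentially the same reasoning the paper leaves implicit: the paper states the fact by inspection of Figure~\ref{fig:correction}, and your membership criterion (each of the two remaining vertices must be adjacent to at least one endpoint of $e$) together with the six-pattern case check is exactly the verification that figure encodes, including the one subtle case of the tailed triangle where the triangle edge opposite the attachment vertex fails.
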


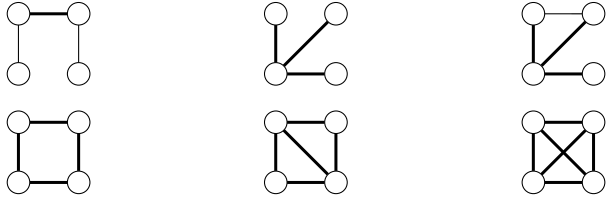
\begin{figure}
\centerline{
\setlength{\unitlength}{2mm}
\begin{tabular}{ccc} 
\hspace{3cm}  &  \hspace{3cm}  & \hspace{3cm} \\ 
\begin{picture}(0,0)(6,6)\node[Nh=1.5,Nw=1.5,Nmr=1.5](0)(4,5){}\node[Nh=1.5,Nw=1.5,Nmr=1.5](1)(8,5){}\node[Nh=1.5,Nw=1.5,Nmr=1.5](2)(4,9){}\node[Nh=1.5,Nw=1.5,Nmr=1.5](3)(8,9){}\drawedge[ATnb=0,AHnb=0](0,2){}\drawedge[ATnb=0,AHLength=0,ATLength=1, linewidth=.2](2,3){}\drawedge[ATnb=0,AHnb=0](3,1){}
\end{picture} &
\begin{picture}(0,0)(6,6)\node[Nh=1.5,Nw=1.5,Nmr=1.5](0)(4,5){}\node[Nh=1.5,Nw=1.5,Nmr=1.5](1)(8,5){}\node[Nh=1.5,Nw=1.5,Nmr=1.5](2)(4,9){}\node[Nh=1.5,Nw=1.5,Nmr=1.5](3)(8,9){}\drawedge[ATnb=0,AHnb=0,linewidth=.2](0,2){}\drawedge[ATnb=0,AHnb=0,linewidth=.2](3,0){}\drawedge[ATnb=0,AHnb=0,linewidth=.2](1,0){}
\end{picture}
  &
\begin{picture}(0,0)(6,6)\node[Nh=1.5,Nw=1.5,Nmr=1.5](0)(4,5){}\node[Nh=1.5,Nw=1.5,Nmr=1.5](1)(8,5){}\node[Nh=1.5,Nw=1.5,Nmr=1.5](2)(4,9){}\node[Nh=1.5,Nw=1.5,Nmr=1.5](3)(8,9){}\drawedge[ATnb=0,AHnb=0,linewidth=.2](0,2){}\drawedge[ATnb=0,AHLength=0,ATLength=1](2,3){}\drawedge[ATnb=0,AHnb=0,linewidth=.2](3,0){}\drawedge[ATnb=0,AHnb=0,linewidth=.2](1,0){}
\end{picture} \\
~\vspace{.6cm}\\
\begin{picture}(0,0)(6,6)\node[Nh=1.5,Nw=1.5,Nmr=1.5](0)(4,5){}\node[Nh=1.5,Nw=1.5,Nmr=1.5](1)(8,5){}\node[Nh=1.5,Nw=1.5,Nmr=1.5](2)(4,9){}\node[Nh=1.5,Nw=1.5,Nmr=1.5](3)(8,9){}\drawedge[ATnb=0,AHnb=0,linewidth=.2](0,2){}\drawedge[ATnb=0,AHLength=0,ATLength=1, linewidth=.2](2,3){}\drawedge[ATnb=0,AHnb=0,linewidth=.2](3,1){}\drawedge[ATnb=0,AHnb=0,linewidth=.2](1,0){}
\end{picture} &
\begin{picture}(0,0)(6,6)\node[Nh=1.5,Nw=1.5,Nmr=1.5](0)(4,5){}\node[Nh=1.5,Nw=1.5,Nmr=1.5](1)(8,5){}\node[Nh=1.5,Nw=1.5,Nmr=1.5](2)(4,9){}\node[Nh=1.5,Nw=1.5,Nmr=1.5](3)(8,9){}\drawedge[ATnb=0,AHnb=0,linewidth=.2](0,2){}\drawedge[ATnb=0,AHLength=0,ATLength=1, linewidth=.2](2,3){}\drawedge[ATnb=0,AHnb=0,linewidth=.2](3,1){}\drawedge[ATnb=0,AHnb=0,linewidth=.2](1,0){}\drawedge[ATnb=0,AHnb=0,linewidth=.2](2,1){}
\end{picture} &
\begin{picture}(0,0)(6,6)\node[Nh=1.5,Nw=1.5,Nmr=1.5](0)(4,5){}\node[Nh=1.5,Nw=1.5,Nmr=1.5](1)(8,5){}\node[Nh=1.5,Nw=1.5,Nmr=1.5](2)(4,9){}\node[Nh=1.5,Nw=1.5,Nmr=1.5](3)(8,9){}\drawedge[ATnb=0,AHnb=0,linewidth=.2](0,2){}\drawedge[ATnb=0,AHLength=0,ATLength=1, linewidth=.2](2,3){}\drawedge[ATnb=0,AHnb=0,linewidth=.2](3,1){}\drawedge[ATnb=0,AHnb=0,linewidth=.2](1,0){}\drawedge[ATnb=0,AHnb=0,linewidth=.2](2,1){} \drawedge[ATnb=0,AHnb=0,linewidth=.2](3,0){}
\end{picture}
\\
~
\end{tabular}
}
\caption{Patterns and its edges. The pattern is in $\ept$  of edges in bold. See Definition~\ref{ept}.}
\label{fig:correction}
\end{figure}


\begin{algorithm}[htb]
\LinesNumbered
\SetAlgoLined
\KwIn{Undirected graph $G(V,E)$}
\KwOut{ Histogram to 6 isomorphic patterns to motifs of size 4}
Create a histogram to count isomorphic patterns\\
\ForEach{$e\in E$}{
Calculate  variables $n^e_x,n^e_y,n^e_z,\mv{e}{x}{x},\mv{e}{x}{y},\mv{e}{x}{z},\mv{e}{y}{y},\mv{e}{y}{z},\mv{e}{z}{z}$.\\
 Calculate the frequency of $\ept$ using Table \ref{tab:ept}.\\
  For each pattern, add its frequency counter to histogram.\\
}
{\bf if  \rm  The histogram pattern is (See Fact~\ref{fact:correction}.):}\\
 \Indp {$x\leftrightarrow C_3$ or $S_3$: Divide the frequency counter by 3}\\
  {$C_4$:  Divide the frequency counter by 4}\\
  {$K_4\setminus\{e\}$: Divide the frequency counter by 5}\\
  {$K_4$: Divide the frequency counter by 6}\\
 \Indm   \Return{\rm The histogram.}
\caption{   Count 4 Sized Patterns Algorithm.}
\label{alg:und4}
\end{algorithm}

The Algorithm~\ref{alg:und4} counts 4-sized subgraphs grouped by isomorphic patterns. The complexity is dominated by Line 3, the time to 
calculate the needed variables. All other steps are $O(m)$. 

As in the previous section, there is no isomorphism detecting algorithm.
The histogram is represented by a vector $h$ with 6 positions.  The algorithm associates each pattern 
with an arbitrary \emph{hard coded} position in the vector. For instance, 
 patterns $(P_4,x\leftrightarrow C_3,S_3,C_4,K_4\setminus\{e\},K_4)$ may be related with $(h[0],h[1],h[2],h[3],h[4],h[5])$, respectively.
To sum the $\ept$ occurrences for a specific $e\in E$ it is sufficient to update the integer variables in the 
vector $h$ using Table~\ref{tab:ept} rule. The algorithm output is the histogram vector containing pattern frequencies.

The Algorithm~\ref{alg:vars} computes the needed variables, according to Line 3 of Algorithm~\ref{alg:und4}. 
The algorithm
has an $O(m)$ complexity for each $e\in E$. The variables  $n^e_{x}$, $n^e_{y}$ and $n^e_{z}$ 
are simpler so their calculus was omitted. Note that checking whether $x\in Z^e$ for an arbitrary vertex $x\in V$ and  edge $e=\{u,v\}$
is equivalent to checking whether $(x,u)\in E$ and $(x,v)\in E$, and can be performed in $O(1)$. Checking whether $x\in X^e$ and $x\in Y^e$  is a similar procedure.


\begin{algorithm}[htb]
\LinesNumbered
\SetAlgoLined
\KwIn{Undirected graph $G(V,E)$ and an edges $e\in E$}
\KwOut{Variables   $\{\mv{e}{x}{x},\ldots,\mv{e}{z}{z}\}$.}
 All variables in $\{\mv{e}{x}{x},\ldots,\mv{e}{z}{z}\}$ start with zero.\\
\ForEach{$(x,y)\in E$ }{
\begin{tabular}{lc}
	\hline
	\bf{If vertices~~~~~}  &  \bf{Var to increment} \\\hline
	 $x \in X^e$ and $y\in X^e$ & $\mv{e}{x}{x}$  \\
	 $x \in X^e$ and $y\in Y^e$ & $\mv{e}{x}{y}$\\
	 $x \in X^e$ and $y\in Z^e$ & $\mv{e}{x}{z}$\\
	 $x \in Y^e$ and $y\in Y^e$ & $\mv{e}{y}{y}$\\
	 $x \in Y^e$ and $y\in Z^e$ & $\mv{e}{y}{z}$\\
	 $x \in Z^e$ and $y\in Z^e$ & $\mv{e}{z}{z}$\\	 
	\hline 
	~\\
\end{tabular}
}
\Return{ \rm The computed variables.}
\caption{ Create $\{\mv{e}{x}{x},\ldots,\mv{e}{z}{z}\}$   variables for a given $e\in E$.}
\label{alg:vars}
\end{algorithm}

We can conclude that 
 Algorithm~\ref{alg:und4} counts 4-sized subgraphs grouped by isomorphic patterns
 in $O(m^2)$
in an undirected graph $G(V,E)$. Moreover, the additional memory to store the variables is $\Theta(m)$

\subsection{Counting isomorphic patterns of size 4 in \emph{directed} graphs}
\label{subsec:directed4}

No new concept is needed to extend the previous algorithm to the directed version. However, a large number of sets and  variables have to be dealt with. Variables and sets related to an edge $e$ are presented next.

Considering an edge $e=(u,v)$, it is possible to define 15 sets associated with it. 
$$\sett^e=\{A^e_1,B^e_1,C^e_1,A^e_2,B^e_2,C^e_2,AA^e,AB^e,AC^e,BA^e,BB^e,BC^e,CA^e,CB^e,CC^e\}$$ defined as follows (see Figure \ref{fig:edgesets}): $AA^e\gets \seta^u\cap\seta^v$, $AB^e\gets \seta^u\cap\setb^v$, $AC^e\gets \seta^u\cap\setc^v$, $BA^e\gets \setb^u\cap\seta^v$, $BB^e\gets \setb^u\cap\setb^v$, $BC^e\gets \setb^u\cap\setc^v$, $CA^e\gets \setc^u\cap\seta^v$, $CB^e\gets \setc^u\cap\setb^v$, and $CC^e\gets \setc^u\cap\setc^v$.

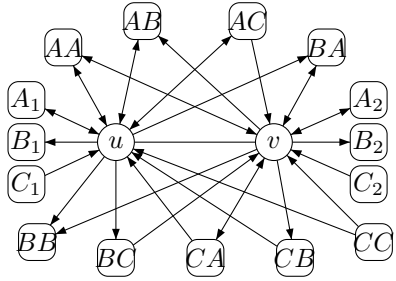
\begin{figure}[htb]
\begin{center}
\setlength{\unitlength}{.7mm}
\begin{picture}(80,35)(0,10)
\node[Nh=7,Nw=7,Nmr=2](C1)(5,-1){$C_1$} 
\node[Nh=7,Nw=7,Nmr=2](B1)(5,7){$B_1$} 
\node[Nh=7,Nw=7,Nmr=2](A1)(5,15){$A_1$} 
\node[Nh=7,Nw=7,Nmr=2](C2)(70,-1){$C_2$} 
\node[Nh=7,Nw=7,Nmr=2](B2)(70,7){$B_2$} 
\node[Nh=7,Nw=7,Nmr=2](A2)(70,15){$A_2$} 
\node[Nh=7,Nw=7,Nmr=2](AA)(12,25){$AA$} 
\node[Nh=7,Nw=7,Nmr=2](AB)(27,30){$AB$} 
\node[Nh=7,Nw=7,Nmr=2](AC)(47,30){$AC$} 
\node[Nh=7,Nw=7,Nmr=2](BA)(62,25){$BA$} 
\node[Nh=7,Nw=7,Nmr=2](BB)(7,-12){$BB$} 
\node[Nh=7,Nw=7,Nmr=2](BC)(22,-15){$BC$} 
\node[Nh=7,Nw=7,Nmr=2](CA)(39,-15){$CA$} 
\node[Nh=7,Nw=7,Nmr=2](CB)(56,-15){$CB$}
\node[Nh=7,Nw=7,Nmr=2](CC)(71,-12){$CC$}  
\node[Nh=7,Nw=7,Nmr=15](C)(22,7){$u$} 
\node[Nh=7,Nw=7,Nmr=15](D)(52,7){$v$} 
 \drawedge[ATnb=1,AHnb=1](A1,C){} 
 \drawedge[ATnb=1,AHnb=0](B1,C){} 
 \drawedge[ATnb=0,AHnb=1](C1,C){} 
 \drawedge[ATnb=1,AHnb=1](D,A2){} 
 \drawedge[ATnb=0,AHnb=1](D,B2){} 
 \drawedge[ATnb=1,AHnb=0](D,C2){} 
 \drawedge[ATnb=1,AHnb=1](C,AA){} 
 \drawedge[ATnb=1,AHnb=1](C,AB){} 
 \drawedge[ATnb=1,AHnb=1](C,AC){} 
 \drawedge[ATnb=0,AHnb=1](C,BA){} 
 \drawedge[ATnb=0,AHnb=1](C,BB){} 
 \drawedge[ATnb=0,AHnb=1](C,BC){} 
 \drawedge[ATnb=1,AHnb=0](C,CA){} 
 \drawedge[ATnb=1,AHnb=0](C,CB){} 
 \drawedge[ATnb=1,AHnb=0](C,CC){}
  \drawedge[ATnb=1,AHnb=1](D,AA){} 
 \drawedge[ATnb=0,AHnb=1](D,AB){} 
 \drawedge[ATnb=1,AHnb=0](D,AC){} 
 \drawedge[ATnb=1,AHnb=1](D,BA){} 
 \drawedge[ATnb=0,AHnb=1](D,BB){} 
 \drawedge[ATnb=1,AHnb=0](D,BC){} 
 \drawedge[ATnb=1,AHnb=1](D,CA){} 
 \drawedge[ATnb=0,AHnb=1](D,CB){} 
 \drawedge[ATnb=1,AHnb=0](D,CC){}
 \drawedge[ATnb=0,AHnb=0](C,D){}
    \end{picture}
    \vspace{2cm}
    \caption{Fifteen sets associated with an edge $e=(u,v)$.}
\label{fig:edgesets}
\end{center}
\setlength{\unitlength}{1mm}
\end{figure}

We also have sets $A^e_1\gets \seta^u\setminus(\vizinhos(v)\cup\{v\})$, $B^e_1\gets \setb^u\setminus(\vizinhos(v)\cup\{v\})$ and 
$C^e_1\gets \setc^u\setminus(\vizinhos(v)\cup\{v\})$. Finally, we have $A^e_2\gets \seta^v\setminus(\vizinhos(u)\cup\{u\})$, $B^e_2\gets \setb^v\setminus(\vizinhos(u)\cup\{u\})$ and 
$C^e_2\gets \setc^v\setminus(\vizinhos(u)\cup\{u\})$. 

The sets in $\sett^e$ make a partition of
$e$ adjacency.  For instance, a vertex in $v_1\in B_1^e$  belongs to an outside edge from $u$,
a vertex $v_2\in C_2^e$ belongs to an inside edge to $v$. A vertex in $v_3\in AB^e$ belongs to a bidirected edge 
to $u$ and an outside edge from $v$.

Given a set $T\in \sett^e$, we define $n^e_{T}$ as $|T|$.
Given two sets $T_i,T_j\in \sett$, we define $\mv{e}{T_i}{T_j}$ as the number of directed edges from $T_i$ to $T_j$ and 
$\mlv{e}{T_i}{T_j}$ as the number of bidirected edges between $T_i$ and $T_j$. In other words, for all $T_i,T_j\in \sett^e$,
$\mv{e}{T_i}{T_j}\gets |\outset(T_i,T_j)|$ and $\mlv{e}{T_i}{T_j}\gets |\inoutset(T_i,T_j)|$. Thus, if $T_i=AA^e$ and $T_j=AA^e$ then $\mv{e}{AA}{AA}$ is the number of directed edges inside $AA^e$. If $T_i=A_1^e$ and $T_j=A_2^e$, then $\mlv{e}{A_1}{A_2}$ is the number of bidirected edges between $A_1^e$ and $B_2^e$.

Preprocessing these variables is the core technique used to accelerate our algorithm. The variables are processed only once, then they are used to infer the occurrence of motifs. 





Consider an edge $e=(u,v)$ and its neighbor sets; for each $e\in E$, the algorithm will analyze and count the $\ept$ (see Definition \ref{ept}). The patterns containing edge $e$ and a vertex not linked to $e$ are ignored by the $\ept$ set.
Fortunately, all patterns are considered at least by one of its edges, as discussed in Figure~\ref{fig:correction}. Patterns
that are considered in more than one $\ept$ must to be corrected at the end of the algorithm as in the undirected case.

Consider a simple graph $G'$ as $G(V,E)$ induced in $\{u,v\}\cup\vizinhos(u)\cup\vizinhos(v)$. Consider no edges between $\vizinhos(u)$ and $\vizinhos(v)$. This graph is similar to Figure~\ref{fig:edgesets}.  
Note that the set $\ept$ contains vertices $\{u,v\}$ plus two vertices $\{v_1,v_2\}$ in $\vizinhos(u)\cup\vizinhos(v)$. 

Assume that $(u,v)$ is bidirected. To discover the pattern associated
with $\{u,v,v_1,v_2\}$, it is sufficient to know which sets in $\sett^e$ are associated with $v_1$ and $v_2$. For instance, if $v_1\in A_1$ and $v_2\in A_1$, the associated pattern is $S_3$.  If $v_1\in AA^e$ and $v_2\in AA^e$, the associated pattern is $K_4\setminus\{e\}$.  Let $\pt{T_i}{T_j}$ for all $T_i,T_j\in \sett^e$ be the pattern related to $\{u,v,v_1,v_2\}$,
where $e=(u,v)$ and $v_1\in T_i$ and $v_2\in T_j$. The  $\pt{T_i}{T_j}$ for all $T_i$ and $T_j$ is
shown in Table~\ref{table:P}.

%

The algorithm requires the following fact:

\begin{fact}
\label{fact:base}
Let $G'$ be any graph containing a bidirected edge $e=(u,v)$  more  vertices in $(u,v)$ adjacency.  Assume there are no edges in
$\delta(u)\cup\delta(v)$. If it is considered a pattern $\{u,v,v_1,v_2\}$, where $v_1,v_2$ belong to the same set 
$T\in \sett^e $, there are  ${\size{T}}\choose{2}$ occurrences of $\pt{T}{T}$. If $v_1\in T_i$ and $v_2\in T_j$ 
for distinct $T_i,T_j\in \sett^e$, there are
 $\size{T_i}\size{T_j}$ occurrences of $\pt{T_i}{T_j}$. More formally, the frequency
of pattern $P$, $freq(P)$, containing $\{u,v\}$ in $G'$ is
$$freq(P)=\sum_{T\in \sett^e:\pt{T}{T}=P}{{\size{T}}\choose{2}}+\sum_{T_i,T_j\in \sett^e:i<j,P=\pt{T_i}{T_j}}\size{T_i}\size{T_j}.$$
 \end{fact}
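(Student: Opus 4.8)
The plan is to prove Fact~\ref{fact:base} by a direct double-counting argument, since the hypothesis ``no edges in $\delta(u)\cup\delta(v)$'' removes all the complications (no induced edge among $v_1,v_2$ is possible, so the pattern on $\{u,v,v_1,v_2\}$ is completely determined by how each of $v_1,v_2$ attaches to $u$ and to $v$). First I would observe that $\sett^e$ is a partition of $adj(\{u,v\})$: the fifteen sets $A_1^e,\dots,CC^e$ are pairwise disjoint and their union is exactly $\vizinhos(u)\cup\vizinhos(v)\setminus\{u,v\}$ — this follows from the definitions ($A_1^e,B_1^e,C_1^e$ are the parts of $u$'s neighbourhood disjoint from $v$'s, symmetrically for $A_2^e,B_2^e,C_2^e$, and the nine ``product'' sets $XY^e=X^u\cap Y^v$ cover the common neighbours, using that $\{\seta^w,\setb^w,\setc^w\}$ partitions $\delta(w)$ for every $w$). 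Hence every $\ept$ element $\{u,v,v_1,v_2\}$ has $v_1$ and $v_2$ each lying in exactly one set of $\sett^e$.

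Next I would argue that the isomorphism type of the induced subgraph on $\{u,v,v_1,v_2\}$ is a function only of the (unordered) pair of sets containing $v_1$ and $v_2$. Given the bidirected edge $e=(u,v)$, the membership of $v_i$ in a particular $T\in\sett^e$ records exactly the edge type (absent, directed in, directed out, bidirected) between $v_i$ and $u$ and between $v_i$ and $v$; and by hypothesis there is no edge between $v_1$ and $v_2$. So all four of the adjacency relations among the vertices other than the fixed ones are pinned down, which means the induced subgraph — up to isomorphism — depends only on $\{T_i,T_j\}$. This is precisely what the notation $\pt{T_i}{T_j}$ in Table~\ref{table:P} encodes; I would reference Table~\ref{table:P} for the concrete correspondence (e.g.\ $v_1,v_2\in A_1^e$ gives $S_3$, $v_1,v_2\in AA^e$ gives $K_4\setminus\{e\}$) rather than re-deriving it.

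Then the counting is a routine summation. Partition the set of $\ept$ elements $\{u,v,v_1,v_2\}$ according to which members of $\sett^e$ contain $v_1$ and $v_2$. If both land in the same set $T$, the number of ways to choose the unordered pair $\{v_1,v_2\}$ is $\binom{\size{T}}{2}$, and each such choice contributes one occurrence of $\pt{T}{T}$. If they land in distinct sets $T_i,T_j$, the number of unordered pairs is $\size{T_i}\size{T_j}$ (choosing one vertex from each), contributing $\pt{T_i}{T_j}$. Grouping these contributions by the resulting pattern $P$ and summing gives exactly
$$freq(P)=\sum_{T\in \sett^e:\pt{T}{T}=P}{{\size{T}}\choose{2}}+\sum_{T_i,T_j\in \sett^e:i<j,P=\pt{T_i}{T_j}}\size{T_i}\size{T_j},$$
which is the claimed formula; imposing $i<j$ in the second sum is just the bookkeeping device to count each unordered pair of distinct sets once.

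The only real obstacle is not mathematical depth but verification breadth: one must be sure Table~\ref{table:P} is complete and correct, i.e.\ that for every one of the $\binom{15}{2}+15=120$ cases the stated pattern is genuinely the isomorphism type of the induced subgraph on $\{u,v,v_1,v_2\}$ under the ``$(u,v)$ bidirected, no edge $v_1v_2$'' regime. I would not grind through all 120 cases in the proof; instead I would note that each case is mechanical (read off the four edge types from the set names) and defer the full enumeration to Table~\ref{table:P}, treating its correctness as the content that makes the double-counting argument go through.
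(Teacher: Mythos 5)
Your argument is correct and is exactly the justification the paper intends: the paper states this as a \emph{Fact} without an explicit proof (it is later invoked ``by construction'' in the proof of Theorem~\ref{lema4d}), and your double-counting — $\sett^e$ partitions $adj(\{u,v\})$, the absence of edges in the adjacency pins the induced subgraph down to the unordered pair of sets containing $v_1$ and $v_2$, and unordered pairs are counted by $\binom{\size{T}}{2}$ within a set and $\size{T_i}\size{T_j}$ across sets — simply makes that evident argument explicit. Deferring the $120$-case verification of Table~\ref{table:P} is also appropriate, since the formula holds for whatever map $\pt{T_i}{T_j}$ the construction defines.
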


It is necessary to define variations of matrix $\pt{T_i}{T_j}$.  If a directed edge $(v_1,v_2)$ is added in $(u,v)$ adjacency, where $v_1\in T_i$ and $v_2\in T_j$, one pattern is removed and one pattern is created. The removed pattern is  defined as $\pt{T_i}{T_j}$ and the created pattern 
is defined as $\pvai{T_i}{T_j}$. If edge $(v_1,v_2)$ is bidirected, the created pattern is defined as $\pvaivolta{T_i}{T_j}$. If $v_1\in T_j$ and $v_2\in T_i$, the created pattern is $\pvolta{T_i}{T_j}$. Figure~\ref{fig:variations}
shows the patterns created when an edge is added between $T_i=A_1^e$ and $T_j=AA$. There is a straightforward
generalization to other possibilities of $T_i$ and $T_j$.

\begin{figure}
\centerline{
\setlength{\unitlength}{3mm}
\begin{tabular}{ccccc} 
\hspace{3cm}  &  \hspace{3cm}  & \hspace{3cm} \\ 
Table~\ref{table:P} ($A_1^e,AA^e)$ & $\pt{A_1^e}{AA^e}$ & $\pvai{A_1^e}{AA^e}$ & $\pvolta{A_1^e}{AA^e}$ & $\pvaivolta{A_1^e}{AA^e}$\\
~\vspace{.6cm}\\
\begin{picture}(0,0)(6,6)\node[Nh=1.6,Nw=1.6,Nmr=2](0)(4,5){$u$}\node[Nh=1.6,Nw=1.6,Nmr=2](1)(8,5){$v$}\node[Nh=1.6,Nw=1.6,Nmr=2](2)(4,9){$v_1$}\node[Nh=1.6,Nw=1.6,Nmr=2](3)(8,9){$v_2$}\drawedge[ATnb=1,AHnb=1,linewidth=.1,AHLength=.7,ATLength=.7](0,2){}\drawedge[ATnb=1,AHLength=0,ATLength=0, linewidth=.1,linecolor=blue](2,3){}\drawedge[ATnb=1,AHnb=1,linewidth=.1,AHLength=.7,ATLength=.7](3,0){}\drawedge[ATnb=1,AHnb=1,linewidth=.1,AHLength=.7,ATLength=.7](1,0){} \drawedge[ATnb=1,AHnb=1,linewidth=.1,AHLength=.7,ATLength=.7](1,3){}
\end{picture} &
\begin{picture}(0,0)(6,6)\node[Nh=1.6,Nw=1.6,Nmr=2](0)(4,5){$u$}\node[Nh=1.6,Nw=1.6,Nmr=2](1)(8,5){$v$}\node[Nh=1.6,Nw=1.6,Nmr=2](2)(4,9){$v_1$}\node[Nh=1.6,Nw=1.6,Nmr=2](3)(8,9){$v_2$}\drawedge[ATnb=1,AHnb=1,linewidth=.1,AHLength=.7,ATLength=.7](0,2){}\drawedge[ATnb=1,AHnb=1,linewidth=.1,AHLength=.7,ATLength=.7](3,0){}\drawedge[ATnb=1,AHnb=1,linewidth=.1,AHLength=.7,ATLength=.7](1,0){}
\drawedge[ATnb=1,AHnb=1,linewidth=.1,AHLength=.7,ATLength=.7](1,3){}\end{picture}  &
\begin{picture}(0,0)(6,6)\node[Nh=1.6,Nw=1.6,Nmr=2](0)(4,5){$u$}\node[Nh=1.6,Nw=1.6,Nmr=2](1)(8,5){$v$}\node[Nh=1.6,Nw=1.6,Nmr=2](2)(4,9){$v_1$}\node[Nh=1.6,Nw=1.6,Nmr=2](3)(8,9){$v_2$}\drawedge[ATnb=1,AHnb=1,linewidth=.1,AHLength=.7,ATLength=.7](0,2){}\drawedge[ATnb=1,AHLength=.7,ATLength=0, linewidth=.1](2,3){}\drawedge[ATnb=1,AHnb=1,linewidth=.1,AHLength=.7,ATLength=.7](3,0){}\drawedge[ATnb=1,AHnb=1,linewidth=.1,AHLength=.7,ATLength=.7](1,0){}
\drawedge[ATnb=1,AHnb=1,linewidth=.1,AHLength=.7,ATLength=.7](1,3){}\end{picture}  &
\begin{picture}(0,0)(6,6)\node[Nh=1.6,Nw=1.6,Nmr=2](0)(4,5){$u$}\node[Nh=1.6,Nw=1.6,Nmr=2](1)(8,5){$v$}\node[Nh=1.6,Nw=1.6,Nmr=2](2)(4,9){$v_1$}\node[Nh=1.6,Nw=1.6,Nmr=2](3)(8,9){$v_2$}\drawedge[ATnb=1,AHnb=1,linewidth=.1,AHLength=.7,ATLength=.7](0,2){}\drawedge[ATnb=1,AHnb=1,linewidth=.1,AHLength=0,ATLength=.7](2,3){}\drawedge[ATnb=1,AHnb=1,linewidth=.1,AHLength=.7,ATLength=.7](3,0){}\drawedge[ATnb=1,AHnb=1,linewidth=.1,AHLength=.7,ATLength=.7](1,0){}
\drawedge[ATnb=1,AHnb=1,linewidth=.1,AHLength=.7,ATLength=.7](1,3){}\end{picture} &
\begin{picture}(0,0)(6,6)\node[Nh=1.6,Nw=1.6,Nmr=2](0)(4,5){$u$}\node[Nh=1.6,Nw=1.6,Nmr=2](1)(8,5){$v$}\node[Nh=1.6,Nw=1.6,Nmr=2](2)(4,9){$v_1$}\node[Nh=1.6,Nw=1.6,Nmr=2](3)(8,9){$v_2$}\drawedge[ATnb=1,AHnb=1,linewidth=.1,AHLength=.7,ATLength=.7](0,2){}\drawedge[ATnb=1,AHnb=1,linewidth=.1,AHLength=.7,ATLength=.7](2,3){}\drawedge[ATnb=1,AHnb=1,linewidth=.1,AHLength=.7,ATLength=.7](3,0){}\drawedge[ATnb=1,AHnb=1,linewidth=.1,AHLength=.7,ATLength=.7](1,0){}
\drawedge[ATnb=1,AHnb=1,linewidth=.1,AHLength=.7,ATLength=.7](1,3){}\end{picture} \\
~
\end{tabular}
}
\caption{Variations of matrix $\pt{T_i}{T_j}$ for $T_i=A_1^e$ and $T_j=AA^e$. }
\label{fig:variations}
\end{figure}
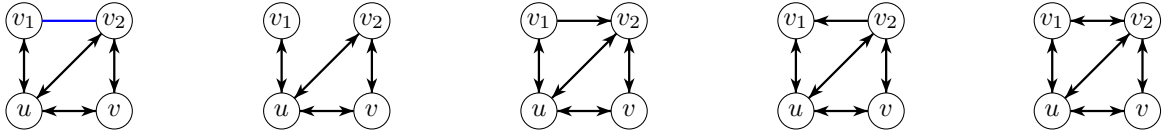

The following theorem is used by the algorithm.

\begin{theorem}\label{lema4d}
Let $G(V,E)$ be a general directed graph and $e=(u,v)$ a bidirected edge in $E$. The patterns occurrences in set $\ept$
is given by the following sum:\\
Start all frequency patterns as zero.\\
\ForEach{$T\in \sett^e$}{
 Increase  $\pt{T}{T}$ occurrence by ${\size{T}\choose 2}-\m{T}{T}-\ml{T}{T}$\\
 Increase  $\pvaivolta{T}{T}$ occurrence by $\ml{T}{T}$\\
 Increase  $\pvai{T}{T}$ occurrence by $\m{T}{T}$
}
\ForEach{\rm $T_i,T_j\in \sett^e, i<j$}{
  Increase   $\pt{T_i}{T_j}$ occurrence by ${\size{T_i}\size{T_j}}-\ml{T_i}{T_j}-\m{T_i}{T_j}-\m{T_j}{T_i}$\\
  Increase  $\pvaivolta{T_i}{T_j}$ occurrence by $\ml{T_i}{T_j}$\\
  Increase  $\pvai{T_i}{T_j}$ occurrence by $\m{T_i}{T_j}$\\
  Increase	  $\pvolta{T_i}{T_j}$ occurrence by $\m{T_j}{T_i}$
}
\end{theorem}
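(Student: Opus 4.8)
The plan is to follow the inductive template already used for Theorems~\ref{teo3} and~\ref{teo4u}, this time inducting on the number of edges lying strictly inside $adj(\{u,v\})$, that is, edges with neither endpoint equal to $u$ or $v$. First I would observe that every element of $\ept$ is an induced subgraph on a four-set $\{u,v,v_1,v_2\}$ with $v_1,v_2\in adj(\{u,v\})$, and that its isomorphism type is the same whether it is read off $G$ or off the subgraph $G'$ induced by $\{u,v\}\cup adj(\{u,v\})$; hence it suffices to establish the claimed sum for $G'$, taking $e=(u,v)$ bidirected and the fifteen sets of $\sett^e$ partitioning $adj(\{u,v\})$ as fixed data.

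For the base case I would take the subgraph $G'_0$ on the same vertex set as $G'$ but retaining only the edges incident to $u$ or to $v$, i.e.\ the configuration of Figure~\ref{fig:edgesets}. In $G'_0$ every variable $\m{T_i}{T_j}$ and $\ml{T_i}{T_j}$ equals zero, so the sum in the theorem collapses to ``increase $\pt{T}{T}$ by ${\size{T}\choose 2}$'' and ``increase $\pt{T_i}{T_j}$ by $\size{T_i}\size{T_j}$'', which is exactly the frequency count supplied by Fact~\ref{fact:base}. The fact that makes this work --- already built into the definition of $\sett^e$ and of the matrix $\pt{T_i}{T_j}$ --- is that, since $e$ is bidirected and each set of $\sett^e$ prescribes the edge type to both $u$ and $v$, the induced subgraph of $G'_0$ on $\{u,v,v_1,v_2\}$ is determined solely by which sets contain $v_1$ and $v_2$.

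For the inductive step I would rebuild $G'$ from $G'_0$ by inserting, one unordered pair $\{v_1,v_2\}$ at a time, whichever relation between $v_1\in T_i$ and $v_2\in T_j$ actually occurs in $G'$ (a directed edge $v_1\to v_2$, a directed edge $v_2\to v_1$, or a bidirected edge). Because both endpoints lie in $adj(\{u,v\})$, the only four-set containing $u$ and $v$ whose induced subgraph changes is $\{u,v,v_1,v_2\}$ itself: before the insertion it realises the pattern $\pt{T_i}{T_j}$, and afterwards it realises $\pvai{T_i}{T_j}$, $\pvolta{T_i}{T_j}$, or $\pvaivolta{T_i}{T_j}$ respectively, by the definitions illustrated in Figure~\ref{fig:variations}. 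Simultaneously exactly one of $\m{T_i}{T_j}$, $\m{T_j}{T_i}$, $\ml{T_i}{T_j}$ increases by one and nothing else changes, so the unit decrease of the $\pt{T_i}{T_j}$ term and the unit increase of the matching $\pvai{T_i}{T_j}$, $\pvolta{T_i}{T_j}$, or $\pvaivolta{T_i}{T_j}$ term keep both sides of the identity equal; the case $i=j$ is handled identically once one notes that a directed edge inside a single set $T$ produces the pattern $\pvai{T}{T}$ irrespective of its orientation and is accounted for by $\m{T}{T}$. Iterating over all internal edges yields the statement for $G'$, and hence for $\ept$.

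The only genuinely delicate point --- everything else being bookkeeping --- is the claim that inserting a single internal edge changes exactly one pattern, and changes it precisely in the way encoded by the matrix $\pt{T_i}{T_j}$ and its three decorated versions. This reduces to checking that the pair of sets containing $v_1$ and $v_2$, together with the relation between $v_1$ and $v_2$, completely determines the induced subgraph on $\{u,v,v_1,v_2\}$ --- which, once more, follows from $e$ being bidirected and from the naming convention for $\sett^e$. The residual work is then the purely mechanical verification that Table~\ref{table:P} and its three variant matrices are filled in consistently with these conventions.
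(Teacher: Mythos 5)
Your proposal is correct and follows essentially the same route as the paper: induction on the number of edges inside $adj(\{u,v\})$, with the edgeless configuration of Fact~\ref{fact:base} as the base case and each inserted internal edge trading one $\pt{T_i}{T_j}$ occurrence for the matching decorated variant in lockstep with the unit increment of the corresponding $m$ or $m'$ variable. Your additional remarks --- that only the four-set $\{u,v,v_1,v_2\}$ is affected by the insertion, and that the within-set case collapses to a single $\pvai{T}{T}$ entry by symmetry --- merely make explicit what the paper's proof leaves implicit.
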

\begin{proof}
This theorem is also proved by induction in the number of edges. 
Let $G'$ be the graph $G$ induced by $\{u,v\}\cup adj(\{u,v\}).$ Suppose that $(u,v)$ is bidirected.
The basic case is if $G'$ does not contain edge in $adj(\{u,v\})$. In this case, the $\ept$  frequencies are given, by construction, by Fact~\ref{fact:base}. The proposed sum is equal to Fact~\ref{fact:base} if all $m^e$ variables are zero, which is the case for the graph at issue.

If one directed edge $(v_1,v_2)$ is added into $T\in \sett^e$, one occurrence of $\pt{T}{T}$ is removed
and one occurrence of $\pvai{T}{T}$ is added. If one bidirected edge $(v_1,v_2)$ is added into $T\in \sett^e$, one occurrence of $\pt{T}{T}$ is removed
and one occurrence of $\pvaivolta{T}{T}$ is added.

If one directed edge $(v_1,v_2)$ is added into two distinct sets $T_i,T_j\in \sett^e$, one occurrence of $\pt{T_i}{T_j}$ is removed
and one occurrence of $\pvai{T_i}{T_j}$ is added. If the added edge is $(v_2,v_1)$, the incremented pattern occurrence
is $\pvolta{T_i}{T_j}$.  If $(v_1,v_2)$ is bidirected, the incremented occurrence is $\pvaivolta{T_i}{T_j}$.
\end{proof}

If the edge $e=(u,v)$ is directed, the pattern associated with $\{u,v,v_1,v_2\}$ must replace the bidirected edge
$(u,v)$ by  a directed one. In this case, the new patterns for $v_1\in T_i$ and $v_2\in T_j$ are represented by $\ptd{T_i}{T_j}$, $\pvaid{T_i}{T_j}$, $\pvoltad{T_i}{T_j}$, $\pvaivoltad{T_i}{T_j}$ instead of $\pt{T_i}{T_j}$, $\pvai{T_i}{T_j}$, $\pvolta{T_i}{T_j}$, $\pvaivolta{T_i}{T_j}$. The results are the same for a directed edge $(u,v)$.

\begin{corollary} \label{coro:a}
Let $G(V,E)$ be a general directed graph and $e=(u,v)$ a {\bf directed} edge in $E$. The patterns occurrences in set $\ept$ can be calculated analogous to Theorem~\ref{lema4d}, but using $\ptd{T_i}{T_j}$, $\pvaid{T_i}{T_j}$, $\pvoltad{T_i}{T_j}$, $\pvaivoltad{T_i}{T_j}$ instead of $\pt{T_i}{T_j}$, $\pvai{T_i}{T_j}$, $\pvolta{T_i}{T_j}$, $\pvaivolta{T_i}{T_j}$ for any $T_i,T_j\in\sett^e$.
\end{corollary}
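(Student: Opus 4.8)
The plan is to carry over, almost verbatim, the inductive argument of Theorem~\ref{lema4d}, replacing each of $\pt{\cdot}{\cdot}$, $\pvai{\cdot}{\cdot}$, $\pvolta{\cdot}{\cdot}$, $\pvaivolta{\cdot}{\cdot}$ by its primed counterpart $\ptd{\cdot}{\cdot}$, $\pvaid{\cdot}{\cdot}$, $\pvoltad{\cdot}{\cdot}$, $\pvaivoltad{\cdot}{\cdot}$. The only structural difference from Theorem~\ref{lema4d} is that the central edge $e=(u,v)$ is directed instead of bidirected, and the primed notation was introduced precisely to absorb this change: once we know the orientation of $(u,v)$ and the sets of $\sett^e$ to which $v_1$ and $v_2$ belong, the induced subgraph on $\{u,v,v_1,v_2\}$ with no edge inside $adj(\{u,v\})$ is, by definition, $\ptd{T_i}{T_j}$.

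First I would establish the primed analogue of Fact~\ref{fact:base}. If $G'$ is $G$ restricted to $\{u,v\}\cup adj(\{u,v\})$ with every edge inside $adj(\{u,v\})$ deleted and $(u,v)$ directed, then each unordered pair $\{v_1,v_2\}$ with $v_1\in T_i$ and $v_2\in T_j$ contributes exactly one occurrence of $\ptd{T_i}{T_j}$; counting pairs inside a single set and across two sets gives, for each pattern $P$, the base-case total $freq(P)=\sum_{T\in\sett^e:\ptd{T}{T}=P}{\size{T}\choose 2}+\sum_{T_i,T_j\in\sett^e:\,i<j,\,\ptd{T_i}{T_j}=P}\size{T_i}\size{T_j}$, which is exactly the sum in the statement with all $m^e$-variables equal to zero. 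Then I would run the induction on the edges inside $adj(\{u,v\})$: inserting a directed edge $(v_1,v_2)$ with $v_1\in T_i$, $v_2\in T_j$ changes the induced subgraph on $\{u,v,v_1,v_2\}$ from $\ptd{T_i}{T_j}$ to $\pvaid{T_i}{T_j}$ and affects no other $4$-set containing $e$, so one occurrence of $\ptd{T_i}{T_j}$ is removed and one of $\pvaid{T_i}{T_j}$ is added; the symmetric edge $(v_2,v_1)$ yields $\pvoltad{T_i}{T_j}$, a bidirected $\{v_1,v_2\}$ yields $\pvaivoltad{T_i}{T_j}$, and insertions inside one set $T$ yield the $\pvaid{T}{T}$ and $\pvaivoltad{T}{T}$ cases. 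The invariants that make this work are that the partition $\sett^e$ and the sizes $\size{T}$ depend only on edges incident to $u$ or $v$ (which are never modified), while $\m{T_i}{T_j}$ and $\ml{T_i}{T_j}$ are by definition the running counts of exactly the inserted edges; summing over all inserted edges turns the base-case total into the claimed formula.

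The one place that needs genuine care — and which I expect to be the main, though purely clerical, obstacle — is checking that each of the four primed patterns is well defined for all $15\times15$ ordered pairs $(T_i,T_j)$ and that ``replace the central bidirected edge by a directed edge'' commutes with each of the four edge-insertion operations on $\{v_1,v_2\}$; equivalently, that the primed analogue of Table~\ref{table:P} and its three variants are obtained from the unprimed ones by that single substitution. This is a finite verification, slightly longer than in the bidirected case because $u$ and $v$ now play asymmetric roles (tail versus head of $e$), so the tables are no longer invariant under exchanging $u$ and $v$; but it introduces no new idea beyond Theorem~\ref{lema4d}, and the corollary follows.
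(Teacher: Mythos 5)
Your proposal is correct and matches the paper's treatment: the paper itself disposes of this corollary by observing that the argument of Theorem~\ref{lema4d} (base case from the primed analogue of Fact~\ref{fact:base}, then induction on edges inserted inside $adj(\{u,v\})$) goes through verbatim once every pattern matrix is replaced by its primed counterpart. Your additional remark that the $u$--$v$ asymmetry breaks the symmetry of the primed version of Table~\ref{table:P} is a fair clerical caveat but introduces nothing beyond what the paper intends.
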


Algorithm \ref{alg:mine} is used  to count patterns of size 4 by summing the $\ept$ for all $e\in E$ and make a correction if the same induced  has been  was considered many times.

\begin{algorithm}[htb]
\LinesNumbered
\SetAlgoLined
\KwIn{Directed graph $G(V,E)$}
\KwOut{ Histogram to 199 isomorphic patterns to motifs of size 4}
Create a histogram to count isomorphic patterns\\
\ForEach{$e\in E$}{
Calculate the sets $X^e$ for all $X^e\in \sett^e$.\\
Calculate the variables $\m{x}{y}$ and $\ml{x}{y}$ for all $X^e,Y^e\in \sett^e$.\\
 Calculate the frequency involving $e$ and two neighbors using Lemma~\ref{lema4d}\\
 or Corollary~\ref{coro:a}.\\
For each pattern, add this frequency counter to histogram.
}
{\bf if  \rm  the the pattern is related to:}\\
 \Indp {$x\leftrightarrow C_3$ or $S_3$: Divide the frequency counter by 3}\\
  {$C_4$:  Divide the frequency counter by 4}\\
  {$K_4\setminus\{e\}$: Divide the frequency counter by 5}\\
  {$K_4$: Divide the frequency counter by 6}\\
 \Indm   
 \Return{\rm The histogram.}
\caption{ Count 4 Sized Patterns Algorithm.}
\label{alg:mine}
\end{algorithm}

As in the undirected case, no isomorphism-detecting processing is used. The resultant histogram is represented by a vector of integers $h$ with 199 position, one for each distinct
isomorphic pattern of size 4. It is necessary  to preprocess matrices $\pt{}{}$, $\pvai{}{}$, etc., associating each pattern with an arbitrary position in $h$. For instance, it is possible to set the
$\pt{A1}{A2}$ to $h[0]$. The $\pt{A_1}{A_1}$ and $\pt{A_2}{A_2}$ are isomorphic so they can both be associated with $h[1]$.  As a final result, each pattern in the matrices must be \emph{hard coded} in association with a position in the vector $h$ of the histogram, which will be the program output.

The complexity of the algorithm is dominated  by lines 3 and 4, since all other lines are $O(m)$.
We argue that an algorithm similar to  Algorithm~\ref{alg:vars}
in Section~\ref{sec:4u}, can calculate the needed variables in $O(m^2)$.  Thus, it is possible to conclude that the 
proposed algorithm is an $O(m^2)$ algorithm to calculate motifs of size 4 in a directed graph.

\subsection{Counting 5-sized patterns in \emph{undirected} graphs}
\label{subsec:undirected5}


In this section the strategy is extended to motifs of size 5. The following definition is required:

\begin{definition}[$\ppt$]
\label{ppt}
Let $P_3$ be the path graph with three vertices  $p_1,p_2,p_3\in V$.
Given an undirected graph $G(V,E)$, we define a $\ppt$ for any $P_3$ induced in $G(V,E)$ as a set of patterns with five vertices, $\{p_1,p_2, p_3,v_1,v_2\}$, where   $v_1$ and $v_2$  are vertices in $\adj$. The $\ppt$  set has patterns with
an induced  $P_3$  plus two vertices in its adjacency. 
\end{definition}

\begin{definition}[$\kpt$]
\label{kpt}
Let $K_3$ be the clique with vertices $p_1,p_2,p_3\in V$.
Given an undirected graph $G(V,E)$, we define a $\kpt$ for any $K_3$ induced in $G(V,E)$ as a set of patterns with five vertices, $\{p_1,p_2, p_3,v_1,v_2\}$, where   $v_1$ and $v_2$  are in $\adj$. The $\kpt$  set has patterns with
an induced  $K_3$  plus two vertices in its adjacency.  
\end{definition}

The approach to count patterns with five vertices is to count $\ppt$ and $\kpt$ patterns to all $P_3$ and $K_3$ induced 
in $G(V,E)$.  The time to compute the patterns frequency in $\ppt$ and $\kpt$ sets is $O(1)$, considering some preprocessed variables.

The adjacency of $K_3$ and $P_3$ will be partitioned in seven groups.  See Figure \ref{figure:starfive}.
 Let $K_3$ and $P_3$ be composed 
by $(p_1,p_2,p_3)$ and assume that $p_2$ is the central vertex in the $P_3$ case.

Let us define $Z=\vizinhos(p_1)\cap\vizinhos(p_2)\cap\vizinhos(p_3)\setminus\{p_1,p_2,p_3\}$ as the vertices adjacent to $p_1$, $p_2$ and $p_3$,
$X_1=\vizinhos(p_1)\setminus (\vizinhos(p_2)\cup\vizinhos(p_3))$ the vertices only in  $p_1$ adjacency,
$X_2=\vizinhos(p_2)\setminus (\vizinhos(p_1)\cup\vizinhos(p_3))$ the vertices only in  $p_2$ adjacency, and
$X_3=\vizinhos(p_3)\setminus (\vizinhos(p_1)\cup\vizinhos(p_2))$ the vertices only in  $p_3$ adjacency.
Let  $Y_{12}=\vizinhos(p_1)\cap\vizinhos(p_2)\setminus (Z\cup p_3)$ be the vertices only in  $p_1$ and $p_2$ adjacency,
$Y_{13}=\vizinhos(p_1)\cap\vizinhos(p_3)\setminus (Z\cup p_2)$ be the vertices only in  $p_1$ and $p_3$ adjacency, and
$Y_{23}=\vizinhos(p_2)\cap\vizinhos(p_3)\setminus (Z\cup p_1)$  be the vertices only in  $p_2$ and $p_3$ adjacency.
Let $\mathcal{Q}=\{X_1,X_2,X_3,Y_{12},Y_{13},Y_{23},Z\}$.
We define  $\size{x_1}$, $\size{x_2}$ and $\size{x_3}$ as the sizes $|X_1|$, $|X_2|$ and $|X_3|$, respectively, 
  $\size{y_{12}}$, $\size{y_{13}}$ and $\size{y_{23}}$ as the sizes $|Y_{12}|$, $|Y_{13}|$ and $|Y_{23}|$, respectively,
 and $\size{z}$ as the size of $|Z|$.  Let $\mv{}{S_1}{S_2}$ be the number of edges between sets $S_1$ and $S_2$ for all
 $S_1,S_2\in \mathcal{Q}$. 
 
  Table \ref{table:starfive} contains the $\kpt$ frequencies for the graph in Figure~\ref{figure:starfive}.

\begin{figure}[htb]
\centerline{
\setlength{\unitlength}{.7mm}
\begin{picture}(80,45)(0,10)
\node[Nh=3,Nw=3,Nmr=15](B1)(5,-10){} 
\node[Nh=3,Nw=3,Nmr=15](B2)(5,-5){} 
\node[Nh=3,Nw=3,Nmr=15](B3)(5,0){} 
\node[Nh=3,Nw=3,Nmr=15](B4)(5,5){} 
\node[Nh=3,Nw=3,Nmr=15](C3)(70,-10){} 
\node[Nh=3,Nw=3,Nmr=15](C4)(70,-5){} 
\node[Nh=3,Nw=3,Nmr=15](C5)(70,0){} 
\node[Nh=3,Nw=3,Nmr=15](C6)(70,5){} 
\node[Nh=3,Nw=3,Nmr=15](D1)(45,40){} 
\node[Nh=3,Nw=3,Nmr=15](D2)(40,40){} 
\node[Nh=3,Nw=3,Nmr=15](D3)(35,40){} 
\node[Nh=3,Nw=3,Nmr=15](D4)(30,40){} 
\node[Nh=7,Nw=7,Nmr=15](P2)(37,25){$p_2$} 
\node[Nh=7,Nw=7,Nmr=15](P1)(22,7){$p_1$} 
\node[Nh=7,Nw=7,Nmr=15](P3)(52,7){$p_3$} 
\node[Nh=10,Nw=15,Nmr=15](Y23)(62,22){$Y_{23}$} 
\node[Nh=10,Nw=15,Nmr=15](Y12)(12,22){$Y_{12}$} 
\node[Nh=10,Nw=15,Nmr=15](Y13)(37,-5){$Y_{13}$} 
\node[Nh=7,Nw=10,Nmr=15](Z)(37,14){$Z$} 
 \drawedge[ATnb=0,AHnb=0](B1,P1){} 
 \drawedge[ATnb=0,AHnb=0](B2,P1){} 
 \drawedge[ATnb=0,AHnb=0](B3,P1){} 
 \drawedge[ATnb=0,AHnb=0](P1,B4){} 
 \drawedge[ATnb=0,AHnb=0](P3,C3){} 
 \drawedge[ATnb=0,AHnb=0](P3,C4){} 
 \drawedge[ATnb=0,AHnb=0](P3,C5){} 
 \drawedge[ATnb=0,AHnb=0](P3,C6){} 
 \drawedge[ATnb=0,AHnb=0](P3,P2){} 
  \drawedge[ATnb=0,AHnb=0](P3,P2){} 
  \drawedge[ATnb=0,AHnb=0](D1,P2){} 
  \drawedge[ATnb=0,AHnb=0](D2,P2){} 
  \drawedge[ATnb=0,AHnb=0](D3,P2){} 
  \drawedge[ATnb=0,AHnb=0](D4,P2){}
  \drawedge[ATnb=0,AHnb=0](P1,P2){}
   \drawedge[ATnb=0,AHnb=0](P1,Y12){}
   \drawedge[ATnb=0,AHnb=0](P2,Y12){}
   \drawedge[ATnb=0,AHnb=0](P2,Y23){}
   \drawedge[ATnb=0,AHnb=0](P3,Y23){}
   \drawedge[ATnb=0,AHnb=0](P1,Y13){}
   \drawedge[ATnb=0,AHnb=0](P3,Y13){}   
   \drawedge[ATnb=0,AHnb=0](P1,Z){}   
    \drawedge[ATnb=0,AHnb=0](P2,Z){}   
    \drawedge[ATnb=0,AHnb=0](P3,Z){}           
 \drawedge[ATnb=0,AHnb=0,dash={1.5}0](C,D){}  
    \nodelabel[ExtNL=y,NLangle=180,NLdist=3](B3){$X_1$}
  \nodelabel[ExtNL=y,NLangle=0,NLdist=3](C5){$X_3$}
    \nodelabel[ExtNL=y,NLangle=90,NLdist=5](D3){$X_2$}
    \end{picture}~\vspace{2cm}
}
\caption{Sets associated with $P_3$ or $K_3$. If the dashed line is considered, $\{p_1,p_2,p_3\}$ is a $K_3$, otherwise, it is a $P_3$.}
\label{figure:starfive}
\setlength{\unitlength}{1mm}
\end{figure}
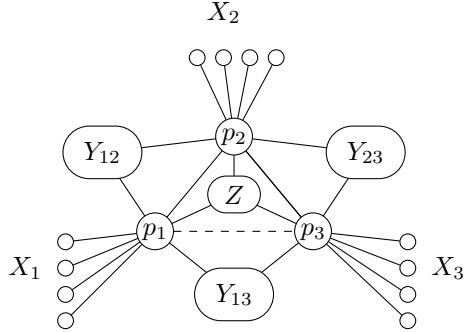

The algorithm to count isomorphic patterns derives from the following theorem.

\begin{theorem}\label{teo5u}
Let $G(V,E)$ be a general undirected graph. The pattern occurrence in set $\kpt$, for any induced clique $(p_1,p_2,p_3)$,
is given by Table~\ref{table:histok}. For any induced $P_3=(p_1,p_2,p_3)$, the pattern occurrence in $\ppt$
is given by Table~\ref{table:histop}. 
%
\end{theorem}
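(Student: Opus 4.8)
The plan is to prove both tables at once by induction on the number of edges lying inside $\adj$, mirroring the arguments of Theorems~\ref{teo3} and~\ref{teo4u}. I fix an induced $P_3$ or $K_3$ on $(p_1,p_2,p_3)$ and let $G'$ be the subgraph of $G$ induced by $\{p_1,p_2,p_3\}\cup\adj$; since every member of $\ppt$ (resp.\ $\kpt$) is an induced subgraph on a set $\{p_1,p_2,p_3,v_1,v_2\}$ with $v_1,v_2\in\adj$ distinct, it is enough to track, as a function of the edges of $G'$, how many such $5$-vertex sets realize each isomorphism class.

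For the base case I would take $G'$ with no edge having both endpoints in $\adj$ — the graph of Figure~\ref{figure:starfive}, with or without the dashed edge. Here the induced subgraph on a fixed $\{p_1,p_2,p_3,v_1,v_2\}$ is completely determined by whether $\{p_1,p_2,p_3\}$ is a $P_3$ or a $K_3$ together with the membership of $v_1$ and $v_2$ in the partition $\mathcal{Q}=\{X_1,X_2,X_3,Y_{12},Y_{13},Y_{23},Z\}$, because each set of $\mathcal{Q}$ is defined precisely by its adjacency to $p_1,p_2,p_3$ and there is no $\{v_1,v_2\}$ edge. Just as in Fact~\ref{fact:base}, the pattern obtained when $v_1,v_2$ fall in the same $S\in\mathcal{Q}$ then occurs $\binom{\size{S}}{2}$ times, and the one obtained when $v_1\in S_i$, $v_2\in S_j$ with $i\neq j$ occurs $\size{S_i}\size{S_j}$ times; these are exactly the entries of Tables~\ref{table:histok} and~\ref{table:histop} with all the $m$-variables set to zero (the $\kpt$ base frequencies being tabulated in Table~\ref{table:starfive}, and the $\ppt$ case handled identically).

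For the inductive step I would add a single edge $\{v_1,v_2\}$ inside $\adj$, with $v_1\in S_i$, $v_2\in S_j$ for some $S_i,S_j\in\mathcal{Q}$ (possibly $S_i=S_j$), and observe that the only $5$-vertex induced subgraph of $G'$ containing $\{p_1,p_2,p_3\}$ whose edge set changes is the one on $\{p_1,p_2,p_3,v_1,v_2\}$: any other candidate $5$-set omits $v_1$ or $v_2$. Hence one occurrence of the ``no $S_iS_j$ edge'' pattern is removed and one occurrence of the ``$S_iS_j$ edge'' pattern is added, and nothing else changes. Summing over all $\m{S_i}{S_j}$ such edges shows that exactly $\m{S_i}{S_j}$ occurrences migrate between the two classes, which is the correction recorded in the tables; and since the $P_3$ and $K_3$ cases differ only in whether $\{p_1,p_3\}\in E$ — changing the names of the patterns but not the counting — this handles both.

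I expect the induction itself to be routine, as in the earlier theorems. The real obstacle is purely combinatorial bookkeeping: verifying that the case analysis underlying Tables~\ref{table:histok} and~\ref{table:histop} is exhaustive and correct — that is, that each of the $\binom{7}{2}+7=28$ set-pairs $\{S_i,S_j\}$, in each of the $P_3$ and $K_3$ regimes and with the $\{v_1,v_2\}$ edge present or absent, is assigned its true isomorphism class — which is the content of those tables rather than of this proof.
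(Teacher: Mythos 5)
Your proposal is correct and follows essentially the same route as the paper: induction on the edges inside $\adj$, with the edgeless configuration of Figure~\ref{figure:starfive} (frequencies as in Table~\ref{table:starfive}, i.e.\ the tables with all $m$-variables zero) as base case, and each added edge $\{v_1,v_2\}$ migrating exactly one occurrence from the ``no edge'' pattern to the ``edge'' pattern of the corresponding set pair. Your observation that only the single $5$-set $\{p_1,p_2,p_3,v_1,v_2\}$ is affected by each added edge, and that the residual work is verifying the exhaustive case analysis encoded in Tables~\ref{table:histok} and~\ref{table:histop}, makes explicit what the paper leaves implicit, but the argument is the same.
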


\begin{proof}
This theorem is similar to theorems \ref{teo3} and \ref{teo4u}. It is also proved by induction in the number of edges. 

Consider the $\kpt$ set. To the $\ppt$ set, the proof is the same.
Let $G'$ be the graph $G$ induced by $\{p_1,p_2,p_3\}\cup \adj.$
The basic case is if $G'$ has no edges between vertices in the $\adj$. This graph is  similar to the graph in Figure~\ref{figure:starfive}. In this case, the $\kpt$  frequency are equal to Table~\ref{table:starfive}. Table~\ref{table:histok} corresponds to it if all $m$ variables are zero.

If one edge $e'$ is added into $X_1$, one $\kpt$ pattern  has to be replaced by another. The removed pattern is in Table~\ref{table:histok}, Line 1. The 
pattern added is in Table~\ref{table:histok}, Line 11.
 The same applies to other sets in $\{X_1,X_2,X_3,Y_{12},Y_{13},Y_{23},Z\}$. If one edge is added in $Z$, one
pattern (Table~\ref{table:histok}, Line 10)  has to be replaced by one $K_5$. 
If one edge is added between $X_1$ and $X_3$, 
one pattern    needs to be removed (Table~\ref{table:histok}, Line 2) and other pattern   must be added (Table~\ref{table:histok}, Line 12). 
 Thus, by a straightforward induction in edges in $\adj$, Table \ref{table:histok} presents the   frequency of the set $\kpt$.
 \end{proof}
 

 The algorithm to count isomorphic patterns of size five will list all $K_3$ and $P_3$ induced in $G(V,E)$ and summing up  $\kpt$ and $\ppt$ frequencies for all induced 
 $K_3$ and $P_3$.
 
 %
An induced subgraph pattern can be considered in distinct $\kpt$ and $\ppt$ sets. Thus,
the final histogram needs a small correction. If a pattern appears $a$ times as a  $\kpt$ or a $\ppt$, the final histogram result must be divided by $a$.  For instance, it is shown that 
 $C_5$ is considered in 5 distinct $\ppt$. See Figure \ref{correction:five}.
 The correction analysis needs to be considered for
 all 5-sized isomorphic patterns.

 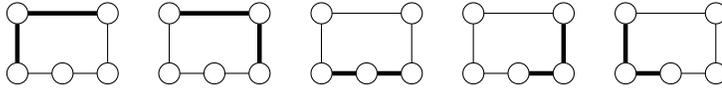
\begin{figure}
\centerline{
\setlength{\unitlength}{.4mm}
\begin{tabular}{cccccc}
\begin{picture}(40,33)(0,0)
\node[Nh=7,Nw=7,Nmr=15](P1)(22,7){} 
\node[Nh=7,Nw=7,Nmr=15](P2)(52,7){} 
\node[Nh=7,Nw=7,Nmr=15](P3)(37,-13){}  
\node[Nh=7,Nw=7,Nmr=15](V1)(52,-13){}  
\node[Nh=7,Nw=7,Nmr=15](V2)(22,-13){}  
  \drawedge[ATnb=0,AHnb=0,linewidth=1.5](P1,P2){}  
   \drawedge[ATnb=0,AHnb=0](P3,V1){}  
   \drawedge[ATnb=0,AHnb=0](P3,V2){}  
   \drawedge[ATnb=0,AHnb=0](P2,V1){}  
      \drawedge[ATnb=0,AHnb=0,linewidth=1.5](P1,V2){}  
   \end{picture}
   &
\begin{picture}(40,33)(0,0)
\node[Nh=7,Nw=7,Nmr=15](P1)(22,7){} 
\node[Nh=7,Nw=7,Nmr=15](P2)(52,7){} 
\node[Nh=7,Nw=7,Nmr=15](P3)(37,-13){}  
\node[Nh=7,Nw=7,Nmr=15](V1)(52,-13){}  
\node[Nh=7,Nw=7,Nmr=15](V2)(22,-13){}  
  \drawedge[ATnb=0,AHnb=0,linewidth=1.5](P1,P2){}  
   \drawedge[ATnb=0,AHnb=0](P3,V1){}  
   \drawedge[ATnb=0,AHnb=0](P3,V2){}  
   \drawedge[ATnb=0,AHnb=0,linewidth=1.5](P2,V1){}  
      \drawedge[ATnb=0,AHnb=0](P1,V2){}  
   \end{picture}
   &
\begin{picture}(40,33)(0,0)
\node[Nh=7,Nw=7,Nmr=15](P1)(22,7){} 
\node[Nh=7,Nw=7,Nmr=15](P2)(52,7){} 
\node[Nh=7,Nw=7,Nmr=15](P3)(37,-13){}  
\node[Nh=7,Nw=7,Nmr=15](V1)(52,-13){}  
\node[Nh=7,Nw=7,Nmr=15](V2)(22,-13){}  
  \drawedge[ATnb=0,AHnb=0,](P1,P2){}  
   \drawedge[ATnb=0,AHnb=0,linewidth=1.5](P3,V1){}  
   \drawedge[ATnb=0,AHnb=0,linewidth=1.5](P3,V2){}  
   \drawedge[ATnb=0,AHnb=0](P2,V1){}  
      \drawedge[ATnb=0,AHnb=0](P1,V2){}  
   \end{picture}
   &
\begin{picture}(40,33)(0,0)
\node[Nh=7,Nw=7,Nmr=15](P1)(22,7){} 
\node[Nh=7,Nw=7,Nmr=15](P2)(52,7){} 
\node[Nh=7,Nw=7,Nmr=15](P3)(37,-13){}  
\node[Nh=7,Nw=7,Nmr=15](V1)(52,-13){}  
\node[Nh=7,Nw=7,Nmr=15](V2)(22,-13){}  
  \drawedge[ATnb=0,AHnb=0,](P1,P2){}  
   \drawedge[ATnb=0,AHnb=0,linewidth=1.5](P3,V1){}  
   \drawedge[ATnb=0,AHnb=0](P3,V2){}  
   \drawedge[ATnb=0,AHnb=0,linewidth=1.5](P2,V1){}  
      \drawedge[ATnb=0,AHnb=0](P1,V2){}  
   \end{picture}
   &
\begin{picture}(40,33)(0,0)
\node[Nh=7,Nw=7,Nmr=15](P1)(22,7){} 
\node[Nh=7,Nw=7,Nmr=15](P2)(52,7){} 
\node[Nh=7,Nw=7,Nmr=15](P3)(37,-13){}  
\node[Nh=7,Nw=7,Nmr=15](V1)(52,-13){}  
\node[Nh=7,Nw=7,Nmr=15](V2)(22,-13){}  
  \drawedge[ATnb=0,AHnb=0,](P1,P2){}  
   \drawedge[ATnb=0,AHnb=0](P3,V1){}  
   \drawedge[ATnb=0,AHnb=0,linewidth=1.5](P3,V2){}  
   \drawedge[ATnb=0,AHnb=0](P2,V1){}  
      \drawedge[ATnb=0,AHnb=0,linewidth=1.5](P1,V2){}  
   \end{picture}\\
   ~\\
    \end{tabular}
}
\caption{The pattern $C_5$ is in $\ppt$  of $P_3$ in bold. It is necessary to divide the final $C_5$ frequency by 5.}
\label{correction:five}
\end{figure}

\begin{algorithm}[htb]
\LinesNumbered
\SetAlgoLined
\KwIn{Undirected graph $G(V,E)$}
\KwOut{ Histogram to 21 isomorphic patterns to motifs of size 5}
Create a histogram to count isomorphic patterns\\
List all induced $K_3=(p_1,p_2,p_3)$ and save in $\mathcal{K}$.\\
List all induced $P_3=(p_1,p_2,p_3)$ and save in $\mathcal{P}$.\\
\ForEach{$t\in \mathcal{K}\cup\mathcal{P}$}{
 Calculate sets $\mathcal{Q}=\{X_1,X_2,X_3,Y_{12},Y_{13},Y_{23},Z\}$.\\
 \ForEach{$S_1,S_2\in \mathcal{Q}$}{
  Calculate the number of edges between $S_1$ and $S_2$, $\mv{}{S_1}{S_2}$
}
 Calculate the frequency involving $t$ and two neighbors using tables~\ref{table:histok} and \ref{table:histop} .\\
For each pattern, add this frequency to histogram.\\
}
\ForEach{pattern in the histogram}{
	Divide the frequency by the constant $a$, correcting the repetition in several $\ppt$ and $\kpt$
}
{\bf  
    \Return{\rm The histogram.}
 }
\caption{   Count 5 Sized Patterns Algorithm.}
\label{alg:und5}
\end{algorithm}

The Algorithm~\ref{alg:und5} counts subgraph patterns  of size 5.  In lines 2 and 3, 
a list of all $P_3$ and $K_3$ induced in $G(V,E)$ is computed.
Algorithm~\ref{alg:calckp} does this task in $O(\sum_{\{u,v\}\in E}|\vizinhos(v)|+|\vizinhos(v)| )$. As $|\vizinhos(u)|+|\vizinhos(v)|=O(n)$, the complexity
of Algorithm~\ref{alg:calckp} is $O(mn)$. Note that the size of   $\mathcal{K}$ and $\mathcal{P}$ are bounded by $\Theta(nm)$.

\begin{algorithm}[htb]
\LinesNumbered
\SetAlgoLined
\KwIn{Undirected graph $G(V,E)$}
\KwOut{ The set $\mathcal{P}$ of all induced $P_3$ and $\mathcal{K}$ of all induced $K_3$ in $G(V,E)$.}
Set $\mathcal{P}\gets \emptyset$, $\mathcal{K}\gets \emptyset$.\\
\ForEach{$e=\{u,v\}\in E$}{
\ForEach{$x\in adj(\{u,v\})$}{
   \If{(x,u,v) is a $P_3$ and (x,u,v) is not considered yet }{
      $\mathcal{P}\gets\mathcal{P}\cup\{(x,u,v)\}$
   }
   \If{(x,u,v) is a $K_3$ and (x,u,v) is not considered yet}{
      $\mathcal{K}\gets\mathcal{K}\cup\{(x,u,v)\}$
   } 
} 
}

{\bf  
    \Return{\rm $\mathcal{K}$ and $\mathcal{P}$}
 }
\caption{   List all induced $P_3$ and $K_3$.}
\label{alg:calckp}
\end{algorithm}

Algorithm~\ref{alg:und5}, Line 5, consists of calculating the sets  $\mathcal{Q}=\{X_1,X_2,X_3,Y_{12},Y_{13},Y_{23},Z\}$ for a single $P_3$ or $K_3$. This cost is dominated
by $|\delta(p_1)|+|\delta(p_2)|+|\delta(p_3)|=O(n)$. It is a simple intersection of sets procedures and can be 
efficiently computed using a hash.


As in the previous section, no isomorphism detecting algorithm is used.
The histogram is represented by a vector $h$ with 21 positions.  The algorithm associates each pattern 
with an arbitrary \emph{hard coded} position in the vector. 

Algorithm~\ref{alg:varsfive} computes the needed variables, according to Line 7 of Algorithm~\ref{alg:und5}. 
The algorithm
has an $O(m)$ complexity for each induced $K_3$ and $P_3$.

\begin{algorithm}[htb]
\LinesNumbered
\SetAlgoLined
\KwIn{Undirected graph $G(V,E)$ and a conneceted $(p_1,p_2,p_3)\in\mathcal{K}\cup\mathcal{P}$}
\KwOut{Variables   $\{\mv{}{x_1}{x_1},\mv{}{x_1}{x_2},\ldots,\mv{}{z}{z}\}$.}
 All variables in $\{\mv{}{x_1}{x_1},\mv{}{x_1}{x_2},\ldots,\mv{}{z}{z}\}$ start with zero.\\
\ForEach{$(x,y)\in E$ }{
\begin{tabular}{lc}
	\hline
	\bf{If vertices~~~~~}  &  \bf{Var to increment} \\\hline
	 $x \in X_1$ and $y\in X_1$ & $\mv{}{x_1}{x_1}$  \\
	 $x \in X_1$ and $y\in X_2$ & $\mv{}{x_1}{x_2}$\\
	 $x \in X_1$ and $y\in X_3$ & $\mv{}{x_1}{x_3}$\\
	 $x \in X_1$ and $y\in Y_{12}$ & $\mv{}{x_1}{y_{12}}$\\
	 \vdots&\vdots\\
	 $x \in Z$ and $y\in Y_{23}$ & $\mv{}{z}{y_{23}}$\\	 	 
	 $x \in Z$ and $y\in Z$ & $\mv{}{z}{z}$\\	 	 
	\hline 
	~\\
\end{tabular}
}
\Return{ \rm The computed variables.}
\caption{ Create $\{\mv{}{x_1}{x_1},\mv{}{x_1}{x_2}\ldots,\mv{}{z}{z}\}$  variables for a given $e\in E$.}
\label{alg:varsfive}
\end{algorithm}

We can conclude that  Algorithm~\ref{alg:und5} counts isomorphic pattern motifs of size 5 in $O(m^2n)$
in an undirected graph $G(V,E)$. Moreover, the additional memory to store the variables is $\Theta(mn)$.

\subsection{Counting 5-sized patterns in \emph{directed} graphs}
\label{subsec:directed5}

Similarly to Section~\ref{subsec:directed4}, no new concept is needed to extend the previous algorithm to the directed version. Both algorithms are very similar. The main difference is that the 
directed case requires dealing with a large number of sets. The neighborhood of $(p_1,p_2,p_3)$ is partitioned in 63 sets. It is necessary 
to calculate how many edges are found between each 2,016 combinations of two such sets. The histogram vector has 9,364 positions, one for each distinct
isomorphic pattern. Regardless of such difficulty, the algorithm deal with sets. The number of sets or set combinations in the neighborhood of $(p_1,p_2,p_3)$ is always $O(1)$. 

The following definitions are required:

\begin{definition}
Let $\Pt$ be a set of all isomorphic patterns of size 3. The set $\Pt$ contains thirteen elements:
{\setlength{\unitlength}{.3mm}
$$\Pt=\left \{\begin{tabular}{cccccccc}
\begin{picture}(40,33)(0,-20)
\node[Nh=12,Nw=12,Nmr=15](P1)(22,7){\footnotesize$p_1$} 
\node[Nh=12,Nw=12,Nmr=15](P2)(52,7){\footnotesize$p_2$} 
\node[Nh=12,Nw=12,Nmr=15](P3)(37,-13){\footnotesize$p_3$}  
 \drawedge[ATnb=1,AHnb=1](P1,P2){}  
 \drawedge[ATnb=1,AHnb=1](P2,P3){}  
  \end{picture}
  & ,
\begin{picture}(40,33)(0,-20)
\node[Nh=12,Nw=12,Nmr=15](P1)(22,7){\footnotesize$p_1$} 
\node[Nh=12,Nw=12,Nmr=15](P2)(52,7){\footnotesize$p_2$} 
\node[Nh=12,Nw=12,Nmr=15](P3)(37,-13){\footnotesize$p_3$}  
\drawedge[ATnb=1,AHnb=0](P1,P2){}  
 \drawedge[ATnb=1,AHnb=1](P2,P3){}  
   \end{picture}
   &,
\begin{picture}(40,33)(0,-20)
\node[Nh=12,Nw=12,Nmr=15](P1)(22,7){\footnotesize$p_1$} 
\node[Nh=12,Nw=12,Nmr=15](P2)(52,7){\footnotesize$p_2$} 
\node[Nh=12,Nw=12,Nmr=15](P3)(37,-13){\footnotesize$p_3$}  
\drawedge[ATnb=0,AHnb=1](P1,P2){}  
 \drawedge[ATnb=1,AHnb=1](P2,P3){}  
   \end{picture}
   &,
\begin{picture}(40,33)(0,-20)
\node[Nh=12,Nw=12,Nmr=15](P1)(22,7){\footnotesize$p_1$} 
\node[Nh=12,Nw=12,Nmr=15](P2)(52,7){\footnotesize$p_2$} 
\node[Nh=12,Nw=12,Nmr=15](P3)(37,-13){\footnotesize$p_3$}  
\drawedge[ATnb=0,AHnb=1](P1,P2){}  
 \drawedge[ATnb=0,AHnb=1](P2,P3){}  
   \end{picture}
   &,
\begin{picture}(40,33)(0,-20)
\node[Nh=12,Nw=12,Nmr=15](P1)(22,7){\footnotesize$p_1$} 
\node[Nh=12,Nw=12,Nmr=15](P2)(52,7){\footnotesize$p_2$} 
\node[Nh=12,Nw=12,Nmr=15](P3)(37,-13){\footnotesize$p_3$}  
\drawedge[ATnb=0,AHnb=1](P1,P2){}  
 \drawedge[ATnb=1,AHnb=0](P2,P3){}  
   \end{picture}
&,
\begin{picture}(40,33)(0,-20)
\node[Nh=12,Nw=12,Nmr=15](P1)(22,7){\footnotesize$p_1$} 
\node[Nh=12,Nw=12,Nmr=15](P2)(52,7){\footnotesize$p_2$} 
\node[Nh=12,Nw=12,Nmr=15](P3)(37,-13){\footnotesize$p_3$}  
\drawedge[ATnb=1,AHnb=0](P1,P2){}  
 \drawedge[ATnb=0,AHnb=1](P2,P3){}  
   \end{picture}\\
\begin{picture}(40,33)(0,-20)
\node[Nh=12,Nw=12,Nmr=15](P1)(22,7){\footnotesize$p_1$} 
\node[Nh=12,Nw=12,Nmr=15](P2)(52,7){\footnotesize$p_2$} 
\node[Nh=12,Nw=12,Nmr=15](P3)(37,-13){\footnotesize$p_3$}  
\drawedge[ATnb=1,AHnb=1](P1,P2){}  
 \drawedge[ATnb=1,AHnb=1](P2,P3){}  
 \drawedge[ATnb=1,AHnb=1](P1,P3){} 
 \end{picture}
 &,
 \begin{picture}(40,33)(0,-20)
\node[Nh=12,Nw=12,Nmr=15](P1)(22,7){\footnotesize$p_1$} 
\node[Nh=12,Nw=12,Nmr=15](P2)(52,7){\footnotesize$p_2$} 
\node[Nh=12,Nw=12,Nmr=15](P3)(37,-13){\footnotesize$p_3$}  
\drawedge[ATnb=1,AHnb=1](P1,P2){}  
 \drawedge[ATnb=1,AHnb=1](P2,P3){}  
 \drawedge[ATnb=1,AHnb=0](P1,P3){} 
 \end{picture}
 &,
  \begin{picture}(40,33)(0,-20)
\node[Nh=12,Nw=12,Nmr=15](P1)(22,7){\footnotesize$p_1$} 
\node[Nh=12,Nw=12,Nmr=15](P2)(52,7){\footnotesize$p_2$} 
\node[Nh=12,Nw=12,Nmr=15](P3)(37,-13){\footnotesize$p_3$}  
\drawedge[ATnb=1,AHnb=1](P1,P2){}  
 \drawedge[ATnb=1,AHnb=0](P2,P3){}  
 \drawedge[ATnb=1,AHnb=0](P1,P3){} 
 \end{picture}
 &,
  \begin{picture}(40,33)(0,-20)
\node[Nh=12,Nw=12,Nmr=15](P1)(22,7){\footnotesize$p_1$} 
\node[Nh=12,Nw=12,Nmr=15](P2)(52,7){\footnotesize$p_2$} 
\node[Nh=12,Nw=12,Nmr=15](P3)(37,-13){\footnotesize$p_3$}  
\drawedge[ATnb=1,AHnb=1](P1,P2){}  
 \drawedge[ATnb=1,AHnb=0](P2,P3){}  
 \drawedge[ATnb=0,AHnb=1](P1,P3){} 
 \end{picture}
 &,
   \begin{picture}(40,33)(0,-20)
\node[Nh=12,Nw=12,Nmr=15](P1)(22,7){\footnotesize$p_1$} 
\node[Nh=12,Nw=12,Nmr=15](P2)(52,7){\footnotesize$p_2$} 
\node[Nh=12,Nw=12,Nmr=15](P3)(37,-13){\footnotesize$p_3$}  
\drawedge[ATnb=1,AHnb=1](P1,P2){}  
 \drawedge[ATnb=0,AHnb=1](P2,P3){}  
 \drawedge[ATnb=0,AHnb=1](P1,P3){} 
 \end{picture}
    \end{tabular}~~~~\right \}
$$}
\end{definition}

\begin{definition}[$\Dpt$]
\label{dpt}
Let $\Delta$ be one element in $\Pt$.
Given a directed graph $G(V,E)$, we define a $\Dpt$ for any $(p_1,p_2,p_3)$ induced in $G(V,E)$ and isomorphic to $\Delta$, as a set of patterns with five vertices, $\{p_1,p_2, p_3,v_1,v_2\}$, where   $v_1$ and $v_2$  are vertices in $\adj$. A $\Dpt$ is a set of patterns with
an induced  $\Delta$  plus two vertices in its adjacency. 
\end{definition}

Considering a pattern $\Delta\in \Pt$, it is possible to define 63 sets associated with it. Consider 
three variables $\alpha$, $\beta$ and $\gamma$  with domain $\{A,B,C,\emptyset\}$.
The variables $\alpha$, $\beta$ and $\gamma$ define the neighborhood of a given vertex with respect to $p_1$, $p_2$ and $p_3$, respectively.
Consider the vertices in $\adj$. The Algorithm~\ref{partfive} partitions such vertices
in 63 sets.

\begin{algorithm}[htb]
\LinesNumbered
\SetAlgoLined
\KwIn{Directed graph $G(V,E)$ and an induced connected subgraph $(p_1,p_2,p_3)$}
\KwOut{The vertices in $\adj$ partitionated in 63 sets.}
\ForEach{
$\alpha\in\{A,B,C,\emptyset\}$,$\beta\in\{A,B,C,\emptyset\}$,$\gamma\in\{A,B,C,\emptyset\}$
}{
$set(\alpha,\beta,\gamma)\gets\emptyset$
}
\ForEach{
$v\in \adj$
}{
If $v\in\left\{
\begin{tabular}{ll}
$\seta(p_1)$, & $\alpha\gets A$\\
$\setb(p_1)$, & $\alpha\gets B$\\
$\setc(p_1)$, & $\alpha\gets C$\\
o.c. , & $\alpha\gets \emptyset$
\end{tabular}
\right.$\\
If $v\in\left\{
\begin{tabular}{ll}
$\seta(p_2)$, & $\beta\gets A$\\
$\setb(p_2)$, & $\beta\gets B$\\
$\setc(p_2)$, & $\beta\gets C$\\
o.c. , & $\beta\gets \emptyset$
\end{tabular}
\right.$\\
If $v\in\left\{
\begin{tabular}{ll}
$\seta(p_3)$, & $\gamma\gets A$\\
$\setb(p_3)$, & $\gamma\gets B$\\
$\setc(p_3)$, & $\gamma\gets C$\\
o.c. , & $\gamma\gets \emptyset$
\end{tabular}
\right.$\\
$set(\alpha,\beta,\gamma)\gets set(\alpha,\beta,\gamma)\cup\{v\}$\\
}
\Return{$set(\alpha,\beta,\gamma)$ for all $\alpha$, $\beta$ and $\gamma$ in $\{A,B,C,\emptyset\}$. The set $set(\emptyset,\emptyset,\emptyset)$ is empty and can be ignored.}

\label{partfive}
\caption{Algorithm to compute sets $\mathcal{Q}$ of a given pattern $(p_1,p_2,p_3)\in \Pt$.}
\end{algorithm}

The role of partition is simple. If $v\in set(A,A,A)$, it is in the set with bidirected edges to $p_1,p_2$ and $p_3$. If $v\in set(\emptyset,\emptyset, C)$, it is not connected
to $p_1$, not connected to $p_2$ and connected to $p_3$ by a directed edge to $p_3$.

Let $\mathcal{Q}=\{set(\alpha,\beta,\gamma)~|~\forall \alpha,\beta,\gamma\in\{A,B,C,\emptyset\}\}$.
Given a set $T\in \mathcal{Q}$, we define $\size{T}$ as $|T|$.
Given two sets $T_i,T_j\in \mathcal{Q}$, we define $\mv{}{T_i}{T_j}$ as the number of directed edges from $T_i$ to $T_j$ and 
$\mlv{}{T_i}{T_j}$ as the number of bidirected edges between $T_i$ and $T_j$. In other words, for all $T_i,T_j\in \mathcal{Q}$,
$\mv{}{T_i}{T_j}\gets |\outset(T_i,T_j)|$ and $\mlv{}{T_i}{T_j}\gets |\inoutset(T_i,T_j)|$. Thus, if $T_i=set(A,A,A)$ and $T_j=set(A,A,A)$ then $\mv{}{set(A,A,A)}{set(A,A,A)}$ is the number of directed edges inside $set(A,A,A)$. If $T_i=set(A,A,A)$ and $T_j=set(\emptyset,\emptyset, C)$ then $\mlv{}{set(A,A,A)}{set(\emptyset,\emptyset, C)}$ is the number of bidirected edges between $set(A,A,A)$ and $set(\emptyset,\emptyset, C)$.

As in Section~\ref{subsec:directed4}, preprocessing these variables is the core technique used to accelerate our algorithm. The variables are processed only once. Next, they are used to calculate the occurrence of motifs.

Consider a pattern $\Delta\in \Pt$ and its neighboring sets; for each $\Delta$ induced in $G(V,E)$, the algorithm will analyze and count the $\Dpt$ (see Definition \ref{dpt}). The  patterns
that are considered in more than one $\Dpt$ must to be corrected at the end of the algorithm, as in the undirected case.

Consider a simple graph $G'$ as $G(V,E)$ induced in $\{p_1,p_2,p_3\}\cup \adj$, where $(p_1,p_2,p_3)$ is isomorphic to $\Delta$. Consider no edges in $\adj$. 

To discover the pattern associated
with $\{p_1,p_2,p_3,v_1,v_2\}$, it is sufficient to known $\Delta$ and which sets in $\mathcal{Q}$ are associated with $v_1$ and $v_2$. 

Let $K_3$ be the complete graph of size 3, $K_5\setminus e$ be the complete graph of size 5 with one arbitrary bidirected edge removed
and $P_n$ be the path graph with $n$ vertices. All edges in $K_3$, $K_5\setminus e$ and $P_n$ are bidirected.
For instance, if $v_1\in set(A,A,A)$ and $v_2\in set(A,A,A)$ and $\Delta=K_3$,
 the associated pattern is $K_5\setminus e$.  If $v_1\in set(A,\emptyset,\emptyset)$ and $v_2\in set(\emptyset,\emptyset, A)$, and $\Delta=P_3$,
 the associated pattern is $P_5$.  Let $\ptaux{\Delta}{T_i}{T_j}$ for all $\Delta\in \Pt$ and $T_i,T_j\in \mathcal{Q}$ be the pattern related to $\{p_1,p_2,p_3,v_1,v_2\}$,
where $(p_1,p_2,p_3)$ is isomorphic to $\Delta$ and $v_1\in T_i$ and $v_2\in T_j$. The  $\ptaux{\Delta}{T_i}{T_j}$ has a simple rule to be created, but
it is a matrix with $13\times 63\times 63$ dimensions, resulting in a matrix with 51,597 precomputed patterns.

The following fact is closely related to Fact \ref{fact:base}.

\begin{fact}
\label{fact:basefive}
Consider a graph induced in $\{p_1,p_2,p_3\}$ isomorphic to $\Delta\in \Pt$.
Let $G'$ be any graph containing an induced isomorphic pattern $\Delta\in \Pt$  plus two vertices in $\Delta$ adjacency.  Assume there are no edges in
$\adj$. If it is considered a pattern $\{p_1,p_2,p_3,v_1,v_2\}$, where $v_1,v_2$ belong to the same set 
$T\in \mathcal{Q} $, there are  ${\size{T}}\choose{2}$ occurrences of $\ptaux{\Delta}{T}{T}$. If $v_1\in T_i$ and $v_2\in T_j$ 
for distinct $T_i,T_j\in \mathcal{Q}$, there are
 $\size{T_i}\size{T_j}$ occurrences of $\ptaux{\Delta}{T_i}{T_j}$. More formally, the frequency
of pattern $P$ contained in $\Dpt$ is
$$freq(P)=\sum_{T\in \mathcal{Q}:\ptaux{\Delta}{T}{T}=P}{{\size{T}}\choose{2}}+\sum_{T_i,T_j\in \mathcal{Q}:i<j,P=\ptaux{\Delta}{T_i}{T_j}}\size{T_i}\size{T_j}.$$
 \end{fact}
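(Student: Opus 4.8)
The statement of Fact~\ref{fact:basefive} is the direct analogue of Fact~\ref{fact:base} for the five-vertex setting, and I would prove it by the same elementary counting argument, now specialized to the graph $G'$ which has no edges inside $\adj$. First I would fix the induced subgraph $(p_1,p_2,p_3)$ isomorphic to $\Delta$ and observe that, by Definition~\ref{dpt}, every element of $\Dpt$ is obtained by choosing an unordered pair $\{v_1,v_2\}$ of distinct vertices from $\adj$ and taking the subgraph of $G'$ induced on $\{p_1,p_2,p_3,v_1,v_2\}$. Since $\mathcal{Q}=\{set(\alpha,\beta,\gamma)\}$ is a partition of $\adj$ (Algorithm~\ref{partfive} assigns each $v\in\adj$ to exactly one block, and $set(\emptyset,\emptyset,\emptyset)$ is empty), each such unordered pair falls into exactly one of two cases: either $v_1,v_2$ lie in the same block $T$, or they lie in two distinct blocks $T_i,T_j$.

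The key step is the observation that, in $G'$, the isomorphism type of the induced subgraph on $\{p_1,p_2,p_3,v_1,v_2\}$ depends only on $\Delta$ and on the blocks containing $v_1$ and $v_2$. This is because there are no edges inside $\adj$, so there is no edge between $v_1$ and $v_2$; the edges among $p_1,p_2,p_3$ are determined by $\Delta$; and the edges between $v_i$ and each $p_k$ are exactly encoded by which of $\seta(p_k),\setb(p_k),\setc(p_k)$ (or none) contains $v_i$, i.e.\ by the coordinates of the block $set(\alpha,\beta,\gamma)$ to which $v_i$ belongs. Hence the induced pattern is precisely $\ptaux{\Delta}{T_i}{T_j}$ (with $T_i=T_j=T$ in the first case), which is well-defined by the definition of that matrix given just before the Fact. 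I would then simply count: a fixed block $T$ contributes $\binom{\size{T}}{2}$ unordered pairs, each yielding an occurrence of $\ptaux{\Delta}{T}{T}$; a fixed pair of distinct blocks $T_i,T_j$ with $i<j$ contributes $\size{T_i}\size{T_j}$ unordered pairs, each yielding an occurrence of $\ptaux{\Delta}{T_i}{T_j}$. Summing over all blocks and all pairs of blocks, and grouping the contributions by the resulting pattern $P$, gives exactly the displayed formula for $freq(P)$.

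There is essentially no hard part here; the content is entirely bookkeeping, and the only thing one must be careful about is that $\mathcal{Q}$ genuinely partitions $\adj$ (so no pair is counted twice or missed) and that $i<j$ in the second sum matches the unordered-pair count without a factor of two. The argument is, in fact, shorter than the corresponding inductive proofs of Theorems~\ref{teo3}, \ref{teo4u} and \ref{lema4d}, because Fact~\ref{fact:basefive} only concerns the base case $G'$ with an empty $\adj$ edge set; the passage to general graphs (adding edges inside $\adj$ one at a time, each swapping one pattern occurrence for another) is deferred to the theorem that will build on this Fact, exactly as Fact~\ref{fact:base} served as the base case for Theorem~\ref{lema4d}.
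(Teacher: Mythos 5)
Your proof is correct and is precisely the counting argument the paper has in mind: the paper states Fact~\ref{fact:basefive} without an explicit proof, treating it as holding ``by construction'' (as it does for the analogous Fact~\ref{fact:base}), and your elaboration --- that $\mathcal{Q}$ partitions $\adj$, that the absence of edges inside $\adj$ makes the induced pattern depend only on $\Delta$ and the two blocks, and that the pair counts are $\binom{\size{T}}{2}$ within a block and $\size{T_i}\size{T_j}$ across blocks --- is exactly the intended justification. No gap; you also correctly identify that the inductive extension to graphs with edges in $\adj$ is deferred to Lemma~\ref{lema5d}.
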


Exactly as in Section \ref{subsec:directed4}, it is necessary to define variations of matrix $\ptaux{\Delta}{T_i}{T_j}$.  If a directed edge $(v_1,v_2)$ is added in $\adj$, where $v_1\in T_i$ and $v_2\in T_j$, one  $\ptaux{\Delta}{T_i}{T_j}$ is removed and another pattern is created. The created pattern 
is defined as $\pvaiaux{\Delta}{T_i}{T_j}$. 
If $v_1\in T_j$ and $v_2\in T_i$, the created pattern is $\pvoltaaux{\Delta}{T_i}{T_j}$.
If edge $(v_1,v_2)$ is bidirected, the created pattern is defined as $\pvaivoltaaux{\Delta}{T_i}{T_j}$.  Figure~\ref{fig:variationsfive}
shows the patterns created when $\Delta=P_3$ and an edge is added between $T_i=set(A,B,\emptyset)$ and $T_j=set(\emptyset,\emptyset,C)$. There is a straightforward
generalization to other possibilities of $T_i$ and $T_j$ in $\mathcal{Q}$ and $\Delta\in \Pt$.

The following lemma is used by the algorithm.

\begin{lemma}\label{lema5d}
Let $G(V,E)$ be a general directed graph and $\Delta\in \Pt$ an induced connected graph of size 3. The patterns occurrences in set $\Dpt$
is given by the following sum:\\
Start all frequency patterns as zero.\\
\ForEach{$T\in \mathcal{Q}$}{
 Increase  $\ptaux{\Delta}{T}{T}$ occurrence by ${\size{T}\choose 2}-\m{T}{T}-\ml{T}{T}$\\
 Increase  $\pvaivoltaaux{\Delta}{T}{T}$ occurrence by $\ml{T}{T}$\\
 Increase  $\pvaiaux{\Delta}{T}{T}$ occurrence by $\m{T}{T}$
}
\ForEach{\rm $T_i,T_j\in \mathcal{Q}, i<j$}{
  Increase   $\ptaux{\Delta}{T_i}{T_j}$ occurrence by ${\size{T_i}\size{T_j}}-\ml{T_i}{T_j}-\m{T_i}{T_j}-\m{T_j}{T_i}$\\
  Increase  $\pvaivoltaaux{\Delta}{T_i}{T_j}$ occurrence by $\ml{T_i}{T_j}$\\
  Increase  $\pvaiaux{\Delta}{T_i}{T_j}$ occurrence by $\m{T_i}{T_j}$\\
  Increase	  $\pvoltaaux{\Delta}{T_i}{T_j}$ occurrence by $\m{T_j}{T_i}$
}
\end{lemma}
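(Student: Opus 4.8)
The plan is to prove Lemma~\ref{lema5d} by induction on the number of edges of $G$, following verbatim the scheme already used for Theorem~\ref{lema4d}, with the fixed induced triple $\Delta$ now playing the role that the edge $e$ played there and the $63$ sets of $\mathcal{Q}$ produced by Algorithm~\ref{partfive} playing the role of $\sett^e$. First I would fix an induced connected subgraph $(p_1,p_2,p_3)$ isomorphic to $\Delta$ and let $G'$ be the subgraph of $G$ induced by $\{p_1,p_2,p_3\}\cup\adj$. The key preliminary remark is that $\adj$, the partition $\mathcal{Q}$, and the isomorphism type of $(p_1,p_2,p_3)$ all depend only on the edges incident to $p_1$, $p_2$ or $p_3$; hence they remain unchanged when an edge with both endpoints in $\adj$ is inserted. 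Since the only edges of $G'$ not already determined by $\Delta$ together with the set memberships of the vertices of $\adj$ are exactly the edges lying inside $\adj$, the induction adds these edges one at a time, starting from the configuration in which $\adj$ is edgeless.

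In the base case $G'$ has no edge inside $\adj$, so every $\m{\cdot}{\cdot}$ and $\ml{\cdot}{\cdot}$ variable equals zero and the sum in the statement collapses to $\sum {\size{T}\choose 2}$ over the sets $T\in\mathcal{Q}$ with $\ptaux{\Delta}{T}{T}=P$ plus $\sum\size{T_i}\size{T_j}$ over the pairs $i<j$ with $\ptaux{\Delta}{T_i}{T_j}=P$. By the way the matrix $\ptaux{\Delta}{\cdot}{\cdot}$ is built, this is exactly the frequency asserted by Fact~\ref{fact:basefive}.

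For the inductive step, insert one more edge $\{v_1,v_2\}$ with $v_1,v_2\in\adj$. Every element of $\Dpt$ has the form $\{p_1,p_2,p_3\}\cup\{w_1,w_2\}$ for an unordered pair $\{w_1,w_2\}\subseteq\adj$, and the induced subgraph it spans is completely determined by $\Delta$, by the sets of $\mathcal{Q}$ containing $w_1$ and $w_2$, and by the single possible edge between $w_1$ and $w_2$. Hence the new edge alters exactly one element of $\Dpt$, the one with $\{w_1,w_2\}=\{v_1,v_2\}$. If $v_1$ and $v_2$ lie in the same $T$, one occurrence of $\ptaux{\Delta}{T}{T}$ becomes one occurrence of $\pvaiaux{\Delta}{T}{T}$ when the inserted edge is directed, and one occurrence of $\pvaivoltaaux{\Delta}{T}{T}$ when it is bidirected; if $v_1\in T_i$ and $v_2\in T_j$ with $i<j$, one occurrence of $\ptaux{\Delta}{T_i}{T_j}$ becomes one occurrence of $\pvaiaux{\Delta}{T_i}{T_j}$, $\pvoltaaux{\Delta}{T_i}{T_j}$ or $\pvaivoltaaux{\Delta}{T_i}{T_j}$ according to the orientation of $\{v_1,v_2\}$. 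Running over all edges of $G'$ inside $\adj$, the count attributed to $\ptaux{\Delta}{T_i}{T_j}$ --- initially $\size{T_i}\size{T_j}$, or ${\size{T}\choose 2}$ in the diagonal case --- is reduced by $\m{T_i}{T_j}+\m{T_j}{T_i}+\ml{T_i}{T_j}$ (respectively by $\m{T}{T}+\ml{T}{T}$), while the three variant patterns gain $\m{T_i}{T_j}$, $\m{T_j}{T_i}$ and $\ml{T_i}{T_j}$ occurrences respectively; this is precisely the accounting written in the lemma.

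The only genuinely delicate point is the invariance remark of the first paragraph, combined with the fact that distinct elements of $\Dpt$ correspond to distinct unordered pairs of $\adj$, so that a single inserted edge modifies exactly one counted subgraph. Once these are in place, the edge-by-edge bookkeeping is routine and structurally identical to the size-$4$ proof. In particular, since the orientations of the three core edges are already encoded in $\Delta\in\Pt$, the directed-core situations are covered inside the same argument and no separate statement analogous to Corollary~\ref{coro:a} is required here.
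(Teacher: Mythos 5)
Your proposal is correct and follows exactly the route the paper takes: the paper's proof of Lemma~\ref{lema5d} simply says ``Similar to Theorem~\ref{lema4d},'' i.e.\ the same induction on edges inside $\adj$ starting from the edgeless base case covered by Fact~\ref{fact:basefive}, which is precisely what you have written out. Your version is in fact more explicit than the paper's, in particular in isolating the invariance of the partition $\mathcal{Q}$ under insertion of edges inside $\adj$ and in noting why no analogue of Corollary~\ref{coro:a} is needed.
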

\begin{proof}
Similar to Theorem \ref{lema4d}.
\end{proof}

The Algorithm \ref{alg:fived}  counts patterns of size 5 by summing the $\Dpt$ for all $\Delta\in \Pt$ induced in $G(V,E)$. 
The algorithm applies a correction if the same induced subgraph has been considered many times.
Note that the sum in Lemma \ref{lema5d} takes $O(1)$, assuming precomputed variables.

\begin{algorithm}[htb]
\LinesNumbered
\SetAlgoLined
\KwIn{Directed graph $G(V,E)$}
\KwOut{ Histogram to 9364 isomorphic patterns to motifs of size 5}
Create a histogram to count isomorphic patterns\\
Create an empty list $\mathcal{L}_{\Delta}$ for all $\Delta\in \Pt$\\
List all connected $(p_1,p_2,p_3)$ induced in $G(V,E)$ and add to the 
correspondent $\mathcal{L}_{\Delta}$\\
\ForEach{$\Delta\in \Pt$}{
\ForEach{$t\in \mathcal{L}_{\Delta}$}{
 Calculate sets $\mathcal{Q}=\{set(\alpha,\beta,\gamma)~|~\forall \alpha,\beta\gamma\in\{A,B,C,\emptyset\}\}$.\\
 \ForEach{$S_1,S_2\in \mathcal{Q}$}{
  Calculate the number of edges between $S_1$ and $S_2$, $\mv{}{S_1}{S_2}$
}
Calculate the frequency in $\Dpt$ using Lemma~\ref{lema5d}
}
}
\ForEach{pattern in the histogram}{
	Divide the frequency by the constant $a$, correcting the repetition in several $\Dpt$
}
{\bf  
    \Return{\rm The histogram.}
 }
\caption{ Count 5 Sized Patterns Algorithm.}
\label{alg:fived}
\end{algorithm}

As in the previous cases, there is no isomorphism-detecting algorithm. It is necessary to preprocess matrices $\ptaux{}{}$, $\pvaiaux{}{}$ and $\pvoltaaux{}$, associating each pattern with an arbitrary position in the histogram. 
Such matrices are computed once. Every execution of $\acc$ uses the same matrices.

The complexity of the algorithm is dominated  by lines 6 and 8.

Algorithm~\ref{partfive}  computes the sets $\mathcal{Q}$ for a given $(p_1,p_3,p_3)$ in $O(|\delta(p_1)|+|\delta(p_2)|+|\delta(p_3)|)=O(n)$.
Algorithm~\ref{alg:varsfive}  can be also adapted to compute the required variables in 
$O(m)$ for a given $(p_1,p_2,p_3)$.
Since all other lines are $O(1)$ for $\Dpt$, we conclude that the 
proposed algorithm is an $O(m^2n)$ algorithm to calculate  5-sized motifs in directed graphs.

\color{black}
\section{Results and Discussion}
\label{sec:exp}

This section compares the computational results of our algorithm, that we called {\bf $\acc$} (accelerated Motif), with FANMOD~\cite{wernicke2006fanmod} and
Kavosh~\cite{kashani2009kavosh}. The tools FANMOD and Kavosh were chosen because they are two of the fastest available motif finder programs~\cite{Elisabeth2011}.

The instances were arbitrarily selected from a wide range of motif applications. They are selected 
from  open complex network databases such as Pajek and  Uri Alon 
datasets~\cite{AlonDataset,Pajek2006}.  We preprocessed the instances, removing vertices with zero neighbors.

The implemented algorithms are devised to motifs of sizes $3$, $4$ and $5$.
To ensure replicability and better evaluation, we have provided the input tested graphs and the java byte-code of implemented algorithms available at \url{http://www.ft.unicamp.br/~meira/accmotifs}. 

All the tests were performed in an 
Intel(R) Core(TM)2 Quad CPU Q8200, 2.33 GHz, 2 GB RAM, using an algorithm implemented in Java language.
We set FANMOD and Kavosh with the full enumeration parameter. Thus,  FANMOD, Kavosh  and $\acc$ solve the 
same problem, which consists in counting \emph{all} subgraph of the selected size. The final histogram is exactly the same for a given graph, ensuring the correctness of $\acc$.


Tables~\ref{temposk3},~\ref{temposk4} and ~\ref{temposk5} show the execution time of FANMOD, Kavosh and $\acc$ for $k$ equal to 3, 4 and 5, respectively.
The algorithm is executed in the original graph and in a set of random graphs. The number of random 
graphs is 100, 10 and 5 for $k$ equal to $3$, $4$ and $5$, respectively.
 The  time reported is the average considering the original and the random graphs.  
In this experiment, only the execution time is considered to enumerate all subgraphs. The time to generate the random graphs is not considered.


Each round consists in the subgraph enumeration in the original and in the random graphs.
We repeated the execution by five rounds per instance. Tables~\ref{temposk3},~\ref{temposk4} and ~\ref{temposk5} contain 
the average and the deviation factor of these five measurements.

We limited the CPU time to 5 hours per graph for the sake of convenience.
In tables~\ref{temposk3},~\ref{temposk4} and ~\ref{temposk5}, it is possible to observe that the proposed algorithms were expressively faster than FANMOD and Kavosh in almost all tested instances.
For $k=3$ and instance Airport	 \cite{Tore}, $\acc$ spent $35\pm0.2$ms/graph, Kavosh spent $1,250\pm6$ms/graph and FANMOD spent 
 $5,772\pm2$ms/graph.
For $k=4$ and instance ODLIS \cite{Pajek2006}, $\acc$ spent $2,605\pm115$ms/graph, Kavosh spent  $ 	 210,015	 \pm 	 837$ms/graph and FANMOD spent 
 $630,936\pm2,914$ms/graph.
 For $k=5$ and instance  California \cite{Pajek2006}, $\acc$ spent $76\pm0.4$ s/graph, Kavosh spent  $1,532\pm 12$s/graph and FANMOD spent 
 $2,376\pm12$s/graph. The performance gain of $\acc$ was consistently observed in all tested instances.


\section{Conclusion}

\label{sec:conc}
Three new exact algorithms were presented to calculate motifs using combinatorial techniques.
The algorithms
have complexity $O(m\sqrt{m})$ to count isomorphic patterns of size 3 and $O(m^2)$ to count isomorphic patterns of 
size~4 and $O(m^2n)$ to count isomorphic patterns of size 5. Computational results show that the proposed exact algorithms are expressively faster than the known techniques (e.g., FANMOD or Kavosh).


\section{Availability and requirements}

Project name: $\acc$ - Accelerated Motif Detection Using Combinatorial Techniques \\
Project home page:  \url{http://www.ft.unicamp.br/~meira/accmotifs}. \\
Operating system(s):  Platform independent\\
Programming language:  Java\\
Other requirements: e.g. Java 1.6. 1GB Ram. \\
License: Freeware. \\ 
Any restrictions to use by non-academics: None.

\section{Authors' contributions}

All authors are involved in algorithms discussion and manuscript writing. L. A. A. Meira and V.  R. M\'{a}ximo are involved in the java implementation.


\section{Acknowledgements}

The authors would like to thank Professor Paulo Bandiera Paiva by initial problem statement and data acquisition. This research was partially supported by CNPq-Brazil.

\section{Competing interests}

The authors declare that they have no competing interests.

\bibliographystyle{plain}

\bibliography{principal}

\begin{thebibliography}{10}

\bibitem{AlonDataset}
Uri Alon.
\newblock Molecular cell biology lab: Dataset.
\newblock
  ~\\\url{http://www.weizmann.ac.il/mcb/UriAlon/groupNetworksData.html}, 2012.

\bibitem{barabasi2003linked}
A.L. Barabasi and RE~Crandall.
\newblock Linked: The new science of networks.
\newblock {\em American journal of Physics}, 71:409, 2003.

\bibitem{Pajek2006}
Vladimir Batagelj and Andrej Mrvar.
\newblock Pajek datasets.
\newblock \url{http://vlado.fmf.uni-lj.si/pub/networks/data/}, 2006.

\bibitem{Chen:2006}
Jin Chen, Wynne Hsu, Mong~Li Lee, and See-Kiong Ng.
\newblock Nemofinder: dissecting genome-wide protein-protein interactions with
  meso-scale network motifs.
\newblock In {\em Proceedings of the 12th ACM SIGKDD international conference
  on Knowledge discovery and data mining}, KDD '06, pages 106--115, New York,
  NY, USA, 2006. ACM.

\bibitem{Chiba:1985}
Norishige Chiba and Takao Nishizeki.
\newblock Arboricity and subgraph listing algorithms.
\newblock {\em SIAM J. Comput.}, 14(1):210--223, February 1985.

\bibitem{ciriello2008review}
G.~Ciriello and C.~Guerra.
\newblock A review on models and algorithms for motif discovery in
  protein--protein interaction networks.
\newblock {\em Briefings in functional genomics \& proteomics}, 7(2):147--156,
  2008.

\bibitem{Grochow:2007}
Joshua~A. Grochow and Manolis Kellis.
\newblock Network motif discovery using subgraph enumeration and
  symmetry-breaking.
\newblock In {\em Proceedings of the 11th annual international conference on
  Research in computational molecular biology}, RECOMB'07, pages 92--106,
  Berlin, Heidelberg, 2007. Springer-Verlag.

\bibitem{kashani2009kavosh}
Z.~Kashani, H.~Ahrabian, E.~Elahi, A.~Nowzari-Dalini, E.~Ansari, S.~Asadi,
  S.~Mohammadi, F.~Schreiber, and A.~Masoudi-Nejad.
\newblock Kavosh: a new algorithm for finding network motifs.
\newblock {\em BMC bioinformatics}, 10(1):318, 2009.

\bibitem{Kashtan:2004}
N.~Kashtan, S.~Itzkovitz, R.~Milo, and U.~Alon.
\newblock Efficient sampling algorithm for estimating subgraph concentrations
  and detecting network motifs.
\newblock {\em Bioinformatics}, 20(11):1746--1758, July 2004.

\bibitem{MotifsGenoma}
Celine Lefebvre, Wei~Keat Lim, Katia Basso, Riccardo~Dalla Favera, and Andrea
  Califano.
\newblock A context-specific network of protein-dna and protein-protein
  interactions reveals new regulatory motifs in human b cells.
\newblock In {\em Proceedings of the joint 2006 satellite conference on Systems
  biology and computational proteomics}, RECOMB'06, pages 42--56, Berlin,
  Heidelberg, 2007. Springer-Verlag.

\bibitem{MotifsSoftwareLivre}
Zhang Lin, Qian Guanqun, and Zhang Li.
\newblock Clustering analysis of motif significance profile in software
  networks.
\newblock In {\em Proceedings of the 10th WSEAS International Conference on
  Mathematical Methods and Computational Techniques in Electrical Engineering},
  pages 145--147, Stevens Point, Wisconsin, USA, 2008. World Scientific and
  Engineering Academy and Society (WSEAS).

\bibitem{MotifsBio}
Michael Lones and Andy Tyrrell.
\newblock Regulatory motif discovery using a population clustering evolutionary
  algorithm.
\newblock {\em IEEE/ACM Transactions on Computational Biology and
  Bioinformatics}, 4:403--414, 2007.

\bibitem{Marcus2010}
Dror Marcus and Yuval Shavitt.
\newblock Efficient counting of network motifs.
\newblock In {\em Proceedings of the 2010 IEEE 30th International Conference on
  Distributed Computing Systems Workshops}, ICDCSW '10, pages 92--98,
  Washington, DC, USA, 2010. IEEE Computer Society.

\bibitem{Shavitt2012}
Dror Marcus and Yuval Shavitt.
\newblock Rage - a rapid graphlet enumerator for large networks.
\newblock Computer Networks (COMNET, Elsevier), to appear., March 2012.

\bibitem{accMotifs}
Luis A.~A. Meira, Vinicius~R. M\'{a}ximo, \'{A}lvaro L.~Fazenda, and Arlindo~F.
  da~Concei\c{c}\~{a}o.
\newblock {A New Approach to Count Pattern Motifs Using Combinatorial
  Techniches}.
\newblock In {\em The 8th International Conference on Signal Image Technology
  -- Workshop on Complex Network}, November 2012.

\bibitem{MiloShen}
R.~Milo, S.~Shen-Orr, S.~Itzkovitz, N.~Kashtan, D.~Chklovskii, and U.~Alon.
\newblock Network motifs: Simple building blocks of complex networks.
\newblock {\em Science}, 298:824--887, 2002.

\bibitem{Mfinder2002}
Kashtan N., Itzkovitz S., Milo R., and Alon U.
\newblock Network motif detection tool: mfinder tool guide.
\newblock Technical report, Department of Molecular Cell Biology and Computer
  Science and Applied Mathematics, Weizman Institute of Science, Israel, 2005.

\bibitem{nash1964decomposition}
C.S.J.A. Nash-Williams.
\newblock Decomposition of finite graphs into forests.
\newblock {\em Journal of the London Mathematical Society}, 1(1):12--12, 1964.

\bibitem{Moda2009}
S.~Omidi, F.~Schreiber, and A.~Masoudi-Nejad.
\newblock Moda: An efficient algorithm for network motif discovery in
  biological networks.
\newblock {\em Genes Genet. Syst}, 84:385--395, 2009.

\bibitem{Tore}
Tore Opsahl.
\newblock Datasets tore opsahl.
\newblock ~\\\url{http://toreopsahl.com/datasets/\#usairports}, 2012.

\bibitem{Ribeiro:2009}
Pedro Ribeiro, Fernando Silva, and Marcus Kaiser.
\newblock Strategies for network motifs discovery.
\newblock In {\em Proceedings of the 2009 Fifth IEEE International Conference
  on e-Science}, E-SCIENCE '09, pages 80--87, Washington, DC, USA, 2009. IEEE
  Computer Society.

\bibitem{schreiber2005mavisto}
F.~Schreiber and H.~Schw{\"o}bbermeyer.
\newblock Mavisto: a tool for the exploration of network motifs.
\newblock {\em Bioinformatics}, 21(17):3572--3574, 2005.

\bibitem{wernicke2006fanmod}
S.~Wernicke and F.~Rasche.
\newblock Fanmod: a tool for fast network motif detection.
\newblock {\em Bioinformatics}, 22(9):1152--1153, 2006.

\bibitem{Elisabeth2011}
Elisabeth Wong, Brittany Baur, Saad Quader, and Chun-Hsi Huang.
\newblock Biological network motif detection: principles and practice.
\newblock {\em Brief Bioinform}, 2011.

\bibitem{MotifsWeb}
Kai-Hsiang Yang, Kun-Yan Chiou, Hahn-Ming Lee, and Jan-Ming Ho.
\newblock Web appearance disambiguation of personal names based on network
  motif.
\newblock In {\em Proceedings of the 2006 IEEE/WIC/ACM International Conference
  on Web Intelligence}, WI '06, pages 386--389, Washington, DC, USA, 2006. IEEE
  Computer Society.

\end{thebibliography}


\begin{table}[h!]
\begin{center}
\begin{tabular}{c|c|c}
$ \seta^v=\inoutset(v)$ & $\setb^v=\outset(v)$ & $\setc^v=\inset(v)$ \\\hline
$n_a^v = |\seta^v|$&$n_b^v=|\setb^v|$&$n_c^v=|\setc^v|$\\\hline
$\mv{v}{a}{b}=|\outset( \seta^v, \setb^v)|~~$ & $~~\mv{v}{a}{c}=|\outset( \seta^v, \setc^v)|~~$& $~~\mv{v}{b}{c}=|\outset( \setb^v, \setc^v)|$\\\hline
$\mv{v}{b}{a}=|\outset( \setb^v, \seta^v)|~~$ & $~~\mv{v}{c}{a}=|\outset( \setc^v, \seta^v)|~~$& $~\mv{v}{c}{b}=|\outset( \setc^v, \setb^v)|$\\\hline
$\mv{v}{a}{a}=|\outset( \seta^v, \seta^v)|~~$ & $~~\mv{v}{b}{b}=|\outset( \setb^v, \setb^v)|~~$& $~~\mv{v}{c}{c}=|\outset( \setc^v, \setc^v)|$\\\hline
$\mlv{v}{a}{b}=\mlv{v}{b}{a}=|\inoutset( \seta^v, \setb^v)|~~$ & $~~\mlv{v}{a}{c}=\mlv{v}{c}{a}=|\inoutset( \seta^v, \setc^v)|~~$& $~~\mlv{v}{b}{c}=\mlv{v}{c}{b}=|\inoutset( \setb^v, \setc^v)|$\\\hline
$\mlv{v}{a}{a}=|\inoutset( \seta^v, \seta^v)|~~$ & $~~\mlv{v}{b}{b}=|\inoutset( \setb^v, \setb^v)|~~$& $~~\mlv{v}{c}{c}=|\inoutset( \setc^v, \setc^v)|$\\
\end{tabular}
\caption{Variables to vertex $v$.}
\label{table:vars}
\end{center}
\end{table}

\begin{table}[h!]
\begin{center}
\begin{tabular}{cc}
\begin{tabular}{lcc}
	\hline
	{\bf Pattern~~~~~}  & {\bf Frequency}  &  {\bf Line}\\
	\hline
	$\vspace{.2cm} o\leftrightarrow o \rightarrow o$  & $\displaystyle n_a^v n_b^v-\mv{v}{a}{b}-\mv{v}{b}{a}-\mlv{v}{a}{b}$&1\\
	$\vspace{.2cm} o\leftrightarrow o \leftarrow o$  & $\displaystyle n_a^v n_c^v-\mv{v}{a}{c}-\mv{v}{c}{a}-\mlv{v}{a}{c}$&2\\
	$\vspace{.2cm}o\rightarrow o \rightarrow o$  & $\displaystyle n_b^v n_c^v-\mv{v}{b}{c}-\mv{v}{c}{b}-\mlv{v}{b}{c}$&3\\
	$ o\leftrightarrow o \leftrightarrow o$  & $\displaystyle{{ n_a^v}\choose{2}}-\mv{v}{a}{a}-\mlv{v}{a}{a}$&4\\
	$o\leftarrow o \rightarrow o$  & $\displaystyle{{ n_b^v}\choose{2}}-\mv{v}{b}{b}-\mlv{v}{b}{b}$&5\\
	$o\rightarrow o \leftarrow o$  & $\displaystyle{{ n_c^v}\choose{2}}-\mv{v}{c}{c}-\mlv{v}{c}{c}$&6\\
\end{tabular}~~~~&~~~~
\begin{tabular}{lcc}
	\hline
	{\bf Pattern~~~~~}  & {\bf Frequency}& {\bf Line} \\
	\hline
	\vspace{.3cm}
	 \begin{picture}(0,0)(5,5)
	 \node[Nh=2,Nw=2,Nmr=2,linecolor=white](0)(10,3){$o$}
	 \node[Nh=2,Nw=2,Nmr=2,linecolor=white](1)(15,6){$o$}
	 \node[Nh=2,Nw=2,Nmr=2,linecolor=white](2)(20,3){$o$}
	 \drawedge[ATnb=1,AHnb=1](0,1){}
	 \drawedge[ATnb=1,AHnb=1](1,2){}
	 \drawedge[ATnb=1,AHnb=1](2,0){}
\end{picture}  & $\mlv{v}{a}{a}$&7\\
	\vspace{.3cm}
	 \begin{picture}(0,0)(5,5)
	 \node[Nh=2,Nw=2,Nmr=2,linecolor=white](0)(10,3){$o$}
	 \node[Nh=2,Nw=2,Nmr=2,linecolor=white](1)(15,6){$o$}
	 \node[Nh=2,Nw=2,Nmr=2,linecolor=white](2)(20,3){$o$}
	 \drawedge[ATnb=1,AHnb=0](0,1){}
	 \drawedge[ATnb=1,AHnb=0](1,2){}
	 \drawedge[ATnb=1,AHnb=0](2,0){}
\end{picture}  & $\mv{v}{b}{c}$&8\\
	\vspace{.3cm}
	 \begin{picture}(0,0)(5,5)
	 \node[Nh=2,Nw=2,Nmr=2,linecolor=white](0)(10,3){$o$}
	 \node[Nh=2,Nw=2,Nmr=2,linecolor=white](1)(15,6){$o$}
	 \node[Nh=2,Nw=2,Nmr=2,linecolor=white](2)(20,3){$o$}
	 \drawedge[ATnb=0,AHnb=1](0,1){}
	 \drawedge[ATnb=1,AHnb=0](1,2){}
	 \drawedge[ATnb=1,AHnb=1](2,0){}
\end{picture} & $\displaystyle \mv{v}{a}{b}+\mlv{v}{c}{c}$&9\\
	\vspace{.3cm}
	 \begin{picture}(0,0)(5,5)
	 \node[Nh=2,Nw=2,Nmr=2,linecolor=white](0)(10,3){$o$}
	 \node[Nh=2,Nw=2,Nmr=2,linecolor=white](1)(15,6){$o$}
	 \node[Nh=2,Nw=2,Nmr=2,linecolor=white](2)(20,3){$o$}
	 \drawedge[ATnb=1,AHnb=0](0,1){}
	 \drawedge[ATnb=1,AHnb=0](1,2){}
	 \drawedge[ATnb=1,AHnb=1](2,0){}
\end{picture}  & $\displaystyle \mv{v}{b}{a}+\mv{v}{a}{c}+\mlv{v}{c}{b}$&10\\
	\vspace{.3cm}
	 \begin{picture}(0,0)(5,5)
	 \node[Nh=2,Nw=2,Nmr=2,linecolor=white](0)(10,3){$o$}
	 \node[Nh=2,Nw=2,Nmr=2,linecolor=white](1)(15,6){$o$}
	 \node[Nh=2,Nw=2,Nmr=2,linecolor=white](2)(20,3){$o$}
	 \drawedge[ATnb=1,AHnb=1](0,1){}
	 \drawedge[ATnb=1,AHnb=1](1,2){}
	 \drawedge[ATnb=0,AHnb=1](2,0){}
\end{picture}  & $\displaystyle \mlv{v}{a}{b}+\mlv{v}{a}{c}+\mv{v}{a}{a}$&11\\
	\vspace{.3cm}
	 \begin{picture}(0,0)(5,5)
	 \node[Nh=2,Nw=2,Nmr=2,linecolor=white](0)(10,3){$o$}
	 \node[Nh=2,Nw=2,Nmr=2,linecolor=white](1)(15,6){$o$}
	 \node[Nh=2,Nw=2,Nmr=2,linecolor=white](2)(20,3){$o$}
	 \drawedge[ATnb=1,AHnb=0](0,1){}
	 \drawedge[ATnb=0,AHnb=1](1,2){}
	 \drawedge[ATnb=1,AHnb=1](2,0){}
\end{picture}  & $\displaystyle \mv{v}{c}{a}+\mlv{v}{b}{b}$&12\\
	\vspace{.3cm}
	 \begin{picture}(0,0)(5,5)
	 \node[Nh=2,Nw=2,Nmr=2,linecolor=white](0)(10,3){$o$}
	 \node[Nh=2,Nw=2,Nmr=2,linecolor=white](1)(15,6){$o$}
	 \node[Nh=2,Nw=2,Nmr=2,linecolor=white](2)(20,3){$o$}
	 \drawedge[ATnb=1,AHnb=0](0,1){}
	 \drawedge[ATnb=0,AHnb=1](1,2){}
	 \drawedge[ATnb=1,AHnb=0](2,0){}
\end{picture}  & $\displaystyle \mv{v}{c}{b}+\mv{v}{b}{b}+\mv{v}{c}{c}$&13\\	
\end{tabular}
\end{tabular}
\caption{Isomorphic pattern frequencies involving vertex $v$ and two neighbors.}
\label{table:starb}
\end{center}
\end{table}

 \begin{table}[h!]
\begin{center}
\begin{tabular}{lrr}
	\hline
	\textbf{Pattern~~~~~}  & {\bf Frequency} \\
	\hline
	$\vspace{.2cm} P_4$  & $\displaystyle \size{x} \size{y}^e-\mv{e}{x}{y}$\\
	$\vspace{.2cm} x \leftrightarrow C_3$  & $\displaystyle (\size{x}^e + \size{y}^e) \size{z}^e-\mv{e}{x}{z}-\mv{e}{y}{z}+\mv{e}{x}{x}+\mv{e}{y}{y}$\\
	$ S_3$  & $\displaystyle {{\size{x}^e}\choose{2}}+{{\size{y}^e}\choose{2}}-\mv{e}{x}{x}-\mv{e}{y}{y}$\\
	$K_4\setminus\{e\}$  & $\displaystyle {{\size{z}^e}\choose{2}}+\mv{e}{x}{z}+\m{y}{z}-\mv{e}{z}{z}$\\
	 $C_4$ & $\mv{e}{x}{y}$\\
	 $K_4$ & $\mv{e}{z}{z}$\\
	\hline
\end{tabular}
\caption{$\ept$ frequencies for $e=\{u,v\}$.}
\label{tab:ept}
\end{center}
\end{table}

\begin{table}[h!]
\include{tableQdesori}
\caption{ If the top blue edge is ignored, this table represents $\pt{T_i}{T_j}$ for all $T_i,T_j\in \sett^e$. Symmetric  side omitted.}
\label{table:P}
\end{table}


\begin{table}[htb]
\centerline{
\setlength{\unitlength}{.2mm}
\begin{tabular}{cl}
{\bf Pattern~~~~~}  & {\bf Frequency} \\
\begin{picture}(60,35)(0,00)
\node[Nh=7,Nw=7,Nmr=15](P1)(22,7){} 
\node[Nh=7,Nw=7,Nmr=15](P2)(52,7){} 
\node[Nh=7,Nw=7,Nmr=15](P3)(37,-13){}  
\node[Nh=5,Nw=5,Nmr=15](V1)(15,-13){}  
\node[Nh=5,Nw=5,Nmr=15](V2)(26,-13){}  
  \drawedge[ATnb=0,AHnb=0](P1,P2){}  
   \drawedge[ATnb=0,AHnb=0](P1,P3){}  
   \drawedge[ATnb=0,AHnb=0](P2,P3){}  
   \drawedge[ATnb=0,AHnb=0](P1,V1){}  
      \drawedge[ATnb=0,AHnb=0](P1,V2){}  
   \end{picture} & ${\size{x_1}\choose{2}}+{\size{x_2}\choose{2}}+{\size{x_3}\choose{2}}$\\
\begin{picture}(60,35)(0,0)
\node[Nh=7,Nw=7,Nmr=15](P1)(22,7){} 
\node[Nh=7,Nw=7,Nmr=15](P2)(52,7){} 
\node[Nh=7,Nw=7,Nmr=15](P3)(37,-13){}  
\node[Nh=5,Nw=5,Nmr=15](V1)(52,-13){}  
\node[Nh=5,Nw=5,Nmr=15](V2)(22,-13){}  
  \drawedge[ATnb=0,AHnb=0](P1,P2){}  
   \drawedge[ATnb=0,AHnb=0](P1,P3){}  
   \drawedge[ATnb=0,AHnb=0](P2,P3){}  
   \drawedge[ATnb=0,AHnb=0](P2,V1){}  
      \drawedge[ATnb=0,AHnb=0](P1,V2){}  
   \end{picture} & $\size{x_1}\size{x_2}+\size{x_1}\size{x_3}+\size{x_2}\size{x_3}$\\
\begin{picture}(60,35)(0,0)
\node[Nh=7,Nw=7,Nmr=15](P1)(22,7){} 
\node[Nh=7,Nw=7,Nmr=15](P2)(52,7){} 
\node[Nh=7,Nw=7,Nmr=15](P3)(37,-13){}  
\node[Nh=5,Nw=5,Nmr=15](V1)(52,-13){}  
\node[Nh=5,Nw=5,Nmr=15](V2)(22,-13){}  
  \drawedge[ATnb=0,AHnb=0](P1,P2){}  
   \drawedge[ATnb=0,AHnb=0](P1,P3){}  
   \drawedge[ATnb=0,AHnb=0](P2,P3){}  
   \drawedge[ATnb=0,AHnb=0](P2,V1){}  
   \drawedge[ATnb=0,AHnb=0](P3,V1){}     
      \drawedge[ATnb=0,AHnb=0](P3,V2){}  
   \end{picture} & $\size{x_{1}}(\size{y_{12}}+\size{y_{13}})+\size{x_2}(\size{y_{12}}+\size{y_{23}})+n_{x_3}(\size{y_{13}}+\size{y_{23}})$\\
\begin{picture}(60,35)(0,0)
\node[Nh=7,Nw=7,Nmr=15](P1)(22,7){} 
\node[Nh=7,Nw=7,Nmr=15](P2)(52,7){} 
\node[Nh=7,Nw=7,Nmr=15](P3)(37,-13){}  
\node[Nh=5,Nw=5,Nmr=15](V1)(52,-13){}  
\node[Nh=5,Nw=5,Nmr=15](V2)(22,-13){}  
  \drawedge[ATnb=0,AHnb=0](P1,P2){}  
   \drawedge[ATnb=0,AHnb=0](P1,P3){}  
   \drawedge[ATnb=0,AHnb=0](P2,P3){}  
   \drawedge[ATnb=0,AHnb=0](P2,V1){}  
   \drawedge[ATnb=0,AHnb=0](P3,V1){}     
      \drawedge[ATnb=0,AHnb=0](P1,V2){}  
   \end{picture} & $\size{x_{1}}\size{y_{23}}+\size{x_2}\size{y_{13}}+n_{x_3}\size{y_{23}}$\\
   \begin{picture}(60,35)(0,0)
\node[Nh=7,Nw=7,Nmr=15](P1)(22,7){} 
\node[Nh=7,Nw=7,Nmr=15](P2)(52,7){} 
\node[Nh=7,Nw=7,Nmr=15](P3)(37,-13){}  
\node[Nh=5,Nw=5,Nmr=15](V1)(52,-13){}  
\node[Nh=5,Nw=5,Nmr=15](V2)(37,0){}  
  \drawedge[ATnb=0,AHnb=0](P1,P2){}  
   \drawedge[ATnb=0,AHnb=0](P1,P3){}  
   \drawedge[ATnb=0,AHnb=0](P2,P3){}  
   \drawedge[ATnb=0,AHnb=0](P2,V1){}  
      \drawedge[ATnb=0,AHnb=0](P1,V2){}  
         \drawedge[ATnb=0,AHnb=0](P2,V2){}  
            \drawedge[ATnb=0,AHnb=0](P3,V2){}  
   \end{picture} & $(\size{x_{1}}+\size{x_{2}}+\size{x_{3}})\size{z}$\\
   \begin{picture}(60,35)(0,0)
\node[Nh=7,Nw=7,Nmr=15](P1)(22,7){} 
\node[Nh=7,Nw=7,Nmr=15](P2)(52,7){} 
\node[Nh=7,Nw=7,Nmr=15](P3)(37,-13){}  
\node[Nh=5,Nw=5,Nmr=15](V1)(22,-13){}  
\node[Nh=5,Nw=5,Nmr=15](V2)(37,0){}  
  \drawedge[ATnb=0,AHnb=0](P1,P2){}  
   \drawedge[ATnb=0,AHnb=0](P1,P3){}  
   \drawedge[ATnb=0,AHnb=0](P2,P3){}  
   \drawedge[ATnb=0,AHnb=0](P1,V1){}  
      \drawedge[ATnb=0,AHnb=0](P3,V1){}  
      \drawedge[ATnb=0,AHnb=0](P1,V2){}  
            \drawedge[ATnb=0,AHnb=0](P3,V2){}  
   \end{picture} &  ${\size{y_{12}}\choose{2}}+{\size{y_{13}}\choose{2}}+{\size{y_{23}}\choose{2}}$\\
   \begin{picture}(60,35)(0,0)
\node[Nh=7,Nw=7,Nmr=15](P1)(22,7){} 
\node[Nh=7,Nw=7,Nmr=15](P2)(52,7){} 
\node[Nh=7,Nw=7,Nmr=15](P3)(37,-13){}  
\node[Nh=5,Nw=5,Nmr=15](V1)(52,-13){}  
\node[Nh=5,Nw=5,Nmr=15](V2)(22,-13){}  
  \drawedge[ATnb=0,AHnb=0](P1,P2){}  
   \drawedge[ATnb=0,AHnb=0](P1,P3){}  
   \drawedge[ATnb=0,AHnb=0](P2,P3){}  
   \drawedge[ATnb=0,AHnb=0](P2,V1){}  
      \drawedge[ATnb=0,AHnb=0](P3,V1){}  
      \drawedge[ATnb=0,AHnb=0](P1,V2){}  
      \drawedge[ATnb=0,AHnb=0](P3,V2){}  
   \end{picture} & $\size{y_{12}}\size{y_{13}}+\size{y_{12}}\size{y_{23}}+\size{y_{13}}\size{y_{23}}$\\
     \begin{picture}(60,35)(0,0)
\node[Nh=7,Nw=7,Nmr=15](P1)(22,7){} 
\node[Nh=7,Nw=7,Nmr=15](P2)(52,7){} 
\node[Nh=7,Nw=7,Nmr=15](P3)(37,-13){}  
\node[Nh=5,Nw=5,Nmr=15](V1)(52,-13){}  
\node[Nh=5,Nw=5,Nmr=15](V2)(37,0){}  
  \drawedge[ATnb=0,AHnb=0](P1,P2){}  
   \drawedge[ATnb=0,AHnb=0](P1,P3){}  
   \drawedge[ATnb=0,AHnb=0](P2,P3){}  
   \drawedge[ATnb=0,AHnb=0](P2,V1){}  
   \drawedge[ATnb=0,AHnb=0](P3,V1){}  
      \drawedge[ATnb=0,AHnb=0](P1,V2){}  
         \drawedge[ATnb=0,AHnb=0](P2,V2){}  
            \drawedge[ATnb=0,AHnb=0](P3,V2){}  
   \end{picture} & $(\size{y_{12}}+\size{y_{13}}+\size{y_{23}})\size{z}$\\
     \begin{picture}(60,35)(0,0)
\node[Nh=7,Nw=7,Nmr=15](P1)(22,7){} 
\node[Nh=7,Nw=7,Nmr=15](P2)(52,7){} 
\node[Nh=7,Nw=7,Nmr=15](P3)(37,-13){}  
\node[Nh=5,Nw=5,Nmr=15](V1)(52,-13){}  
\node[Nh=5,Nw=5,Nmr=15](V2)(37,0){}  
  \drawedge[ATnb=0,AHnb=0](P1,P2){}  
   \drawedge[ATnb=0,AHnb=0](P1,P3){}  
   \drawedge[ATnb=0,AHnb=0](P2,P3){}  
   \drawedge[ATnb=0,AHnb=0](P2,V1){}  
   \drawedge[ATnb=0,AHnb=0](P3,V1){}  
      \drawedge[ATnb=0,AHnb=0](P1,V2){}  
         \drawedge[ATnb=0,AHnb=0](P2,V2){}  
            \drawedge[ATnb=0,AHnb=0](P3,V2){}  
            \drawbpedge[ATnb=0,AHnb=0](P1,20,30,V1,60,40){}
   \end{picture} & $\size{z}\choose{2}$\\
   \end{tabular}
}
\caption{$\kpt$ frequencies for the graph of Figure \ref{figure:starfive}. $\ppt$ frequencies are analogous.}
\label{table:starfive}
\end{table}

\begin{table}[htb]
\centerline{
\setlength{\unitlength}{.2mm}
\begin{tabular}{clc}
{\bf Pattern~~~~~}  & {\bf Frequency} & Line \\
\begin{picture}(60,35)(0,00)
\node[Nh=7,Nw=7,Nmr=15](P1)(22,7){} 
\node[Nh=7,Nw=7,Nmr=15](P2)(52,7){} 
\node[Nh=7,Nw=7,Nmr=15](P3)(37,-13){}  
\node[Nh=7,Nw=7,Nmr=15](V1)(15,-13){}  
\node[Nh=7,Nw=7,Nmr=15](V2)(26,-13){}  
  \drawedge[ATnb=0,AHnb=0](P1,P2){}  
   \drawedge[ATnb=0,AHnb=0](P1,P3){}  
   \drawedge[ATnb=0,AHnb=0](P2,P3){}  
   \drawedge[ATnb=0,AHnb=0](P1,V1){}  
      \drawedge[ATnb=0,AHnb=0](P1,V2){}  
   \end{picture} & ${\size{x_1}\choose{2}}+{\size{x_2}\choose{2}}+{\size{x_3}\choose{2}}-\mv{}{x_1}{x_1}-\mv{}{x_2}{x_2}-\mv{}{x_3}{x_3}$ &\nextdoc\\
\begin{picture}(60,35)(0,0)
\node[Nh=7,Nw=7,Nmr=15](P1)(22,7){} 
\node[Nh=7,Nw=7,Nmr=15](P2)(52,7){} 
\node[Nh=7,Nw=7,Nmr=15](P3)(37,-13){}  
\node[Nh=7,Nw=7,Nmr=15](V1)(52,-13){}  
\node[Nh=7,Nw=7,Nmr=15](V2)(22,-13){}  
  \drawedge[ATnb=0,AHnb=0](P1,P2){}  
   \drawedge[ATnb=0,AHnb=0](P1,P3){}  
   \drawedge[ATnb=0,AHnb=0](P2,P3){}  
   \drawedge[ATnb=0,AHnb=0](P2,V1){}  
      \drawedge[ATnb=0,AHnb=0](P1,V2){}  
   \end{picture} & $\size{x_1}\size{x_2}+\size{x_1}\size{x_3}+\size{x_2}\size{x_3}-\mv{}{x_1}{x_2}-\mv{}{x_1}{x_3}-\mv{}{x_2}{x_3}$ &\nextdoc\\
\begin{picture}(60,35)(0,0)
\node[Nh=7,Nw=7,Nmr=15](P1)(22,7){} 
\node[Nh=7,Nw=7,Nmr=15](P2)(52,7){} 
\node[Nh=7,Nw=7,Nmr=15](P3)(37,-13){}  
\node[Nh=7,Nw=7,Nmr=15](V1)(52,-13){}  
\node[Nh=7,Nw=7,Nmr=15](V2)(22,-13){}  
  \drawedge[ATnb=0,AHnb=0](P1,P2){}  
   \drawedge[ATnb=0,AHnb=0](P1,P3){}  
   \drawedge[ATnb=0,AHnb=0](P2,P3){}  
   \drawedge[ATnb=0,AHnb=0](P2,V1){}  
   \drawedge[ATnb=0,AHnb=0](P3,V1){}     
      \drawedge[ATnb=0,AHnb=0](P3,V2){}  
   \end{picture} & $\size{x_{1}}(\size{y_{12}}+\size{y_{13}})+\size{x_2}(\size{y_{12}}+\size{y_{23}})+n_{x_3}(\size{y_{13}}+\size{y_{23}})$&\nextdoc\\
   & $-\mv{}{x_1}{y_{12}}-\mv{}{x_1}{y_{13}}-\mv{}{x_2}{y_{12}}-\mv{}{x_2}{y_{23}}-\mv{}{x_3}{y_{13}}-\mv{}{x_3}{y_{23}}$\\
\begin{picture}(60,35)(0,0)
\node[Nh=7,Nw=7,Nmr=15](P1)(22,7){} 
\node[Nh=7,Nw=7,Nmr=15](P2)(52,7){} 
\node[Nh=7,Nw=7,Nmr=15](P3)(37,-13){}  
\node[Nh=7,Nw=7,Nmr=15](V1)(52,-13){}  
\node[Nh=7,Nw=7,Nmr=15](V2)(22,-13){}  
  \drawedge[ATnb=0,AHnb=0](P1,P2){}  
   \drawedge[ATnb=0,AHnb=0](P1,P3){}  
   \drawedge[ATnb=0,AHnb=0](P2,P3){}  
   \drawedge[ATnb=0,AHnb=0](P2,V1){}  
   \drawedge[ATnb=0,AHnb=0](P3,V1){}     
      \drawedge[ATnb=0,AHnb=0](P1,V2){}  
   \end{picture} & $\size{x_{1}}\size{y_{23}}+\size{x_2}\size{y_{13}}+n_{x_3}\size{y_{23}}-\mv{}{x_1}{y_{23}}-\mv{}{x_2}{y_{13}}-\mv{}{x_3}{y_{12}}$&\nextdoc\\
   \begin{picture}(60,35)(0,0)
\node[Nh=7,Nw=7,Nmr=15](P1)(22,7){} 
\node[Nh=7,Nw=7,Nmr=15](P2)(52,7){} 
\node[Nh=7,Nw=7,Nmr=15](P3)(37,-13){}  
\node[Nh=7,Nw=7,Nmr=15](V1)(52,-13){}  
\node[Nh=7,Nw=7,Nmr=15](V2)(37,0){}  
  \drawedge[ATnb=0,AHnb=0](P1,P2){}  
   \drawedge[ATnb=0,AHnb=0](P1,P3){}  
   \drawedge[ATnb=0,AHnb=0](P2,P3){}  
   \drawedge[ATnb=0,AHnb=0](P2,V1){}  
      \drawedge[ATnb=0,AHnb=0](P1,V2){}  
         \drawedge[ATnb=0,AHnb=0](P2,V2){}  
            \drawedge[ATnb=0,AHnb=0](P3,V2){}  
   \end{picture} & $(\size{x_{1}}+\size{x_{2}}+\size{x_{3}})\size{z}-\mv{}{x_1}{z}-\mv{}{x_2}{z}-\mv{}{x_3}{z}$&\nextdoc\\
   \begin{picture}(60,35)(0,0)
\node[Nh=7,Nw=7,Nmr=15](P1)(22,7){} 
\node[Nh=7,Nw=7,Nmr=15](P2)(52,7){} 
\node[Nh=7,Nw=7,Nmr=15](P3)(37,-13){}  
\node[Nh=7,Nw=7,Nmr=15](V1)(22,-13){}  
\node[Nh=7,Nw=7,Nmr=15](V2)(37,0){}  
  \drawedge[ATnb=0,AHnb=0](P1,P2){}  
   \drawedge[ATnb=0,AHnb=0](P1,P3){}  
   \drawedge[ATnb=0,AHnb=0](P2,P3){}  
   \drawedge[ATnb=0,AHnb=0](P1,V1){}  
      \drawedge[ATnb=0,AHnb=0](P3,V1){}  
      \drawedge[ATnb=0,AHnb=0](P1,V2){}  
            \drawedge[ATnb=0,AHnb=0](P3,V2){}  
   \end{picture} &  ${\size{y_{12}}\choose{2}}+{\size{y_{13}}\choose{2}}+{\size{y_{23}}\choose{2}}-\mv{}{y_{12}}{y_{13}}-\mv{}{y_{12}}{y_{23}}-\mv{}{y_{13}}{y_{23}}$&\nextdoc\\
   \begin{picture}(60,35)(0,0)
\node[Nh=7,Nw=7,Nmr=15](P1)(22,7){} 
\node[Nh=7,Nw=7,Nmr=15](P2)(52,7){} 
\node[Nh=7,Nw=7,Nmr=15](P3)(37,-13){}  
\node[Nh=7,Nw=7,Nmr=15](V1)(52,-13){}  
\node[Nh=7,Nw=7,Nmr=15](V2)(22,-13){}  
  \drawedge[ATnb=0,AHnb=0](P1,P2){}  
   \drawedge[ATnb=0,AHnb=0](P1,P3){}  
   \drawedge[ATnb=0,AHnb=0](P2,P3){}  
   \drawedge[ATnb=0,AHnb=0](P2,V1){}  
      \drawedge[ATnb=0,AHnb=0](P3,V1){}  
      \drawedge[ATnb=0,AHnb=0](P1,V2){}  
      \drawedge[ATnb=0,AHnb=0](P3,V2){}  
   \end{picture} & $\size{y_{12}}\size{y_{13}}+\size{y_{12}}\size{y_{23}}+\size{y_{13}}\size{y_{23}}-\mv{}{y_{12}}{y_{13}}-\mv{}{y_{12}}{y_{23}}-\mv{}{y_{13}}{\size{y_{23}}}$&\nextdoc\\
   & +$\mv{}{x_1}{y_{12}}+\mv{}{x_1}{y_{13}}+\mv{}{x_2}{y_{12}}+\mv{}{x_2}{y_{23}}+\mv{}{x_3}{y_{13}}+\mv{}{x_3}{y_{23}}$\\
     \begin{picture}(60,35)(0,0)
\node[Nh=7,Nw=7,Nmr=15](P1)(22,7){} 
\node[Nh=7,Nw=7,Nmr=15](P2)(52,7){} 
\node[Nh=7,Nw=7,Nmr=15](P3)(37,-13){}  
\node[Nh=7,Nw=7,Nmr=15](V1)(52,-13){}  
\node[Nh=7,Nw=7,Nmr=15](V2)(37,0){}  
  \drawedge[ATnb=0,AHnb=0](P1,P2){}  
   \drawedge[ATnb=0,AHnb=0](P1,P3){}  
   \drawedge[ATnb=0,AHnb=0](P2,P3){}  
   \drawedge[ATnb=0,AHnb=0](P2,V1){}  
   \drawedge[ATnb=0,AHnb=0](P3,V1){}  
      \drawedge[ATnb=0,AHnb=0](P1,V2){}  
         \drawedge[ATnb=0,AHnb=0](P2,V2){}  
            \drawedge[ATnb=0,AHnb=0](P3,V2){}  
   \end{picture} & $(\size{y_{12}}+\size{y_{13}}+\size{y_{23}})\size{z}-\mv{}{y_{12}}{z}-\mv{}{y_{13}}{z}-\mv{}{y_{23}}{z}$&\nextdoc\\
      &$\mv{}{x_1}{z}+\mv{}{x_2}{z}+\mv{}{x_3}{z}+\mv{}{y_{12}}{y_{13}}+\mv{}{y_{12}}{y_{23}}+\mv{}{y_{13}}{y_{23}}$\\
       \begin{picture}(60,35)(0,0)
\node[Nh=7,Nw=7,Nmr=15](P1)(22,7){} 
\node[Nh=7,Nw=7,Nmr=15](P2)(52,7){} 
\node[Nh=7,Nw=7,Nmr=15](P3)(37,-13){}  
\node[Nh=7,Nw=7,Nmr=15](V1)(52,-13){}  
\node[Nh=7,Nw=7,Nmr=15](V2)(37,0){}  
  \drawedge[ATnb=0,AHnb=0](P1,P2){}  
   \drawedge[ATnb=0,AHnb=0](P1,P3){}  
   \drawedge[ATnb=0,AHnb=0](P2,P3){}  
   \drawedge[ATnb=0,AHnb=0](P2,V1){}  
   \drawedge[ATnb=0,AHnb=0](P3,V1){}  
      \drawedge[ATnb=0,AHnb=0](P1,V2){}  
         \drawedge[ATnb=0,AHnb=0](P2,V2){}  
            \drawedge[ATnb=0,AHnb=0](V1,V2){}  
   \end{picture} & $\mv{}{\size{y_{12}}}{\size{y_{13}}}+\mv{}{\size{y_{12}}}{\size{y_{23}}}+\mv{}{\size{y_{13}}}{\size{y_{23}}}$&\nextdoc\\
     \begin{picture}(60,35)(0,0)
\node[Nh=7,Nw=7,Nmr=15](P1)(22,7){} 
\node[Nh=7,Nw=7,Nmr=15](P2)(52,7){} 
\node[Nh=7,Nw=7,Nmr=15](P3)(37,-13){}  
\node[Nh=7,Nw=7,Nmr=15](V1)(52,-13){}  
\node[Nh=7,Nw=7,Nmr=15](V2)(37,0){}  
  \drawedge[ATnb=0,AHnb=0](P1,P2){}  
   \drawedge[ATnb=0,AHnb=0](P1,P3){}  
   \drawedge[ATnb=0,AHnb=0](P2,P3){}  
   \drawedge[ATnb=0,AHnb=0](P2,V1){}  
   \drawedge[ATnb=0,AHnb=0](P3,V1){}  
      \drawedge[ATnb=0,AHnb=0](P1,V2){}  
         \drawedge[ATnb=0,AHnb=0](P2,V2){}  
            \drawedge[ATnb=0,AHnb=0](P3,V2){}  
            \drawbpedge[ATnb=0,AHnb=0](P1,20,30,V1,60,40){}
   \end{picture} & ${\size{z}\choose{2}}-m{}{z}{z}+\mv{}{y_{12}}{z}+\mv{}{y_{13}}{z}+\mv{}{y_{23}}{z}$&\nextdoc\\
   \begin{picture}(60,35)(0,00)
\node[Nh=7,Nw=7,Nmr=15](P1)(22,7){} 
\node[Nh=7,Nw=7,Nmr=15](P2)(52,7){} 
\node[Nh=7,Nw=7,Nmr=15](P3)(37,-13){}  
\node[Nh=7,Nw=7,Nmr=15](V1)(15,-13){}  
\node[Nh=7,Nw=7,Nmr=15](V2)(26,-13){}  
  \drawedge[ATnb=0,AHnb=0](P1,P2){}  
   \drawedge[ATnb=0,AHnb=0](P1,P3){}  
   \drawedge[ATnb=0,AHnb=0](P2,P3){}  
   \drawedge[ATnb=0,AHnb=0](P1,V1){}  
      \drawedge[ATnb=0,AHnb=0](P1,V2){} 
      \drawedge[ATnb=0,AHnb=0](V1,V2){}  
   \end{picture} & $\mv{}{x_1}{x_1}+\mv{}{x_2}{x_2}+\mv{}{x_3}{x_3}$&\nextdoc\\
      \begin{picture}(60,35)(0,00)
\node[Nh=7,Nw=7,Nmr=15](P1)(22,7){} 
\node[Nh=7,Nw=7,Nmr=15](P2)(52,7){} 
\node[Nh=7,Nw=7,Nmr=15](P3)(37,-13){}  
\node[Nh=7,Nw=7,Nmr=15](V1)(15,-13){}  
\node[Nh=7,Nw=7,Nmr=15](V2)(26,-13){}  
  \drawedge[ATnb=0,AHnb=0](P1,P2){}  
   \drawedge[ATnb=0,AHnb=0](P1,P3){}  
   \drawedge[ATnb=0,AHnb=0](P2,P3){}  
   \drawedge[ATnb=0,AHnb=0](P1,V1){}  
      \drawedge[ATnb=0,AHnb=0](P3,V2){} 
      \drawedge[ATnb=0,AHnb=0](V1,V2){}  
   \end{picture} & $\mv{}{x_1}{x_2}+\mv{}{x_1}{x_3}+\mv{}{x_2}{x_3}$&\nextdoc\\
     \begin{picture}(60,35)(0,0)
\node[Nh=7,Nw=7,Nmr=15](P1)(22,7){} 
\node[Nh=7,Nw=7,Nmr=15](P2)(52,7){} 
\node[Nh=7,Nw=7,Nmr=15](P3)(37,-13){}  
\node[Nh=7,Nw=7,Nmr=15](V1)(22,-13){}  
\node[Nh=7,Nw=7,Nmr=15](V2)(37,0){}  
  \drawedge[ATnb=0,AHnb=0](P1,P2){}  
   \drawedge[ATnb=0,AHnb=0](V1,V2){}  
   \drawedge[ATnb=0,AHnb=0](P2,P3){}  
   \drawedge[ATnb=0,AHnb=0](P1,V1){}  
      \drawedge[ATnb=0,AHnb=0](P3,V1){}  
      \drawedge[ATnb=0,AHnb=0](P1,V2){}  
            \drawedge[ATnb=0,AHnb=0](P3,V2){}  
   \end{picture} &  $\mv{}{x_1}{y_{23}}+\mv{}{x_2}{y_{13}}+\mv{}{x_3}{y_{12}}$&\nextdoc\\
    \begin{picture}(60,35)(0,0)
\node[Nh=7,Nw=7,Nmr=15](P1)(22,7){} 
\node[Nh=7,Nw=7,Nmr=15](P2)(52,7){} 
\node[Nh=7,Nw=7,Nmr=15](P3)(37,-13){}  
\node[Nh=7,Nw=7,Nmr=15](V1)(52,-13){}  
\node[Nh=7,Nw=7,Nmr=15](V2)(37,0){}  
  \drawedge[ATnb=0,AHnb=0](P1,P2){}  
   \drawedge[ATnb=0,AHnb=0](P1,P3){}  
   \drawedge[ATnb=0,AHnb=0](P2,P3){}  
   \drawedge[ATnb=0,AHnb=0](P2,V1){}  
   \drawedge[ATnb=0,AHnb=0](P3,V1){}  
   \drawedge[ATnb=0,AHnb=0](V1,V2){}     
      \drawedge[ATnb=0,AHnb=0](P1,V2){}  
         \drawedge[ATnb=0,AHnb=0](P2,V2){}  
            \drawedge[ATnb=0,AHnb=0](P3,V2){}  
            \drawbpedge[ATnb=0,AHnb=0](P1,20,30,V1,60,40){}
   \end{picture} & $\mv{}{z}{z}$&\nextdoc\\
   \end{tabular}
}
\caption{$\kpt$ frequencies in a general graph.}
\label{table:histok}
\end{table}

\begin{table}[htb]
\centerline{
\setlength{\unitlength}{.2mm}
\begin{tabular}{clcc}
{\bf Pattern~~~~~}  & {\bf Frequency}  \\
\begin{picture}(60,35)(0,00)
\node[Nh=7,Nw=7,Nmr=15](P1)(22,7){} 
\node[Nh=7,Nw=7,Nmr=15](P2)(52,7){} 
\node[Nh=7,Nw=7,Nmr=15](P3)(37,-13){}  
\node[Nh=7,Nw=7,Nmr=15](V1)(15,-13){}  
\node[Nh=7,Nw=7,Nmr=15](V2)(26,-13){}  
   \drawedge[ATnb=0,AHnb=0](P1,P3){}  
   \drawedge[ATnb=0,AHnb=0](P2,P3){}  
   \drawedge[ATnb=0,AHnb=0](P1,V1){}  
      \drawedge[ATnb=0,AHnb=0](P1,V2){}  
   \end{picture} & ${\size{x_1}\choose{2}}+{\size{x_2}\choose{2}}-\mv{}{x_1}{x_1}-\mv{}{x_3}{x_3}$\\
   \begin{picture}(60,35)(0,00)
\node[Nh=7,Nw=7,Nmr=15](P1)(22,7){} 
\node[Nh=7,Nw=7,Nmr=15](P2)(52,7){} 
\node[Nh=7,Nw=7,Nmr=15](P3)(37,-13){}  
\node[Nh=7,Nw=7,Nmr=15](V1)(15,-13){}  
\node[Nh=7,Nw=7,Nmr=15](V2)(26,-13){}  
  \drawedge[ATnb=0,AHnb=0](P1,P2){}  
   \drawedge[ATnb=0,AHnb=0](P1,P3){}  
   \drawedge[ATnb=0,AHnb=0](P1,V1){}  
      \drawedge[ATnb=0,AHnb=0](P1,V2){}  
   \end{picture} & ${\size{x_2}\choose{2}}-\mv{}{x_2}{x_2}-\mv{}{x_3}{x_3}$\\
\begin{picture}(60,35)(0,0)
\node[Nh=7,Nw=7,Nmr=15](P1)(22,7){} 
\node[Nh=7,Nw=7,Nmr=15](P2)(52,7){} 
\node[Nh=7,Nw=7,Nmr=15](P3)(37,-13){}  
\node[Nh=7,Nw=7,Nmr=15](V1)(52,-13){}  
\node[Nh=7,Nw=7,Nmr=15](V2)(22,-13){}  
  \drawedge[ATnb=0,AHnb=0](P1,P2){}  
   \drawedge[ATnb=0,AHnb=0](P2,P3){}  
   \drawedge[ATnb=0,AHnb=0](P2,V1){}  
      \drawedge[ATnb=0,AHnb=0](P1,V2){}  
   \end{picture} & $\size{x_1}\size{x_2}+\size{x_2}\size{x_3}-\mv{}{x_1}{x_2}-\mv{}{x_2}{x_3}$\\
\begin{picture}(60,35)(0,0)
\node[Nh=7,Nw=7,Nmr=15](P1)(22,7){} 
\node[Nh=7,Nw=7,Nmr=15](P2)(52,7){} 
\node[Nh=7,Nw=7,Nmr=15](P3)(37,-13){}  
\node[Nh=7,Nw=7,Nmr=15](V1)(52,-13){}  
\node[Nh=7,Nw=7,Nmr=15](V2)(22,-13){}  
   \drawedge[ATnb=0,AHnb=0](P1,P3){}  
   \drawedge[ATnb=0,AHnb=0](P2,P3){}  
   \drawedge[ATnb=0,AHnb=0](P2,V1){}  
      \drawedge[ATnb=0,AHnb=0](P1,V2){}  
   \end{picture} & $\size{x_1}\size{x_3}-\mv{}{x_1}{x_3}$ \\
\begin{picture}(60,35)(0,0)
\node[Nh=7,Nw=7,Nmr=15](P1)(22,7){} 
\node[Nh=7,Nw=7,Nmr=15](P2)(52,7){} 
\node[Nh=7,Nw=7,Nmr=15](P3)(37,-13){}  
\node[Nh=7,Nw=7,Nmr=15](V1)(52,-13){}  
\node[Nh=7,Nw=7,Nmr=15](V2)(22,-13){}  
  \drawedge[ATnb=0,AHnb=0](P1,P2){}  
   \drawedge[ATnb=0,AHnb=0](P1,P3){}  
   \drawedge[ATnb=0,AHnb=0](P2,P3){}  
   \drawedge[ATnb=0,AHnb=0](P2,V1){}  
      \drawedge[ATnb=0,AHnb=0](P1,V2){}  
   \end{picture}  & $\size{x_{1}}\size{y_{12}}+\size{x_3}\size{y_{23}}-\mv{}{x_1}{y_{12}}-\mv{}{x_3}{y_{23}}$\\
\begin{picture}(60,35)(0,0)
\node[Nh=7,Nw=7,Nmr=15](P1)(22,7){} 
\node[Nh=7,Nw=7,Nmr=15](P2)(52,7){} 
\node[Nh=7,Nw=7,Nmr=15](P3)(37,-13){}  
\node[Nh=7,Nw=7,Nmr=15](V1)(52,-13){}  
\node[Nh=7,Nw=7,Nmr=15](V2)(22,-13){}  
  \drawedge[ATnb=0,AHnb=0](P1,P2){}  
   \drawedge[ATnb=0,AHnb=0](P1,P3){}  
   \drawedge[ATnb=0,AHnb=0](P3,V1){}  
   \drawedge[ATnb=0,AHnb=0](P2,V1){}  
      \drawedge[ATnb=0,AHnb=0](P1,V2){}  
   \end{picture}  & $\size{x_{1}}\size{y_{13}}+\size{x_3}\size{y_{13}}+\size{x_2}\size{y_{13}}-\mv{}{x_1}{y_{13}}-\mv{}{x_3}{y_{13}}-\mv{}{x_2}{y_{13}}+\mv{}{x_1}{x_2}+\mv{}{x_2}{x_3}$\\
\begin{picture}(60,35)(0,0)
\node[Nh=7,Nw=7,Nmr=15](P1)(22,7){} 
\node[Nh=7,Nw=7,Nmr=15](P2)(52,7){} 
\node[Nh=7,Nw=7,Nmr=15](P3)(37,-13){}  
\node[Nh=7,Nw=7,Nmr=15](V1)(52,-13){}  
\node[Nh=7,Nw=7,Nmr=15](V2)(22,-13){}  
  \drawedge[ATnb=0,AHnb=0](P2,P3){}  
   \drawedge[ATnb=0,AHnb=0](P1,P3){}  
   \drawedge[ATnb=0,AHnb=0](P3,V1){}  
   \drawedge[ATnb=0,AHnb=0](P2,V1){}  
      \drawedge[ATnb=0,AHnb=0](P1,V2){}  
   \end{picture}  & $\size{x_{1}}\size{y_{23}}+\size{x_3}\size{y_{12}}-\mv{}{x_1}{y_{23}}-\mv{}{x_3}{y_{12}}+\mv{}{x_1}{x_1}+\mv{}{x_3}{x_3}$ \\
\begin{picture}(60,35)(0,00)
\node[Nh=7,Nw=7,Nmr=15](P1)(22,7){} 
\node[Nh=7,Nw=7,Nmr=15](P2)(52,7){} 
\node[Nh=7,Nw=7,Nmr=15](P3)(37,-13){}  
\node[Nh=7,Nw=7,Nmr=15](V1)(15,-13){}  
\node[Nh=7,Nw=7,Nmr=15](V2)(26,-13){}  
  \drawedge[ATnb=0,AHnb=0](P1,P2){}  
   \drawedge[ATnb=0,AHnb=0](P1,P3){}  
   \drawedge[ATnb=0,AHnb=0](P2,P3){}  
   \drawedge[ATnb=0,AHnb=0](P1,V1){}  
      \drawedge[ATnb=0,AHnb=0](P1,V2){}  
   \end{picture}  & $\size{x_{2}}\size{y_{12}}+\size{x_2}\size{y_{23}}-\mv{}{x_2}{y_{12}}-\mv{}{x_2}{y_{23}}+\mv{}{x_2}{x_2}$\\
   \begin{picture}(60,35)(0,0)
\node[Nh=7,Nw=7,Nmr=15](P1)(22,7){} 
\node[Nh=7,Nw=7,Nmr=15](P2)(52,7){} 
\node[Nh=7,Nw=7,Nmr=15](P3)(37,-13){}  
\node[Nh=7,Nw=7,Nmr=15](V1)(52,-13){}  
\node[Nh=7,Nw=7,Nmr=15](V2)(37,0){}  
   \drawedge[ATnb=0,AHnb=0](P3,V1){}  
   \drawedge[ATnb=0,AHnb=0](P2,P3){}  
   \drawedge[ATnb=0,AHnb=0](P2,V1){}  
      \drawedge[ATnb=0,AHnb=0](P1,V2){}  
         \drawedge[ATnb=0,AHnb=0](P2,V2){}  
            \drawedge[ATnb=0,AHnb=0](P3,V2){}  
   \end{picture} & $(\size{x_{1}}+\size{x_{3}})\size{z}-\mv{}{x_1}{z}-\mv{}{x_3}{z}+\mv{}{x_1}{y_{12}}+\mv{}{x_3}{y_{23}}$\\
      \begin{picture}(60,35)(0,0)
\node[Nh=7,Nw=7,Nmr=15](P1)(22,7){} 
\node[Nh=7,Nw=7,Nmr=15](P2)(52,7){} 
\node[Nh=7,Nw=7,Nmr=15](P3)(37,-13){}  
\node[Nh=7,Nw=7,Nmr=15](V1)(52,-13){}  
\node[Nh=7,Nw=7,Nmr=15](V2)(37,0){}  
   \drawedge[ATnb=0,AHnb=0](P3,V1){}  
   \drawedge[ATnb=0,AHnb=0](V1,V2){}  
   \drawedge[ATnb=0,AHnb=0](P2,V1){}  
      \drawedge[ATnb=0,AHnb=0](P1,V2){}  
         \drawedge[ATnb=0,AHnb=0](P2,V2){}  
            \drawedge[ATnb=0,AHnb=0](P3,V2){}  
   \end{picture} & $\size{x_{2}}\size{z}+{{\size{y_{12}}\choose{2}}}+{{\size{y_{23}}\choose{2}}}-\mv{}{x_2}{z}-\mv{}{y_{12}}{y_{12}}-\mv{}{y_{23}}{y_{23}}+\mv{}{x_2}{y_{12}}+\mv{}{x_2}{y_{23}}$\\
  \begin{picture}(60,35)(0,0)
\node[Nh=7,Nw=7,Nmr=15](P1)(22,7){} 
\node[Nh=7,Nw=7,Nmr=15](P2)(52,7){} 
\node[Nh=7,Nw=7,Nmr=15](P3)(37,-13){}  
\node[Nh=7,Nw=7,Nmr=15](V1)(22,-13){}  
\node[Nh=7,Nw=7,Nmr=15](V2)(37,0){}  
  \drawedge[ATnb=0,AHnb=0](P1,P2){}  
   \drawedge[ATnb=0,AHnb=0](P2,P3){}  
   \drawedge[ATnb=0,AHnb=0](P1,V1){}  
      \drawedge[ATnb=0,AHnb=0](P3,V1){}  
      \drawedge[ATnb=0,AHnb=0](P1,V2){}  
            \drawedge[ATnb=0,AHnb=0](P3,V2){}  
   \end{picture}&  ${\size{y_{13}}\choose{2}}-\mv{}{y_{13}}{y_{13}}+\mv{}{x_1}{y_{23}}+\mv{}{x_2}{y_{13}}+\mv{}{x_3}{y_{12}}+\mv{}{x_3}{y_{13}}+\mv{}{12}{13}$\\
   \begin{picture}(60,35)(0,00)
\node[Nh=7,Nw=7,Nmr=15](P1)(22,7){} 
\node[Nh=7,Nw=7,Nmr=15](P2)(52,7){} 
\node[Nh=7,Nw=7,Nmr=15](P3)(37,-13){}  
\node[Nh=7,Nw=7,Nmr=15](V1)(15,-13){}  
\node[Nh=7,Nw=7,Nmr=15](V2)(26,-13){}  
  \drawedge[ATnb=0,AHnb=0](P1,P2){}  
   \drawedge[ATnb=0,AHnb=0](P1,P3){}  
   \drawedge[ATnb=0,AHnb=0](P2,P3){}  
   \drawedge[ATnb=0,AHnb=0](P1,V1){}  
      \drawedge[ATnb=0,AHnb=0](P3,V2){} 
      \drawedge[ATnb=0,AHnb=0](V1,V2){}  
   \end{picture}  & $\size{y_{13}}\size{y_{23}}+\size{y_{12}}\size{y_{13}}-\mv{}{y_{13}}{y_{23}}-\mv{}{y_{12}}{y_{13}}$\\
     \begin{picture}(60,35)(0,00)
\node[Nh=7,Nw=7,Nmr=15](P1)(22,7){} 
\node[Nh=7,Nw=7,Nmr=15](P2)(52,7){} 
\node[Nh=7,Nw=7,Nmr=15](P3)(37,-13){}  
\node[Nh=7,Nw=7,Nmr=15](V1)(15,-13){}  
\node[Nh=7,Nw=7,Nmr=15](V2)(26,-13){}  
  \drawedge[ATnb=0,AHnb=0](P1,P2){}  
   \drawedge[ATnb=0,AHnb=0](P1,P3){}  
   \drawedge[ATnb=0,AHnb=0](P2,P3){}  
   \drawedge[ATnb=0,AHnb=0](P1,V1){}  
      \drawedge[ATnb=0,AHnb=0](P1,V2){} 
      \drawedge[ATnb=0,AHnb=0](V1,V2){}  
   \end{picture} & $\size{y_{12}}\size{y_{23}}-\mv{}{y_{12}}{y_{23}}$ \\
       \begin{picture}(60,35)(0,0)
\node[Nh=7,Nw=7,Nmr=15](P1)(22,7){} 
\node[Nh=7,Nw=7,Nmr=15](P2)(52,7){} 
\node[Nh=7,Nw=7,Nmr=15](P3)(37,-13){}  
\node[Nh=7,Nw=7,Nmr=15](V1)(52,-13){}  
\node[Nh=7,Nw=7,Nmr=15](V2)(22,-13){}  
  \drawedge[ATnb=0,AHnb=0](P1,P2){}  
   \drawedge[ATnb=0,AHnb=0](P1,P3){}  
   \drawedge[ATnb=0,AHnb=0](P2,P3){}  
   \drawedge[ATnb=0,AHnb=0](P2,V1){}  
      \drawedge[ATnb=0,AHnb=0](P3,V1){}  
      \drawedge[ATnb=0,AHnb=0](P1,V2){}  
      \drawedge[ATnb=0,AHnb=0](P3,V2){}  
   \end{picture} & $\size{y_{12}}\size{z}+\size{y_{23}}\size{z}-\mv{}{y_{12}}{z}-\mv{}{y_{23}}{z}+\mv{}{x_1}{z}+\mv{}{x_3}{z}+\mv{}{y_{12}}{y_{23}}$\\
     \begin{picture}(60,35)(0,0)
\node[Nh=7,Nw=7,Nmr=15](P1)(22,7){} 
\node[Nh=7,Nw=7,Nmr=15](P2)(52,7){} 
\node[Nh=7,Nw=7,Nmr=15](P3)(37,-13){}  
\node[Nh=7,Nw=7,Nmr=15](V1)(22,-13){}  
\node[Nh=7,Nw=7,Nmr=15](V2)(37,0){}  
  \drawedge[ATnb=0,AHnb=0](P1,P2){}  
   \drawedge[ATnb=0,AHnb=0](V1,V2){}  
   \drawedge[ATnb=0,AHnb=0](P2,P3){}  
   \drawedge[ATnb=0,AHnb=0](P1,V1){}  
      \drawedge[ATnb=0,AHnb=0](P3,V1){}  
      \drawedge[ATnb=0,AHnb=0](P1,V2){}  
            \drawedge[ATnb=0,AHnb=0](P3,V2){}  
   \end{picture}& $\size{y_{13}}\size{z}-\mv{y_{13}}{z}+\mv{}{y_{13}}{y_{13}}+\mv{}{13}{23}$\\
    \begin{picture}(60,35)(0,0)
\node[Nh=7,Nw=7,Nmr=15](P1)(22,7){} 
\node[Nh=7,Nw=7,Nmr=15](P2)(52,7){} 
\node[Nh=7,Nw=7,Nmr=15](P3)(37,-13){}  
\node[Nh=7,Nw=7,Nmr=15](V1)(52,-13){}  
\node[Nh=7,Nw=7,Nmr=15](V2)(37,0){}  
  \drawedge[ATnb=0,AHnb=0](P1,P2){}  
   \drawedge[ATnb=0,AHnb=0](P1,P3){}  
   \drawedge[ATnb=0,AHnb=0](P2,P3){}  
   \drawedge[ATnb=0,AHnb=0](P2,V1){}  
   \drawedge[ATnb=0,AHnb=0](P3,V1){}  
      \drawedge[ATnb=0,AHnb=0](P1,V2){}  
         \drawedge[ATnb=0,AHnb=0](P2,V2){}  
            \drawedge[ATnb=0,AHnb=0](V1,V2){}  
   \end{picture} & ${\size{z}\choose{2}}-\mv{}{z}{z}$\\
     \begin{picture}(60,35)(0,0)
\node[Nh=7,Nw=7,Nmr=15](P1)(22,7){} 
\node[Nh=7,Nw=7,Nmr=15](P2)(52,7){} 
\node[Nh=7,Nw=7,Nmr=15](P3)(37,-13){}  
\node[Nh=7,Nw=7,Nmr=15](V1)(52,-13){}  
\node[Nh=7,Nw=7,Nmr=15](V2)(22,-13){}  
  \drawedge[ATnb=0,AHnb=0](P1,P2){}  
   \drawedge[ATnb=0,AHnb=0](P3,V2){}  
   \drawedge[ATnb=0,AHnb=0](P3,V1){}  
   \drawedge[ATnb=0,AHnb=0](P2,V1){}  
      \drawedge[ATnb=0,AHnb=0](P1,V2){}  
   \end{picture}  & $\mv{}{x_1}{x_3}$\\
      \begin{picture}(60,35)(0,0)
\node[Nh=7,Nw=7,Nmr=15](P1)(22,7){} 
\node[Nh=7,Nw=7,Nmr=15](P2)(52,7){} 
\node[Nh=7,Nw=7,Nmr=15](P3)(37,-13){}  
\node[Nh=7,Nw=7,Nmr=15](V1)(22,-13){}  
\node[Nh=7,Nw=7,Nmr=15](V2)(37,0){}  
  \drawedge[ATnb=0,AHnb=0](P1,P2){}  
   \drawedge[ATnb=0,AHnb=0](P1,P3){}  
   \drawedge[ATnb=0,AHnb=0](P2,P3){}  
   \drawedge[ATnb=0,AHnb=0](P1,V1){}  
      \drawedge[ATnb=0,AHnb=0](P3,V1){}  
      \drawedge[ATnb=0,AHnb=0](P1,V2){}  
            \drawedge[ATnb=0,AHnb=0](P3,V2){}  
   \end{picture}& $\mv{}{x_2}{z}$\\
      \begin{picture}(60,35)(0,0)
\node[Nh=7,Nw=7,Nmr=15](P1)(22,7){} 
\node[Nh=7,Nw=7,Nmr=15](P2)(52,7){} 
\node[Nh=7,Nw=7,Nmr=15](P3)(37,-13){}  
\node[Nh=7,Nw=7,Nmr=15](V1)(52,-13){}  
\node[Nh=7,Nw=7,Nmr=15](V2)(37,0){}  
   \drawedge[ATnb=0,AHnb=0](P3,V1){}  
   \drawedge[ATnb=0,AHnb=0](V1,V2){}  
   \drawedge[ATnb=0,AHnb=0](P2,V1){}  
\drawedge[ATnb=0,AHnb=0](P2,P3){}     
      \drawedge[ATnb=0,AHnb=0](P1,V2){}  
         \drawedge[ATnb=0,AHnb=0](P2,V2){}  
            \drawedge[ATnb=0,AHnb=0](P3,V2){}  
   \end{picture} & $\mv{}{y_{12}}{y_{12}}+\mv{}{y_{23}}{y_{23}}$\\
    \begin{picture}(60,35)(0,0)
\node[Nh=7,Nw=7,Nmr=15](P1)(22,7){} 
\node[Nh=7,Nw=7,Nmr=15](P2)(52,7){} 
\node[Nh=7,Nw=7,Nmr=15](P3)(37,-13){}  
\node[Nh=7,Nw=7,Nmr=15](V1)(52,-13){}  
\node[Nh=7,Nw=7,Nmr=15](V2)(37,0){}  
  \drawedge[ATnb=0,AHnb=0](P1,P2){}  
   \drawedge[ATnb=0,AHnb=0](P3,V1){}  
   \drawedge[ATnb=0,AHnb=0](V1,V2){}  
   \drawedge[ATnb=0,AHnb=0](P2,V1){}  
\drawedge[ATnb=0,AHnb=0](P2,P3){}     
      \drawedge[ATnb=0,AHnb=0](P1,V2){}  
         \drawedge[ATnb=0,AHnb=0](P2,V2){}  
            \drawedge[ATnb=0,AHnb=0](P3,V2){}  
   \end{picture} & $\mv{}{y_{12}}{z}+\mv{}{y_{13}}{z}+\mv{}{y_{23}}{z}$\\
   \begin{picture}(60,35)(0,0)
\node[Nh=7,Nw=7,Nmr=15](P1)(22,7){} 
\node[Nh=7,Nw=7,Nmr=15](P2)(52,7){} 
\node[Nh=7,Nw=7,Nmr=15](P3)(37,-13){}  
\node[Nh=7,Nw=7,Nmr=15](V1)(52,-13){}  
\node[Nh=7,Nw=7,Nmr=15](V2)(37,0){}  
  \drawedge[ATnb=0,AHnb=0](P1,P2){}  
   \drawedge[ATnb=0,AHnb=0](P1,P3){}  
   \drawedge[ATnb=0,AHnb=0](P2,P3){}  
   \drawedge[ATnb=0,AHnb=0](P2,V1){}  
   \drawedge[ATnb=0,AHnb=0](P3,V1){}  
      \drawedge[ATnb=0,AHnb=0](P1,V2){}  
         \drawedge[ATnb=0,AHnb=0](P2,V2){}  
            \drawedge[ATnb=0,AHnb=0](P3,V2){}  
            \drawbpedge[ATnb=0,AHnb=0](P1,20,30,V1,60,40){}
   \end{picture} & $\mv{}{z}{z}$
   \end{tabular}
}
\caption{$\ppt$ frequencies in a general graph.}
\label{table:histop}
\end{table}

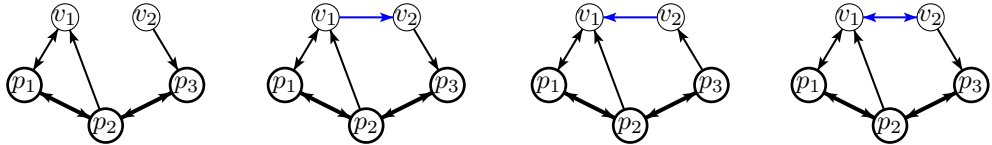
\begin{figure}[ht]
\centerline{
\setlength{\unitlength}{.9mm}
\begin{tabular}{ccccc} 
\hspace{3cm}  &  \hspace{3cm}  & \hspace{3cm} \\ 
 $\ptaux{\Delta}{T_i}{T2}$ &  $\pvaiaux{\Delta}{T_i}{T2}$ &  $\pvoltaaux{\Delta}{T_i}{T2}$ &  $\pvaivoltaaux{\Delta}{T_i}{T2}$\\
~\vspace{1cm}\\
\begin{picture}(0,0)(18,18)
\node[Nh=5,Nw=5,Nmr=4,linewidth=.4](0)(12,15){$p_1$}
\node[Nh=5,Nw=5,Nmr=4,linewidth=.4](1)(24,9){$p_2$}
\node[Nh=5,Nw=5,Nmr=4,linewidth=.4](2)(36,15){$p_3$}
\node[Nh=4,Nw=4,Nmr=4](v1)(18,25){$v_1$}
\node[Nh=4,Nw=4,Nmr=4](v2)(30,25){$v_2$}
\drawedge[ATnb=1,AHnb=1,linewidth=.6,AHLength=2.1,ATLength=2.1](0,1){}
\drawedge[ATnb=1,AHnb=1,linewidth=.6,AHLength=2.1,ATLength=2.1](1,2){}
\drawedge[ATnb=1,AHnb=1,linewidth=.3,AHLength=2.1,ATLength=2.1](0,v1){}
\drawedge[ATnb=0,AHnb=1,linewidth=.3,AHLength=2.1,ATLength=2.1](1,v1){}
\drawedge[ATnb=1,AHnb=0,linewidth=.3,AHLength=2.1,ATLength=2.1](2,v2){}
\end{picture} & 
\begin{picture}(0,0)(18,18)
\node[Nh=5,Nw=5,Nmr=4,linewidth=.4](0)(12,15){$p_1$}
\node[Nh=5,Nw=5,Nmr=4,linewidth=.4](1)(24,9){$p_2$}
\node[Nh=5,Nw=5,Nmr=4,linewidth=.4](2)(36,15){$p_3$}
\node[Nh=4,Nw=4,Nmr=4](v1)(18,25){$v_1$}
\node[Nh=4,Nw=4,Nmr=4](v2)(30,25){$v_2$}
\drawedge[ATnb=1,AHnb=1,linewidth=.6,AHLength=2.1,ATLength=2.1](0,1){}
\drawedge[ATnb=1,AHnb=1,linewidth=.6,AHLength=2.1,ATLength=2.1](1,2){}
\drawedge[ATnb=1,AHnb=1,linewidth=.3,AHLength=2.1,ATLength=2.1](0,v1){}
\drawedge[ATnb=0,AHnb=1,linewidth=.3,AHLength=2.1,ATLength=2.1](1,v1){}
\drawedge[ATnb=1,AHnb=0,linewidth=.3,AHLength=2.1,ATLength=2.1](2,v2){}
\drawedge[ATnb=0,AHnb=1,linewidth=.3,AHLength=2.1,ATLength=2.1,linecolor=blue](v1,v2){}
\end{picture} &
\begin{picture}(0,0)(18,18)
\node[Nh=5,Nw=5,Nmr=4,linewidth=.4](0)(12,15){$p_1$}
\node[Nh=5,Nw=5,Nmr=4,linewidth=.4](1)(24,9){$p_2$}
\node[Nh=5,Nw=5,Nmr=4,linewidth=.4](2)(36,15){$p_3$}
\node[Nh=4,Nw=4,Nmr=4](v1)(18,25){$v_1$}
\node[Nh=4,Nw=4,Nmr=4](v2)(30,25){$v_2$}
\drawedge[ATnb=1,AHnb=1,linewidth=.6,AHLength=2.1,ATLength=2.1](0,1){}
\drawedge[ATnb=1,AHnb=1,linewidth=.6,AHLength=2.1,ATLength=2.1](1,2){}
\drawedge[ATnb=1,AHnb=1,linewidth=.3,AHLength=2.1,ATLength=2.1](0,v1){}
\drawedge[ATnb=0,AHnb=1,linewidth=.3,AHLength=2.1,ATLength=2.1](1,v1){}
\drawedge[ATnb=0,AHnb=1,linewidth=.3,AHLength=2.1,ATLength=2.1](2,v2){}
\drawedge[ATnb=0,AHnb=1,linewidth=.3,AHLength=2.1,ATLength=2.1,linecolor=blue](v2,v1){}
\end{picture} &
\begin{picture}(0,0)(18,18)
\node[Nh=5,Nw=5,Nmr=4,linewidth=.4](0)(12,15){$p_1$}
\node[Nh=5,Nw=5,Nmr=4,linewidth=.4](1)(24,9){$p_2$}
\node[Nh=5,Nw=5,Nmr=4,linewidth=.4](2)(36,15){$p_3$}
\node[Nh=4,Nw=4,Nmr=4](v1)(18,25){$v_1$}
\node[Nh=4,Nw=4,Nmr=4](v2)(30,25){$v_2$}
\drawedge[ATnb=1,AHnb=1,linewidth=.6,AHLength=2.1,ATLength=2.1](0,1){}
\drawedge[ATnb=1,AHnb=1,linewidth=.6,AHLength=2.1,ATLength=2.1](1,2){}
\drawedge[ATnb=1,AHnb=1,linewidth=.3,AHLength=2.1,ATLength=2.1](0,v1){}
\drawedge[ATnb=0,AHnb=1,linewidth=.3,AHLength=2.1,ATLength=2.1](1,v1){}
\drawedge[ATnb=1,AHnb=0,linewidth=.3,AHLength=2.1,ATLength=2.1](2,v2){}
\drawedge[ATnb=1,AHnb=1,linewidth=.3,AHLength=2.1,ATLength=2.1,linecolor=blue](v2,v1){}
\end{picture}
~\\
~\\
~\\
\end{tabular}
}
\caption{Variations of matrix $\ptaux{\Delta}{T_i}{T_j}$ for $T_i=set(A,B,\emptyset)$ and $T_j=set(\emptyset,\emptyset,C)$. }
\label{fig:variationsfive}
\end{figure}

\begin{table}[ht!]
\begin{center}
\begin{tabular}{| l | r | r | r | r |}
\hline
\bf Grafos 	&\bf 	(n,m) 	&\bf $\acc$ (ms) 					& \bf Famod (ms) 		& \bf Kavosh   (ms)\\
\hline
E.coli 	\cite{AlonDataset}		&	$(418, 519)$	& $0.13\pm0.005$ & $8.0\pm0.003$ & $4.5\pm0.02$ \\
CSphd \cite{Pajek2006}		&	$(1882, 1740)$		& $0.68\pm0.009$ & $9.7\pm0.001$ & $5.3\pm0.03$ \\
Yeast~\cite{AlonDataset}	&	$(688, 1079)$& $0.25\pm0.005$ &  $22.7\pm0.03$ & $11.5\pm0.07$ \\
Roget 	\cite{Pajek2006}			&	$(1022, 5074)$		& $2.1\pm0.06$ & $52.7\pm0.01$ & $30.2\pm0.4$ \\
Epa 	 \cite{Pajek2006}			&	$(4271, 8965)$	& $4.4\pm0.08$ & $387.8\pm0.1$ & $202.4\pm0.6$ \\
California \cite{Pajek2006}	&	$(6175, 16150)$	&  $11.3\pm0.14$ & $632.4\pm0.1$
 & $316.3\pm0.5$ \\
Facebook \cite{Tore}				&	$(1899,20296)	$& $14.2\pm0.29$ & $1,446\pm0.6$ & $576\pm2$ \\
ODLIS \cite{Pajek2006}		&	$(2900, 18241)	$	& $18.0\pm0.27$ & $3,150\pm2$ & $957\pm2$ \\
PairsFSG \cite{Pajek2006}	&	$(5018, 63608)$& $105\pm6$ & $3,883\pm14$ & $1,915\pm3$ \\
{\bf Airport	 \cite{Tore}}					&	$\mathbf{(1574,28236)}	$	& $\mathbf{35\pm0.2}$ & $\mathbf{5,772\pm2}$ & $\mathbf{1,250\pm6}$ \\
Foldoc \cite{Pajek2006}		&	$(12905,109092)$& $183\pm0.3$ & $7,481\pm24$ & $2,148\pm 5$ \\
\hline
 \end{tabular} 
\end{center}
\caption{Execution time to count isomorphic patterns of size 3 by processed graph using $\acc$, FANMOD and Kavosh.}
\label{temposk3}
\end{table}

\begin{table}[h!]
\begin{center}
\begin{tabular}{| l | r | r | r | r |}
\hline
\bf Grafos 								&\bf 	(n,m) 				&\bf $\acc$ (ms) 					& \bf Famod (ms) 		& \bf Kavosh   (ms)\\
\hline
E.coli 	\cite{AlonDataset}		&	$(418, 519)$	 & $3.7\pm0.6$ & $221.0\pm0.2$ & $ 	 124.0	 \pm 	 0.7$ \\		
CSphd \cite{Pajek2006}		&	$(1882, 1740)$		& $7.4\pm0.2$ &  $104.5\pm0.3$ &  $ 	 59.1	 \pm 	 0.5$ \\		
Yeast~\cite{AlonDataset}	&	$(688, 1079)	$		& $7.8\pm0.1$ & $573.7\pm0.9$ & $ 	 278.1	 \pm 	 0.8$ \\		
Roget 	\cite{Pajek2006}			&	$(1022, 5074)$		& $58\pm1.3$ & $930.8\pm1.2$ & $ 	 520.1	 \pm 	 2.1$ \\		
Epa 	 \cite{Pajek2006}			&	$(4271, 8965)$		& $200\pm7.4$ & $26,613\pm49$ &  $ 	 13,739	 \pm 	 15$ \\		
California \cite{Pajek2006}	&	$(6175, 16150)$	& $772\pm64$ & $37,484\pm101$ & $ 	 19,280	 \pm 	 110$ \\		
Facebook \cite{Tore}				&	$(1899,20296)	$	& $2135\pm75$ & $147,713\pm169$ & $ 	 59,064	 \pm 	 277$ \\		
{\bf ODLIS \cite{Pajek2006}}		&	$\mathbf{(2900, 18241)}	$	& $\mathbf{2,605\pm115}$ & $\mathbf{630,936\pm2,914}$ &  $ 	 \mathbf{210,015	 \pm 	 837}$ \\		
PairsFSG \cite{Pajek2006}	&	$(5018, 63608)$	& $9,097\pm114$ & $334,025\pm454$ & $ 	 181,957	 \pm 	 617$ \\		
Airport	 \cite{Tore}					&	$(1574,28236)	$	& $10,106\pm130$ & $794,227\pm1091$ & $ 	 163,678	 \pm 	 1,065$ \\		
Foldoc \cite{Pajek2006}		&	$(12905,109092)$&  $10,047\pm30$ & $1,179,759\pm3,066$ & $ 	 259,935	 \pm 	 690$ \\		
\hline
 \end{tabular} 
\end{center}
\caption{Execution time to count isomorphic patterns of size 4 by processed graph using $\acc$, FANMOD and Kavosh.}
\label{temposk4}
\end{table}

\begin{table}[h!]
\begin{center}
\begin{tabular}{| l | r | r | r | r |}
\hline
\bf Grafos 								&\bf 	(n,m) 				&\bf $\acc$ (s)					& \bf FANMOD (s)		& \bf Kavosh (s)  \\
\hline
E.coli 	\cite{AlonDataset}				&	$(418, 519)$						&  $0.1\pm0.01$ 			& $5.6\pm0.02$ 	& $3.5\pm0.03$	\\
CSphd \cite{Pajek2006}				&	$(1882, 1740)$					& $0.1\pm0.002$ 			& $1.3\pm0.03$ 	& $0.76\pm0.01$	\\
Yeast~\cite{AlonDataset}				&	$(688, 1079)	$					&$0.3\pm0.003$ 			& $12.9\pm0.07$ 	& $7.5\pm0.1$\\
Roget 	\cite{Pajek2006}					&	$(1022, 5074)$ 					&  $2\pm0.005$ 			& $17.4\pm0.06$ 	& $11.3\pm0.07$ \\
Epa 	 \cite{Pajek2006}					&	$(4271, 8965)$					& $32.3\pm0.1$ 			& $1,696\pm6$ 		&  $ 	 1,052	 \pm 	 5.9$\\
{\bf California \cite{Pajek2006}}	&	$\mathbf{(6175, 16150)}$ & $\mathbf{76\pm0.4}$ & $\mathbf{2,376\pm12}$ & $\mathbf{1,532\pm 12}$ \\
Facebook \cite{Tore}						&	$(1899,20296)	$ 				&  $505\pm6$ 				& $14,378\pm10$ & $ 	 6,343	 \pm 	 25$\\
ODLIS \cite{Pajek2006}				&	$(2900, 18241)	$ 				&$835\pm2.6$ 				& $>13h$ 				&  $>13h$	\\
PairsFSG \cite{Pajek2006}			&	$(5018, 63608)$				&$2,334\pm2.3$ 			& $>13h$ 				&  $23,036.62 \pm 23.6$	\\
Airport	 \cite{Tore}							&	$(1574,28236)	$				&$3,058\pm14$ 			& $>13h$ 				&  $18,278.96 \pm 37.66$	\\
Foldoc \cite{Pajek2006}				&	$(12905,109092)$			& $2,965\pm3.2$ 			& $>13h$ 				&  $>13h$	\\
\hline
 \end{tabular} 
\end{center}
\caption{Execution time to count isomorphic patterns of size 5 by processed graph using $\acc$, FANMOD and Kavosh.}
\label{temposk5}
\end{table}

%
%

\end{document}